\newtheorem{theorem}{Theorem}
\newtheorem{definition}{Definition}
\newtheorem{corollary}{Corollary}
\newtheorem{example}{Example}
\newtheorem{assumption}{Assumption}
\newtheorem{lemma}{Lemma}
\newtheorem{remark}{Remark}
\newtheorem{proposition}{Proposition}
\newtheorem{problem}{Problem}
\newcommand{\bel}[1]{\begin{eqnarray}\label{#1}}
\newcommand{\be}{\begin{eqnarray}}
\newcommand{\ee}{\end{eqnarray}}
\newcommand{\bac}{\begin{array}{cccccccccccccccccccccccc}}
\newcommand{\ea}{\end{array}}
\newcommand{\bal}{\begin{array}{llllllllllllllllllllllll}}
\newcommand{\bpr}{\begin{proposition}}
\newcommand{\epr}{\end{proposition}}
\newcommand{\bt}{\begin{theorem}}
\newcommand{\et}{\end{theorem}}
\newcommand{\bd}{\begin{definition}\rm}
\newcommand{\ed}{\end{definition}}
\newcommand{\bc}{\begin{corollary}}
\newcommand{\ec}{\end{corollary}}
\newcommand{\bex}{\begin{example}\rm}
\newcommand{\eex}{\end{example}}
\newcommand{\bl}{\begin{lemma}\rm}
\newcommand{\el}{\end{lemma}}
\newcommand{\br}{\begin{remark}\rm}
\newcommand{\er}{\end{remark}}
\newcommand{\bas}{\begin{assumption}\rm}
\newcommand{\eas}{\end{assumption}}
\newcommand{\bp}{\begin{proof}}
\newcommand{\ep}{\end{proof}}
\def\exp{\mbox{exp}}
\def\max{\mbox{max}}
\begin{document}
\title{Real-time Estimation of DoS Duration and Frequency for Security Control}
\author{Yifan Sun, \IEEEmembership{Graduate Student Member,~IEEE}, Jianquan Lu, \IEEEmembership{Senior Member,~IEEE},\\ Daniel W. C. Ho, \IEEEmembership{Life Fellow,~IEEE} and Lulu Li
\thanks{This work was partially supported by the Natural Science Foundation of Jiangsu Province under Grant No. BK20240009, and the National Natural Science Foundation of China under Grant No. 62373105, and Jiangsu Provincial Scientific Research Center of Applied Mathematics under Grant BK20233002, and the Research Grants Council of the Hong Kong Special Administrative Region, China (CityU 11203521, 11213023).
\textit{(Corresponding Author: Jianquan Lu.)}}
\thanks{Yifan Sun is with the School of Mathematics, Southeast University, Nanjing 211189, Jiangsu, China (email: yfsun\_xdu@163.com).}
\thanks{Jianquan Lu is with the School of Mathematics, Southeast University, Nanjing 210096, Jiangsu, China (email: jqluma@seu.edu.cn).}
\thanks{Daniel W. C. Ho is with the Department of Mathematics, City University of Hong Kong, Hong Kong, China (email: madaniel@cityu.edu.hk).}
\thanks{Lulu Li is with the School of Mathematics, Hefei University of Technology, Hefei 230601, Anhui, China (email: lululima@hfut.edu.cn).}
}
\maketitle
\begin{abstract}


In this paper, we develop a new \textit{denial-of-service (DoS) estimator},
enabling defenders to identify duration and frequency parameters of any DoS attacker, except for three \textit{edge cases}, exclusively using real-time data.
The key advantage of the estimator lies in its capability to facilitate security control in a wide range of practical scenarios, even when the attacker's information is previously unknown. We demonstrate the advantage and application of our new estimator in the context of two classical control scenarios, namely consensus of multi-agent systems and impulsive stabilization of nonlinear systems, for illustration.

\end{abstract}
\begin{IEEEkeywords} DoS attack, real-time DoS estimation, security control, average consensus, impulsive stabilization.
 \end{IEEEkeywords}
\section{Introduction}
\textit{Denial-of-service} (DoS) attack is a form of cyber-attack that makes a machine or network resource unavailable to its intended users. In a networked control system, it usually affects the timeliness of communication by blocking the channel and inducing unexpected transmission failures, and thus invalidates most classical control methods and causes the system to run into a hazardous state \cite{Persis_DoS_tac2015,DoS_in_sensor_networks_Computer_2002,YANG2020109116}. Therefore, it is necessary to take security into account when designing controllers, or otherwise, the system will remain vulnerable to DoS attacks \cite{DoS_in_sensor_networks_Computer_2002}.

In order to model the security control problem, a comprehensive characterization of DoS attacks is essential. It is inadequate to assume a specific probabilistic distribution \cite{CHENG2024111470,AMIN_auto_2013} or a certain action mode (e.g., periodicity \cite{WZheng_periodicDoS_tac2020}) of the attack because a malicious attacker does not have to follow these assumptions and can exhibit quite irregular behavior. For better universality, De Persis and Tesi \cite{Persis_DoS_tac2015} proposed a DoS model that does not assume about the attacker's underlying strategy. This model is capable of capturing various types of DoS patterns, and therefore, has been widely used in follow-up research on anti-DoS control \cite{DePersis_DoS_nonlinear_scl2016,SFeng_DoS_SCL2017,HIshii_dos_quantization_tac2020,YE2023111170,PTesi_ETCDoS_TCNS2017}.

Under the framework \cite{Persis_DoS_tac2015}, \textit{DoS duration} and \textit{DoS frequency} are two core metrics characterizing DoS attacks. The former metric evaluates the percentage of time when a DoS attack is active, while the latter characterizes how frequently the attack is launched (both in an average sense \textit{over the entire timeline}).
The advantage of the framework \cite{Persis_DoS_tac2015} is to allow controller configurations based only on the duration and frequency parameters regardless of any other detailed attack information, thus leading to very concise control strategies\cite{WXu_distributedDoS_tcyb2020,HIshii_dos_quantization_tac2020,RKato_linearization_tac2022,SFeng_DoS_Quantization_2022_auto,DZhang_outputregulation_tac2023}. For example, if a duration- \& frequency-based controller setting is effective for the periodic DoS attack over $[0,1)$, $[2,3)$, $[4,5)$, $\cdots$, then it is also effective for an aperiodic attack over $[1,2)$, $[3,4-\ln{2})$, $[5,6+\ln{\frac{3}{2}})$, $\cdots$, $[2n+1,2n+2+ (-1)^n \ln{\frac{n+1}{n}})$, $\cdots$. More precisely, both these attacks have the same duration and frequency parameters, that is, affecting around $50\%$ of the time and being launched every two seconds.

Unfortunately, the above advantage of the framework \cite{Persis_DoS_tac2015} can hardly be implemented in most practical applications. It is because the framework relies on \textit{directly assumed} values of the duration and frequency parameters whose correctness cannot be guaranteed in real cases where the defender usually has little adversary information before the encounter. To validate the framework \cite{Persis_DoS_tac2015} in these cases, the defender should estimate the parameters of interest based only on real-time DoS data, which can be collected from \textit{past} actions of the attacker.


Motivated by the discussion above, the focus of this paper lies in achieving a dependable estimation of DoS duration and frequency parameters utilizing solely real-time attack data.
As the theoretical support of our estimation, we reveal a fundamental vulnerability inherent in DoS attackers: they \textit{unavoidably disclose information about the duration and frequency through their actions}. This inherent weakness inspires the development of our novel \textit{DoS estimator}, which can identify the desired parameters by examining the past attack durations and the historical count of DoS launches.


Despite some involved analyses ensuring theoretical validity, the core idea of our estimation is remarkably intuitive. As defenders, we consider slightly higher DoS duration and frequency than the worst previous scenario, anticipating that such estimation will achieve reliability \textit{within a finite time}. With our \textit{DoS estimator} designed following this idea, a satisfactory performance can be achieved against virtually all types of DoS attackers except for three extreme \textit{edge cases}. These atypical scenarios correspond to three categories of extremely powerful attackers, which are rarely encountered in real network environments. To further elaborate on this point, we will provide a new and concise definition in Section \ref{sec:DoS_characterization}, justifying it based on the physical meanings of the three cases.


To the authors' best knowledge, few existing works have considered the present topic despite its undoubted importance. The reference closest to this paper might be \cite{SFeng_DoS_Quantization_2022_auto}, in which available transmission bits are adaptively allocated based on real-time DoS strength. Yet, how to estimate the attacker's duration \& frequency parameters remains unresolved in that article. The work merely provided a precise solution to the allocation problem of channel resources, but the design of the sampling period there still relies on presupposed attacker information. Another relevant reference is \cite{Active_security_tac2021}, in which the DoS duration or frequency is not required to be known by the defender. However, the controller design is based on the maximum duration of the DoS attack, which still needs to be assumed before the design time and, therefore, does not change the essence of the problem. In \cite{real-time_detection_and_estimation_DoS}, a real-time scheme is proposed to detect and estimate DoS attacks that lead to a \textit{differentiable} transmission delay. Most real-world cases cannot be covered because DoS-induced delays are usually discontinuous. Li \textit{et al.} proposed another viewpoint in \cite{LShi_jamming_game_tac2015}, formulating a game-theoretic framework to describe the interaction between attackers and defenders. Although the reliance on duration \& frequency parameters can be avoided, such a framework requires that both sides know the energy constraints of their adversaries before the game begins.

The main contributions of this paper are in three different aspects.
Firstly, we develop a new \textit{DoS estimator} capable of addressing any DoS attacker apart from three \textit{edge cases}, identifying the duration and frequency parameters for defensive purposes solely based on real-time data.
This novel development has the potential to significantly broaden the applicability of numerous existing anti-DoS control strategies, specifically extending their reach from managing rare, previously identified attackers to addressing all but a few extreme ones.


The second goal of this paper is to provide clear guidance on exploiting the \textit{DoS estimator} for security controller design. To achieve this, we examine two specific control scenarios, namely the consensus of multi-agent systems (MASs) and impulsive stabilization of nonlinear systems, for illustration. Utilizing the new estimator, we present concrete design schemes that exhibit satisfactory control performances, even against previously unknown attackers, in both cases.
Based on our estimator, the design reflects the feasibility of expanding the application scope of previous control strategies.

Finally, the design presented in this paper offers several parametric tuning options to balance security against cost-efficiency.
These options provide flexibility for a conservative defender to allocate more resources for faster estimation (resulting in enhanced security), or for an aggressive defender to prioritize energy-saving over the speed of estimation. The choice between these options depends on how skeptical the defender is about the attacker's behavior.
The trade-off therefore reveals an interesting interplay between the offensive and defensive sides during the estimation process, and it offers a possibility for controller configuration to meet diverse requirements.

Standard notations will be used throughout this paper.
We use $\mathbb{N}$ (resp., $\mathbb{N}_+$) to represent the set of all non-negative integers (resp., positive integers), and use $\mathbb{R}^n$ (resp., $\mathbb{R}$) to represent the $n$-dimensional (resp., $1$-dimensional) Euclidean vector space.
We use $\mathbb{R}_{\geqslant 0}$ to represent the set of non-negative real numbers. For a set $\Omega$, we use $\operatorname{card}(\Omega)$ to represent its cardinality. In addition, if $\Omega \subseteq \mathbb{R}$, then we use $\lvert \Omega \rvert$ to represent its Lebesgue measure \cite{book_real_analysis}, use $\inf \Omega$ to represent its infimum, and use $\sup \Omega$ to represent its supremum \cite{Rudin_MA_mcgraw-hill1976}.
We use $\limsup_{n \rightarrow \infty}{a_n} \coloneqq \lim_{n \rightarrow \infty} \left( \sup_{m \geqslant n} a_m\right)$ to represent the limit superior \cite{Rudin_MA_mcgraw-hill1976} of a sequence $\{a_n\}_{n \in \mathbb{N}_+}$.
For a function $f(t)$, we use $\limsup_{t \rightarrow +\infty}{f(t)} \coloneqq \lim_{t \rightarrow +\infty} \left( \sup_{\tau \geqslant t} f(\tau)\right)$ to represent its limit superior, use $\dot{f}(t)$ to represent its derivative with respect to $t$, and use $D^+ f (t)$ to represent its upper-right Dini derivative \cite{TYang_impulsive-control_TAC1999}. We say that $f: \mathbb{R}_{\geqslant 0} \mapsto \mathbb{R}_{\geqslant 0} \in \mathcal{K}$, if $f$ is continuous, $f(0) = 0$ and $f(s)$ is strictly increasing in $s$. For a vector $x$, we use $x^T$ to represent its transposition and use $\lvert x \rvert$ to represent its Euclidean norm. We use $\mathbf{1}_N$ to represent an $N$-dimensional vector $(1,1,\cdots,1)^T$.


\section{DoS estimation} \label{sec:DoS_characterization}
\subsection{Problem formulation}
Since the main purpose of this paper is to give reliable estimation of DoS duration and frequency parameters, firstly, we need to review their definitions proposed by \cite{Persis_DoS_tac2015}.

Let $\{h_n\}_{n \in \mathbb{N}_+}$ represent the sequence of instants when the DoS attack switches from \textit{off} to \textit{on}, where $h_1 \geqslant 0$. Let
\begin{align} \label{align:H_n}
    H_n \coloneqq \{h_n\} \cup [h_n, h_n + \tau_n)
\end{align}
represent the $n$-th DoS time interval of a length $\tau_n$, over which the attack is \textit{on}, and all transmission attempts are denied. Notably, the $n$-th interval $H_n$ reduces to a single point $\{h_n\}$ if $\tau_n = 0$. Without loss of generality, we assume that $h_{n+1} > h_n + \tau_n$ for any $n \in \mathbb{N}_+$ so that $H_i \cap H_j = \emptyset$ for any $i \neq j$. We use the notation
\begin{align} \label{align:Xi}
    \Xi (\tau,s) \coloneqq \bigcup_{n\in \mathbb{N}_+} H_n \cap [\tau,s]
\end{align}
to denote the set of instants subject to DoS attack on $[\tau,s]$ ($\tau \leqslant s$), and use
\begin{align} \label{align:Theta}
    \Theta (\tau,s) \coloneqq [\tau,s] \setminus \Xi (\tau,s)
\end{align}
to denote the complement of $\Xi (\tau,s)$, over which the attack is \textit{off} and transmission is allowed. In this paper, we use
$\xi \coloneqq \{H_n\}_{n \in \mathbb{N}_+}$
to denote the sequence of all the DoS time intervals, and we call $\xi$ a \textit{DoS sequence} for brevity. For notation consistency, we specify that $h_{m+1} = h_{m+1} + \tau_{m+1} = +\infty$ if there are only $m$ elements in $\xi$, $m \in \mathbb{N}$. Finally, we denote by
\begin{align} \label{align:n_xi}
    n_{\xi} (\tau,s) \coloneqq \operatorname{card}\left( \{h_n\}_{n \in \mathbb{N}_+} \cap [\tau,s] \right)
\end{align}
the number of DoS {\textit{off}-to-\textit{on}} transitions on $[\tau,s]$. The following example is provided to help understand the notations in (\ref{align:H_n})--(\ref{align:n_xi}).

\begin{example}
    Let us consider $h_n = n$ and $\tau_n = 0.5$ for any $n \in \mathbb{N}_+$. Then, $H_n = \left[n, n+ 0.5\right)$ is the $n$-th DoS time interval, and $\xi$ is an infinite sequence of time intervals, namely
    $$\xi = \left\{ \left[1, 1.5\right), \left[2, 2.5\right), \left[3, 3.5\right), \cdots \right\}.$$
    In addition, if $\tau = 2.3$ and $s = 5.6$, then it follows from (\ref{align:Xi}) that
    \begin{align*}
        \Xi(\tau,s) &= \big(\left[1, 1.5\right) \cup \left[2, 2.5\right) \cup \left[3, 3.5\right) \cup \cdots \big) \cap [2.3, 5.6] \\
        &=[2.3, 2.5) \cup [3, 3.5) \cup [4,4.5) \cup [5,5.5),
    \end{align*}
    from (\ref{align:Theta}) that
    \begin{align*}
        \Theta(\tau,s) &= [2.3, 5.6] \setminus \Xi(\tau,s) \\
        &= [2.5, 3) \cup [3.5, 4) \cup [4.5,5) \cup [5.5, 5.6),
    \end{align*}
    and from (\ref{align:n_xi}) that
    $$n_{\xi} (\tau,s) = \operatorname{card} \{3,4,5\} = 3.$$
\end{example}

With the preparations above, the duration and frequency parameters of a DoS sequence $\xi$ can be defined as follows.
\begin{definition} \label{def:duration}
    \textit{(\cite{Persis_DoS_tac2015} DoS duration)} For a DoS sequence $\xi$ and a constant $B_d \in [0, 1]$, if there exists a constant $0 < \kappa (B_d) < +\infty$ such that
        \begin{align*}
            \lvert \Xi (0,t) \rvert \leqslant \kappa + B_d \, t, \quad \forall \, t \geqslant 0,
        \end{align*}
    then we say that $B_d$ is a \textit{duration-bound} of $\xi$. \hfill $\square$
\end{definition}


\begin{definition} \label{def:frequency}
    \textit{(\cite{Persis_DoS_tac2015} DoS frequency)} For a DoS sequence $\xi$ and a constant $B_f \in [0, +\infty)$, if there exists an integer $0 < \Lambda (B_f) < +\infty$ such that
        \begin{align*}
            n_{\xi} (0,t) \leqslant \Lambda + B_f \, t, \quad \forall \, t \geqslant 0,
        \end{align*}
    then we say that $B_f$ is a \textit{frequency-bound} of $\xi$. If, otherwise, no such constant $B_f$ exists, then the \textit{frequency-bound} of $\xi$ is defined to be $B_f = +\infty$. \hfill $\square$
\end{definition}

The \textit{duration-bound} characterizes the ratio of time over which the DoS attack is active, and the \textit{frequency-bound} characterizes how frequently the attack is launched. Generally, the values of these two metrics do not have to affect each other, and therefore, we can deal with $B_d$ and $B_f$, respectively, in order to simplify analysis.

According to \textit{Definitions \ref{def:duration}} and \textit{\ref{def:frequency}}, $B_d$ and $B_f$ are related to the attacker's action on $[0,+\infty)$. Their values, however, are directly assumed by the defender when $t = 0$ in most previous works \cite{Persis_DoS_tac2015,DePersis_DoS_nonlinear_scl2016,SFeng_MAS_TCNS2022,PTesi_ETCDoS_TCNS2017,SFeng_DoS_SCL2017,WXu_distributedDoS_tcyb2020,HIshii_dos_quantization_tac2020,RKato_linearization_tac2022,SFeng_DoS_Quantization_2022_auto,DZhang_outputregulation_tac2023}.
Such presupposed values are reliable only if the defender has accurate \textit{future} information about the attacker, but not for most real cases where nothing about $[t,+\infty)$ is known at time $t$. In order to manage the latter cases, $B_d$ and $B_f$ should be estimated in real-time based only on the attacker's \textit{past} actions over $[0,t]$.

Inspired by the discussion above, the first task of a defender is to give a real-time estimation of $B_d$ and $B_f$ merely utilizing past knowledge about the attacker. For the defense effectiveness, the estimation is required to become reliable within finite time. If we denote by $\hat{B}_d (t)$ and $\hat{B}_f (t)$, respectively, the estimated value of $B_d$ and $B_f$ at time $t$, and by
$$\mathscr{K}_a (0,t) \coloneqq \Big\{ \big(s, \lvert \Xi (0,s) \rvert, n_{\xi} (0,s) \big) \Big| s \in [0,t] \Big\} $$
the available knowledge about the attacker on $[0,t]$, then, the task above can be written more formally as follows.
\begin{problem} \label{prob:estimator}
    \textit{(Real-time DoS estimation)} Design a \textit{DoS estimator} as
    \begin{align*}
        \hat{B}_d (t) = \hat{B}_d \left( \mathscr{K}_a (0,t) \right),\quad \hat{B}_f (t) = \hat{B}_f \left( \mathscr{K}_a (0,t) \right),
    \end{align*}
    such that there exists a finite $T$ satisfying
    \begin{itemize}
        \item[a)] $\hat{B}_d (t)$ is a \textit{duration-bound} of $\xi$ for any $t\geqslant T$.
        \item[b)] $\hat{B}_f (t)$ is a \textit{frequency-bound} of $\xi$ for any $t\geqslant T$.
    \end{itemize}
\end{problem}

\begin{remark}
    As discussed in \cite{PTesi_ETCDoS_TCNS2017}, the exponential convergence rate of a control system is typically affected by the \textit{duration-} \& \textit{frequency-bounds}, while the $\mathcal{L}_\infty$-gain (from noise to the output) also depends on $\kappa$ and $\Lambda$ in \textit{Definitions \ref{def:duration} }and\textit{ \ref{def:frequency}}. Our focus here is on the asymptotic (or exponential) convergence of noiseless systems, so estimating only $B_d$ and $B_f$ suffices. In the context of $\mathcal{L}_\infty$-gain problems, estimating $\kappa$ and $\Lambda$ would also be necessary, which is, however, out of the scope of this paper and is not investigated.
\end{remark}

\subsection{Solution to \textit{Problem \ref{prob:estimator}}: The \textit{DoS estimator}}
This subsection aims at solving \textit{Problem \ref{prob:estimator}}. To this end, we first need to find connections between the attacker's past knowledge $\mathscr{K}_a (0,t)$ and its \textit{duration-bound} $B_d$ \& \textit{frequency-bound} $B_f$.
This is not easy since $B_d$ and $B_f$ are not necessarily unique. In fact, they merely provide a loose estimation of how strong the attacker can be. Defensive strategies based on such a loose estimation could be inefficient because the defender must prepare for the worst scenario, even if it never actually happens.

To make the defense more efficient, we have to investigate the sharp bounds of $B_d$ and $B_f$.
For this purpose, let us consider the set composed of all \textit{duration-bounds} of $\xi$, namely,
\begin{align*}
\mathcal{D} (\xi) \coloneqq \left\{B_d \in [0,1] \mid B_d\text{ is a \textit{duration-bound} of $\xi$} \right\}.
\end{align*}
By \textit{Definition \ref{def:duration}}, $\mathcal{D} (\xi)$ is nonempty because we always have $1 \in \mathcal{D} (\xi)$. Hence, the infimum of $\mathcal{D} (\xi)$ is well-defined and satisfies $\inf \mathcal{D} (\xi) \leqslant 1$.
Similarly, for the \textit{frequency-bound}, we can define
\begin{align*}
\mathcal{F} (\xi) \coloneqq \left\{B_f \in [0,+\infty] \mid B_f\text{ is a \textit{frequency-bound} of $\xi$} \right\},
\end{align*}
and we have 
$\inf \mathcal{F} (\xi) \leqslant +\infty$.
Although $B_d$ (resp., $B_f$) is not unique, there must be only one infimum of $\mathcal{D} (\xi)$ (resp., $\mathcal{F} (\xi)$). In addition, we have the following proposition which demonstrates that $\inf \mathcal{D} (\xi)$ and $\inf \mathcal{F} (\xi)$ are exactly the desired sharp bounds of $B_d$ and $B_f$, respectively.
\begin{proposition} \label{prop:sharp_bounds}
Given any DoS sequence $\xi$, we have
\begin{align} \label{align:Dxi}
    {\mathcal{D}^{\circ} (\xi) = (\inf \mathcal{D} (\xi),1)}
\end{align}
and
\begin{align} \label{align:Fxi}
    {\mathcal{F}^{\circ} (\xi) = (\inf \mathcal{F} (\xi),+\infty),}
\end{align}
where $\Omega^{\circ}$ represents the interior of a set $\Omega$.
\end{proposition}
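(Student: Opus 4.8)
The plan is to observe that each of $\mathcal{D}(\xi)$ and $\mathcal{F}(\xi)$ is order-theoretically \emph{upward closed}, so that it must be an interval whose right endpoint is fixed ($1$ for the duration set, $+\infty$ for the frequency set) and whose left endpoint is the corresponding infimum. The interior of such an interval is then forced to be exactly the open interval claimed in \eqref{align:Dxi}--\eqref{align:Fxi}, \emph{regardless} of whether the infimum itself belongs to the set. Writing $\underline{B}_d \coloneqq \inf \mathcal{D}(\xi)$ and $\underline{B}_f \coloneqq \inf \mathcal{F}(\xi)$, the concrete target is to sandwich each set between a half-open and a closed interval and then invoke a one-line topological fact.

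First I would establish upward closure for the duration set: if $B_d \in \mathcal{D}(\xi)$ with witness $\kappa > 0$, i.e.\ $\lvert \Xi(0,t) \rvert \leqslant \kappa + B_d\, t$ for all $t \geqslant 0$, then for any $B_d' \in [B_d, 1]$ the same $\kappa$ yields $\lvert \Xi(0,t) \rvert \leqslant \kappa + B_d'\, t$ because $t \geqslant 0$, so $B_d' \in \mathcal{D}(\xi)$. Since $1 \in \mathcal{D}(\xi)$ (already noted in the text), the set is nonempty and $\underline{B}_d$ is well defined. By the definition of the infimum, every $b \in (\underline{B}_d, 1]$ strictly exceeds some element of $\mathcal{D}(\xi)$, whence upward closure places $b \in \mathcal{D}(\xi)$; conversely every element of $\mathcal{D}(\xi)$ lies in $[\underline{B}_d, 1]$ by the infimum property together with the constraint $B_d \in [0,1]$ of \emph{Definition \ref{def:duration}}. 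This yields the sandwich $(\underline{B}_d, 1] \subseteq \mathcal{D}(\xi) \subseteq [\underline{B}_d, 1]$. Finally, any $S \subseteq \mathbb{R}$ with $(\underline{B}_d, 1] \subseteq S \subseteq [\underline{B}_d, 1]$ has interior exactly $(\underline{B}_d, 1)$: the open interval lies in $S$ and hence in $S^{\circ}$, while neither endpoint is interior, since every neighbourhood of $\underline{B}_d$ (resp.\ of $1$) contains points below $\underline{B}_d$ (resp.\ above $1$) that lie outside $[\underline{B}_d, 1] \supseteq S$. This is precisely \eqref{align:Dxi}.

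The frequency case proceeds identically: if $B_f \in \mathcal{F}(\xi)$ is finite with witness $\Lambda$, then any finite $B_f' \geqslant B_f$ inherits the bound $n_{\xi}(0,t) \leqslant \Lambda + B_f'\, t$ with the same $\Lambda$, and $+\infty \in \mathcal{F}(\xi)$ holds by the convention in \emph{Definition \ref{def:frequency}}, so the same reasoning gives $(\underline{B}_f, +\infty] \subseteq \mathcal{F}(\xi) \subseteq [\underline{B}_f, +\infty]$. The single extra point of care is the endpoint $+\infty$: the interior must be computed in $\mathbb{R}$ (not the extended line), so the symbol $+\infty$ is not counted as a real interior point and the left endpoint is discarded exactly as before, leaving $(\underline{B}_f, +\infty)$ and establishing \eqref{align:Fxi}.

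I expect the only genuinely delicate point to be the recognition that the argument never needs to decide whether the infimum is attained, i.e.\ whether $\mathcal{D}(\xi)$ (resp.\ $\mathcal{F}(\xi)$) is closed or merely half-open on the left. Settling attainment would require a separate and more intricate limiting argument, but since the interior discards that single boundary point either way, the sandwich inclusion is all that is needed; making this reduction explicit is the crux of keeping the proof short. A secondary, purely cosmetic concern is the bookkeeping for the extended-real endpoint $+\infty$ in the frequency set, handled by fixing the ambient space as $\mathbb{R}$.
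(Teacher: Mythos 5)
Your proof is correct and takes essentially the same route as the paper's: both arguments rest on the upward closure of $\mathcal{D}(\xi)$ and $\mathcal{F}(\xi)$ (which the paper uses implicitly in the step asserting $\mathcal{D}(\xi) \subseteq (B_d^\ast,1]$) combined with the definition of the infimum, together with the trivial containment $\mathcal{D}(\xi) \subseteq [\inf \mathcal{D}(\xi),1]$. The difference is purely presentational: the paper argues by contradiction, while you argue directly via the sandwich $(\inf \mathcal{D}(\xi),1] \subseteq \mathcal{D}(\xi) \subseteq [\inf \mathcal{D}(\xi),1]$ and spell out the monotonicity lemma and the final interior computation that the paper leaves implicit.
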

\begin{proof}
    We only prove (\ref{align:Dxi}) since the proof of (\ref{align:Fxi}) is analogous. By definition, there must hold $\mathcal{D} (\xi) \subseteq [\inf \mathcal{D} (\xi),1]$. If we assume by contradiction that (\ref{align:Dxi}) is false, there would exist a $B_d^\ast \in (\inf \mathcal{D} (\xi),1)$ that is not a \textit{duration-bound} of $\xi$. This, however, implies that
    $$\mathcal{D} (\xi) \subseteq (B_d^\ast,1],$$
    which contradicts against the definition of infimum. Therefore, our initial assumption must be false, meaning that (\ref{align:Dxi}) is true. \hfill $\square$
\end{proof}



With the preparations above, finding the connections between $\mathscr{K}_a (0,t)$ and $B_d$ \& $B_f$ is equivalent to finding the ones between $\mathscr{K}_a (0,t)$ and $\inf \mathcal{D} (\xi)$ \& $\inf \mathcal{F} (\xi)$. The following result can establish such connections.


\begin{lemma} \label{lemma:function_limit}
    Given any DoS sequence $\xi$, we have
    \begin{align} \label{align:duration_function_limit}
        \limsup_{t \rightarrow + \infty}{\frac{\lvert \Xi (0,t) \rvert}{t}} = \inf \mathcal{D} (\xi)
    \end{align}
    and
    \begin{align} \label{align:frequency_function_limit}
        \limsup_{t \rightarrow + \infty}{\frac{n_{\xi} (0,t)}{t}} = \inf \mathcal{F} (\xi).
    \end{align}
\end{lemma}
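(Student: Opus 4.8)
The plan is to prove each equality by a two-sided-inequality argument, unfolding the definitions of \emph{duration-bound}/\emph{frequency-bound} and of the limit superior. I will carry out the \emph{duration} equality \eqref{align:duration_function_limit} in full and then adapt the argument to the \emph{frequency} equality \eqref{align:frequency_function_limit}, where one extra case must be handled. Throughout, write $L := \limsup_{t \to +\infty} \lvert \Xi(0,t) \rvert / t$ and $I := \inf \mathcal{D}(\xi)$; the goal is $L = I$. The easy direction $L \leqslant I$ comes directly from Definition \ref{def:duration}: for any $B_d \in \mathcal{D}(\xi)$ with its associated constant $\kappa$, we have $\lvert \Xi(0,t) \rvert \leqslant \kappa + B_d\, t$, so dividing by $t$ and letting $t \to +\infty$ annihilates the $\kappa/t$ term and yields $L \leqslant B_d$. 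As this holds for every $B_d \in \mathcal{D}(\xi)$, the number $L$ is a lower bound of $\mathcal{D}(\xi)$, whence $L \leqslant \inf \mathcal{D}(\xi) = I$.

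The reverse direction $I \leqslant L$ is the substantive part, and I would obtain it by showing that $L + \epsilon$ is itself a \emph{duration-bound} for every sufficiently small $\epsilon > 0$; this forces $I \leqslant L + \epsilon$ and hence $I \leqslant L$ as $\epsilon \to 0$. The crux is to convert the asymptotic content of the $\limsup$ into a uniform additive bound. By definition of limit superior, for each $\epsilon > 0$ there is a threshold $T_\epsilon \geqslant 0$ such that $\lvert \Xi(0,t) \rvert / t < L + \epsilon$, i.e. $\lvert \Xi(0,t) \rvert < (L+\epsilon)\, t$, for all $t \geqslant T_\epsilon$. On the remaining compact segment $[0, T_\epsilon)$ I use the trivial bound $\lvert \Xi(0,t) \rvert \leqslant t \leqslant T_\epsilon$, so choosing $\kappa$ to be any positive constant no smaller than $T_\epsilon$ gives $\lvert \Xi(0,t) \rvert \leqslant \kappa + (L+\epsilon)\, t$ on all of $[0,+\infty)$, exhibiting $L + \epsilon$ as a \emph{duration-bound} with a valid finite positive $\kappa$. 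A small bookkeeping point settles the admissibility constraint $B_d \in [0,1]$: since $\lvert \Xi(0,t) \rvert \leqslant t$ forces $L \leqslant 1$, either $L < 1$ and $L + \epsilon \leqslant 1$ for small $\epsilon$, or $L = 1$ and $I \leqslant 1 = L$ holds automatically.

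For the \emph{frequency} equality the same two steps apply verbatim with $n_\xi(0,t)$ replacing $\lvert \Xi(0,t) \rvert$, modulo two features of Definition \ref{def:frequency}: the constant $\Lambda$ must be a positive integer, and the value $+\infty$ is admissible. I would split on whether $\limsup_{t \to +\infty} n_\xi(0,t)/t$ is finite. If it is finite, the reverse construction goes through once I take $\Lambda := \max\{ n_\xi(0, T_\epsilon),\, 1\}$, which is a finite positive integer precisely because a finite $\limsup$ rules out infinitely many \emph{off}-to-\emph{on} transitions on any bounded interval (so $n_\xi(0, T_\epsilon) < +\infty$). If the $\limsup$ equals $+\infty$, I argue by contradiction that no finite \emph{frequency-bound} can exist, since any finite $B_f$ with its $\Lambda$ would force $n_\xi(0,t)/t \leqslant \Lambda/t + B_f$ to be bounded; hence $\inf \mathcal{F}(\xi) = +\infty$ and both sides of \eqref{align:frequency_function_limit} coincide.

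The main obstacle I anticipate is exactly this reverse direction: turning the purely asymptotic information carried by the $\limsup$ into an inequality uniform over all $t \geqslant 0$, which requires absorbing the behavior on the initial compact interval into the constant $\kappa$ (respectively the integer $\Lambda$) and verifying the admissibility constraints ($B_d \leqslant 1$, $\Lambda \in \mathbb{N}_+$, and the $+\infty$ case in the frequency setting). These are careful case checks rather than any deep analytic difficulty, so I expect the proof to be short once the $\limsup$ is correctly unfolded.
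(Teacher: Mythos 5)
Your proposal is correct and takes essentially the same route as the paper's proof: both directions rest on dividing the defining inequality by $t$ for the upper bound and, for the lower bound, absorbing the initial compact segment $[0,T_\epsilon]$ into the additive constant ($\kappa$, resp.\ $\Lambda$) to promote the asymptotic $\limsup$ information to a uniform bound, with the same separate treatment of the $\inf \mathcal{F}(\xi) = +\infty$ case. The only differences are presentational — you exhibit $L+\epsilon$ directly as a \emph{duration-}/\emph{frequency-bound} where the paper runs the identical construction as a proof by contradiction, and your admissibility checks ($L \leqslant 1$, integer $\Lambda$, finiteness of $n_\xi(0,T_\epsilon)$) are if anything slightly more explicit than the paper's.
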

\begin{proof}
    See Appendix \ref{appA}. \hfill $\square$
\end{proof}

\begin{remark}
    As revealed by \textit{Lemma \ref{lemma:function_limit}}, the values of $\inf \mathcal{D}(\xi)$ \& $\inf \mathcal{F} (\xi)$ have explicit connections with certain behaviors of the attacker. More specifically, in the sense of limit superior, the ratio of time under attack converges to $\inf \mathcal{D} (\xi)$, and the frequency of launches converges to $\inf \mathcal{F} (\xi)$. The results in (\ref{align:duration_function_limit}) and (\ref{align:frequency_function_limit}) provide the possibility of estimating the \textit{duration} \& \textit{frequency bounds}, respectively, based on real-time DoS data $\lvert \Xi (0,t) \rvert$ and $n_{\xi} (0,t)$, and thus are crucial for addressing \textit{Problem \ref{prob:estimator}}.
\end{remark}


Based on \textit{Lemma \ref{lemma:function_limit}}, we point out three \textit{edge cases} of $\xi$ which are beyond our ability before providing a solution to \textit{Problem \ref{prob:estimator}}. These \textit{edge cases} correspond to three types of extremely powerful DoS attackers that rarely exist in real networks. More formally, we have the following definition of these cases.
\begin{definition} \label{def:edge_cases}
    In this paper, a DoS sequence $\xi$ is called an \textit{edge case}, if it satisfies any of the following.
    \begin{itemize}
        \item[i)] $\inf \mathcal{D} (\xi) = 1$.
        \item[ii)] $\inf \mathcal{F} (\xi) = +\infty$.
        \item[iii)] For any $\Gamma \in (0,+\infty)$, there exists an $n \in \mathbb{N}_+$ such that $\tau_{n} > \Gamma.$ Here, $\tau_n$ is the length of the $n$-th DoS interval as defined in (\ref{align:H_n}).
    \hfill $\square$
    \end{itemize}
\end{definition}

    In this paper, the defender becomes powerless under the \textit{edge cases} in \textit{Definition \ref{def:edge_cases}}. Nevertheless, this is acceptable
    because, from an attacker's perspective, being an \textit{edge case} means being extremely powerful and quite demanding. We can use \textit{Lemma \ref{lemma:function_limit}} to elaborate on this point. For i) in the definition, it corresponds to the total duration of attacks and means that $\xi$ should attack and corrupt almost the entire interval $[0,t]$ even when $t \rightarrow +\infty$ according to (\ref{align:duration_function_limit}). For ii), it corresponds to the frequency of attacks, meaning that the attack should be launched at an infinitely high rate, according to (\ref{align:frequency_function_limit}). For iii), it means that the duration $\tau_n$ of a single attack cannot be upper-bounded by any finite constant, indicating the unlimited storage capacity of the attacker.
    Since it is rare to encounter such strong opponents in practice, the inability to deal with them has little impact on the universality of our solution. Hence, the following assumption is imposed on $\xi$ hereafter.

    \begin{assumption} \label{assumption:not_edge_case}
        The DoS sequence $\xi$ is not an \textit{edge case}.
    \end{assumption}
    
Now, we proceed to address \textit{Problem \ref{prob:estimator}}.
Let us design the \textit{DoS estimator} as
    \begin{align} \label{align:hatT_A(n)}
        \hat{B}_d (t) \coloneqq 
        \left\{
        \begin{array}{lcl}
            \epsilon_0,\qquad t \in [0,h_\ell+\tau_\ell), \\
            \max_{\ell\leqslant i \leqslant n} \left\{ \epsilon_0,\theta B_d (i) + (1-\theta) \right\},\\
            \qquad \quad t \in [h_n + \tau_n, h_{n+1} + \tau_{n+1})
        \end{array}
        \right.
    \end{align}
    and
    \begin{align} \label{align:hattau_A(n)}
        \hat{B}_f (t) \coloneqq
        \left\{
        \begin{array}{ll}
            \epsilon_0, & t \in [0,h_\ell), \\
            \max_{\ell \leqslant i \leqslant n} \left\{\epsilon_0, \frac{B_f (i)}{\theta} \right\}, & t \in [h_n, h_{n+1}).
        \end{array}
        \right.
    \end{align}
    Here, $2 \leqslant \ell \in \mathbb{N}_+$, $0< \epsilon_0 < 1$ and $0< \theta < 1$ are tunable parameters. They can be flexibly chosen and are independent of $\xi$. In addition, $B_d (i) \coloneqq \frac{\lvert \Xi(0,h_i + \tau_i) \rvert}{h_i + \tau_i}$, and $B_f (i) \coloneqq \frac{n_\xi (0,h_i)}{h_i} = \frac{i}{h_i}$, where the notations $h_n$, $\Xi (\tau,s)$ and $n_{\xi} (\tau,s)$ are defined, respectively, in (\ref{align:H_n}), (\ref{align:Xi}) and (\ref{align:n_xi}). For any $t\geqslant 0$, it is clear that $\hat{B}_d (t)$ and $\hat{B}_f (t)$ generated by (\ref{align:hatT_A(n)})--(\ref{align:hattau_A(n)}) only rely on past information over $[0,t]$.

    The following result shows that \textit{DoS estimator} (\ref{align:hatT_A(n)})--(\ref{align:hattau_A(n)}) is a satisfactory solution to \textit{Problem \ref{prob:estimator}}.

\begin{theorem} \label{theorem:bound_reconstruction}
    Let the estimation $\hat{B}_d (t)$ and $\hat{B}_f (t)$ be generated by \textit{DoS estimator} (\ref{align:hatT_A(n)})--(\ref{align:hattau_A(n)}). Then under \textit{Assumption \ref{assumption:not_edge_case}}, there must exist a finite $T \geqslant 0$, such that
    $\hat{B}_d (t) \in \mathcal{D}(\xi)$ and $\hat{B}_f (t) \in \mathcal{F}(\xi)$ hold  for any $t\geqslant T$.
\end{theorem}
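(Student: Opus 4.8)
The plan is to reduce both set-membership claims to scalar inequalities and then verify those inequalities for all large $t$. Write $d^\star \coloneqq \inf \mathcal{D}(\xi)$ and $f^\star \coloneqq \inf \mathcal{F}(\xi)$. Under Assumption \ref{assumption:not_edge_case}, cases i) and ii) of Definition \ref{def:edge_cases} fail, so $d^\star < 1$ and $f^\star < +\infty$. Proposition \ref{prop:sharp_bounds} then gives $\mathcal{D}^{\circ}(\xi) = (d^\star,1)$ and $\mathcal{F}^{\circ}(\xi) = (f^\star,+\infty)$, and combining these with the fact $1 \in \mathcal{D}(\xi)$ yields the inclusions $(d^\star,1] \subseteq \mathcal{D}(\xi)$ and $(f^\star,+\infty) \subseteq \mathcal{F}(\xi)$. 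Hence it suffices to produce a finite $T$ with $d^\star < \hat{B}_d(t) \le 1$ and $f^\star < \hat{B}_f(t) < +\infty$ for all $t \ge T$. The bound $\hat{B}_d(t) \le 1$ is automatic, since $B_d(i) = |\Xi(0,h_i+\tau_i)|/(h_i+\tau_i) \le 1$ forces $\theta B_d(i) + (1-\theta) \le 1$ while $\epsilon_0 < 1$; and $\hat{B}_f(t) < +\infty$ is automatic, being a maximum of finitely many finite numbers. So the theorem reduces to two strict lower bounds.

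The core step is to link the sampled quantities $B_d(i)$ and $B_f(i)$ to the sharp bounds via Lemma \ref{lemma:function_limit}. I would observe that $g(t) \coloneqq |\Xi(0,t)|/t$ is non-decreasing on each attack interval $H_n$ (there the numerator grows at unit rate and $|\Xi(0,t)| \le t$) and non-increasing on each attack-free interval (constant numerator, growing denominator). Thus the peaks of $g$ occur exactly at the right endpoints $h_i + \tau_i$, where its value is $B_d(i)$, so over each cycle $[h_i+\tau_i, h_{i+1}+\tau_{i+1})$ one has $g(t) \le \max\{B_d(i),B_d(i+1)\}$; passing to tails gives $\limsup_{i\to\infty} B_d(i) = \limsup_{t\to+\infty} g(t) = d^\star$ by \eqref{align:duration_function_limit}. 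The same reasoning applied to $\nu(t) \coloneqq n_\xi(0,t)/t$, which jumps up at each $h_n$ and decays between jumps, gives $\limsup_{i\to\infty} B_f(i) = f^\star$ by \eqref{align:frequency_function_limit}.

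Next I exploit the estimator's built-in inflation. For the duration, $\limsup_i [\theta B_d(i) + (1-\theta)] = \theta d^\star + (1-\theta) = d^\star + (1-\theta)(1-d^\star) > d^\star$, the strictness coming from $\theta < 1$ and $d^\star < 1$. Hence some finite index $i_0 \ge \ell$ satisfies $\theta B_d(i_0) + (1-\theta) > d^\star$, and because $\hat{B}_d(t)$ is the running maximum over $\ell \le i \le n$, it obeys $\hat{B}_d(t) \ge \theta B_d(i_0) + (1-\theta) > d^\star$ for every $t$ past the finite time $T_1 \coloneqq h_{i_0} + \tau_{i_0}$. For the frequency I split on $f^\star$. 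If $f^\star > 0$, then $\limsup_i B_f(i)/\theta = f^\star/\theta > f^\star$, so a finite index $i_1 \ge \ell$ gives $B_f(i_1)/\theta > f^\star$ and thus $\hat{B}_f(t) > f^\star$ for $t$ past $T_2 \coloneqq h_{i_1}$. If $f^\star = 0$, the floor already yields $\hat{B}_f(t) \ge \epsilon_0 > 0 = f^\star$ for all $t \ge h_\ell$. Setting $T = \max\{T_1,T_2\}$ — which is finite because case iii) of Assumption \ref{assumption:not_edge_case} keeps every $\tau_n$, hence every update instant $h_i+\tau_i$, finite — completes the argument.

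The main obstacle I foresee is the limsup identity $\limsup_i B_d(i) = d^\star$ and its frequency counterpart: since the estimator samples $g$ only at the instants $h_i + \tau_i$, I must ensure these samples attain the full-time $\limsup$ rather than undershoot it, and it is precisely the monotonicity profile of $g$ (up during attacks, down otherwise) that secures this. A secondary subtlety is that the two inflation rules differ in kind — an affine push toward $1$ for the duration versus a division by $\theta$ for the frequency — so the frequency lower bound degenerates exactly when $f^\star = 0$ and must be rescued by the floor $\epsilon_0$. Keeping straight the roles of $\epsilon_0$ and of the individual non-edge-case hypotheses across these cases is where the care is needed.
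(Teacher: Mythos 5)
Your proposal is correct in substance and follows essentially the paper's own route. Your monotonicity analysis of $g(t)=\lvert\Xi(0,t)\rvert/t$ and $\nu(t)=n_\xi(0,t)/t$ (increasing during attacks, decreasing between them, with peaks at $h_i+\tau_i$ and $h_i$ respectively) is exactly the mechanism behind the paper's auxiliary \textit{Lemma \ref{lemma:sequence_limit}}, which proves $\limsup_i B_d(i)=\inf\mathcal{D}(\xi)$ and $\limsup_i B_f(i)=\inf\mathcal{F}(\xi)$ via dominating step functions $\varphi,\psi$; and your exploitation of the $\theta<1$ margin is the same inflation the paper uses. You diverge only in the concluding step: the paper argues by contradiction, assuming no $\hat{B}_d(h_n+\tau_n)$ is ever a duration-bound, extracting a subsequence with $B_d(n_k)\to\inf\mathcal{D}(\xi)$, and contradicting \textit{Proposition \ref{prop:sharp_bounds}} with an explicit $\varepsilon_1=\frac{(1-\inf\mathcal{D}(\xi))(1-\theta)}{2}$, whereas you go directly through the inclusion $(\inf\mathcal{D}(\xi),1]\subseteq\mathcal{D}(\xi)$, pick a single index $i_0$ with $\theta B_d(i_0)+(1-\theta)>\inf\mathcal{D}(\xi)$, and use monotonicity of the running maximum. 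Your direct version is cleaner, and your explicit split at $\inf\mathcal{F}(\xi)=0$, rescued by the floor $\epsilon_0$, is a genuine improvement: the paper's choice $\varepsilon_2=\frac{(1-\theta)\inf\mathcal{F}(\xi)}{2\theta}$ silently degenerates to $0$ in that sub-case, so the floor argument you give is actually needed there.

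One gap to flag: your duration argument presupposes infinitely many attack intervals — the limsup over $i$, and the existence of $i_0\geqslant\ell$, are vacuous if $\xi$ contains only finitely many $H_n$ (in particular, fewer than $\ell$, in which case $\hat{B}_d(t)\equiv\epsilon_0$ forever). The paper disposes of this case in the opening paragraph of its proof, and the patch is one line: for finite $\xi$ each $\tau_n$ is finite, so $\lvert\Xi(0,t)\rvert$ stays bounded and $\inf\mathcal{D}(\xi)=\inf\mathcal{F}(\xi)=0$, whence $\hat{B}_d(t)\geqslant\epsilon_0>0$ and $\hat{B}_f(t)\geqslant\epsilon_0>0$ give membership with $T=0$ — exactly the floor argument you already deploy for frequency when $\inf\mathcal{F}(\xi)=0$. (Your frequency branch is safe as written, since $\inf\mathcal{F}(\xi)>0$ forces $\xi$ infinite.) You should also record, as the paper does, that $h_n\to+\infty$ when $\xi$ is infinite — a finite accumulation point of $\{h_n\}$ would force $\inf\mathcal{F}(\xi)=+\infty$, i.e.\ edge case ii) — because your ``passing to tails'' identification of the sampled limsup with the full-time limsup quietly uses this fact.
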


\begin{proof}
    See Appendix \ref{appB}. \hfill $\square$
\end{proof}

\begin{remark}
    Unlike many conventional approaches \cite{Persis_DoS_tac2015,DePersis_DoS_nonlinear_scl2016,SFeng_DoS_SCL2017,HIshii_dos_quantization_tac2020,SFeng_MAS_TCNS2022,WXu_distributedDoS_tcyb2020} that rely on pre-assumed values of $B_d$ and $B_f$, \textit{DoS estimator} (\ref{align:hatT_A(n)})--(\ref{align:hattau_A(n)}) allows for controller design based on their real-time estimations, namely $\hat{B}_d (t)$ and $\hat{B}_f (t)$. According to \textit{Theorem \ref{theorem:bound_reconstruction}}, such an estimator-based design will become reliable after a finite period of time as long as \textit{edge cases} do not occur. We will further elaborate on this point in Sections \ref{sec:consensus} and \ref{sec:impulsive_stabilization} by examining two specific control scenarios and providing concrete design schemes for them based on the \textit{DoS estimator}.
\end{remark}

\begin{remark} \label{remark:theta}
    The choice of $\theta$ ($0 < \theta < 1$) is crucial and clearly reflects our basic idea. The left-hand side inequality $\theta > 0$ means that we surrender under \textit{edge cases} i) and ii) (because $\theta> 0$ implies that $\hat{B}_d (t) < 1$ and $\hat{B}_f (t) < +\infty$, which is a false estimation in the first two \textit{edge cases}). Meanwhile, the right-hand side $\theta<1$ means that we keep being skeptical about the observed data $B_d (i)$ \& $B_f (i)$ and consider scenarios slightly worse than that. It is important that $\theta$ should be strictly less than $1$, or otherwise, the wrong estimation might be obtained. For example, for $\xi = \{[2n+1,2n+2)\}_{n \in \mathbb{N}_+}$, we have $\inf \mathcal{D} (\xi) = 0.5$, but $B_d (i) = \frac{\lvert \Xi(0,h_i + \tau_i) \rvert}{h_i + \tau_i} = \frac{i}{2i + 2} < 0.5$ for any $i \in \mathbb{N}_+$. In such a case, if we let $\theta = 1$, then the estimator $\hat{B}_d (t)$ fails to identify the \textit{duration-bound} of $\xi$ even when $t \rightarrow +\infty$. We will further illustrate this counterexample in \textit{Example \ref{ex3}}.
\end{remark}

\begin{remark}
    The effectiveness of estimator (\ref{align:hatT_A(n)})--(\ref{align:hattau_A(n)}) relies on the convergence of the two infinite data sequences $B_d (i)$ and $B_f (i)$, as demonstrated in the proof of \textit{Theorem \ref{theorem:bound_reconstruction}}. However, in the \textit{edge case} iii) where a single attack's impact can be unbounded, the attacker may achieve its objective with only a finite number of attacks. This scenario renders our estimator ineffective because the two data sequences become finite and no longer exhibit any convergence behavior.
\end{remark}

\begin{remark} \label{remark:parameter_tuning}
    In the DoS estimator (\ref{align:hatT_A(n)}) and (\ref{align:hattau_A(n)}),
    selecting a larger value of $\ell$ means letting the estimation process in the second line begin later, and selecting a larger $\theta$ means being less skeptical about the attacker's past actions (because that makes the generated bounds closer to the observed $B_d (i)$ and $B_f (i)$, respectively). Intuitively, setting them larger will eventually lead to tighter \textit{duration} \& \textit{frequency bounds} (and consequently, fewer resources are required for defense), but the price is that more time is required to generate a reliable estimation (namely, $T$ in \textit{Theorem \ref{theorem:bound_reconstruction}} becomes larger). As the result, we might have to sacrifice the transient performance as the trade-off, which will be further illustrated in \textit{Example \ref{ex2}}.
\end{remark}

\begin{remark} \label{remark:epsilon0}
    The parameter $\epsilon_0$ serves as a default value of \textit{DoS estimator} (\ref{align:hatT_A(n)})--(\ref{align:hattau_A(n)}). Choosing a larger $\epsilon_0$ helps improve the transient performance of control systems, but it requires the defender to invest more defensive resources as the trade-off. This point will be illustrated in \textit{Example \ref{ex_epsilon0}}.
\end{remark}

\subsection{Time Required for A Reliable Estimation}
According to \textit{Theorem \ref{theorem:bound_reconstruction}}, the estimation becomes reliable after $T$ units of time. The value of $T$ typically depends on specific attack strategies, so it is difficult to know it unless we have some extra information about the attacker. This point can be illustrated by a toy example: For a positive real number $r$, consider $h_n = r + n$ and $\tau_n = \frac{1}{2}$. Then, we have $T \geqslant r$, because no attack data is available before $t=r$ (no attack occurs until $t = h_1 = r+1$). In this example, $T$ can be arbitrary large as $r$ increases, and thus, we cannot determine its value unless we have prior information on $r$.

In this subsection, we consider a DoS sequence $\xi$ satisfying the following two assumptions which can lead to a quantitative description of $T$.
\begin{assumption} \label{assumption:lower_duration(R1)}
    There exist two constants $b_d \in [0,\inf \mathcal{D}(\xi)]$ and $\kappa'(b_d) > 0$, such that
    $$\lvert \Xi (0,t) \rvert \geqslant -\kappa' + b_d \, t, \quad \forall \, t \geqslant 0,$$
    where $\mathcal{D}(\xi)$ is defined above \textit{Proposition \ref{prop:sharp_bounds}}.
\end{assumption}
\begin{assumption} \label{assumption:lower_frequency(R1)}
    There exist two constants $b_f \in [0,\inf \mathcal{F}(\xi)]$ and $\Lambda'(b_f) > 0$, such that
    $$n_{\xi} (0,t) \geqslant -\Lambda' + b_f \, t, \quad \forall \, t \geqslant 0,$$
    where $\mathcal{F}(\xi)$ is defined above \textit{Proposition \ref{prop:sharp_bounds}}.
\end{assumption}

The following result provides an upper bound of $T$.
\begin{theorem} \label{theorem:upper_bound_of_T}
    Let the estimation $\hat{B}_d (t)$ and $\hat{B}_f (t)$ be generated by the \textit{DoS estimator} (\ref{align:hatT_A(n)})--(\ref{align:hattau_A(n)}). Then, under \textit{Assumptions \ref{assumption:not_edge_case} }to\textit{ \ref{assumption:lower_frequency(R1)}}, the following statements about $T$ are true,
    where $T$ is defined in \textit{Theorem \ref{theorem:bound_reconstruction}}.
    \begin{enumerate}
        \item When $\inf \mathcal{F}(\xi) = 0$, there holds $T=0$.
        \item When $\inf \mathcal{F}(\xi) > 0$ and
    \begin{align} \label{align:quantitative_theo_theta}
        \theta < \min\left\{ \frac{1-\inf \mathcal{D}(\xi)}{1-b_d}, \, \frac{b_f}{\inf \mathcal{F}(\xi)} \right\},
    \end{align}
    there holds
        $T \leq h_{N_1} + \tau_{N_1}$, where $N_1$ is the minimum integer satisfying
    \begin{align} \label{align:quantitative_theo_bd}
    h_{N_1} + \tau_{N_1} > \frac{ \kappa' \theta}{\theta b_d + 1 - \theta - \inf \mathcal{D}(\xi)}
    \end{align}
    and
    \begin{align} \label{align:quantitative_theo_bf}
    h_{N_1} > \frac{\Lambda'}{b_f - \theta \inf \mathcal{F}(\xi)}.
    \end{align}
    \end{enumerate}
\end{theorem}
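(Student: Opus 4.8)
The plan is to treat the duration and frequency estimators separately, find a time after which each is guaranteed to produce a valid bound, and take $T$ to be the larger of the two. By \emph{Proposition \ref{prop:sharp_bounds}} it suffices to show $\hat{B}_d(t) \in (\inf \mathcal{D}(\xi), 1]$ and $\hat{B}_f(t) \in (\inf \mathcal{F}(\xi), +\infty)$. Since $B_d(i) \le 1$ and $\epsilon_0 < 1$, the term $\theta B_d(i) + (1-\theta)$ never exceeds $1$, so $\hat{B}_d(t) \le 1$ automatically; the only thing to verify is the strict lower bound $\hat{B}_d(t) > \inf \mathcal{D}(\xi)$ (and analogously $\hat{B}_f(t) > \inf \mathcal{F}(\xi)$).

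For statement 2, I would first convert the two lower-bound assumptions into per-index estimates. \emph{Assumption \ref{assumption:lower_duration(R1)}} gives $B_d(i) = \lvert\Xi(0,h_i+\tau_i)\rvert/(h_i+\tau_i) \ge b_d - \kappa'/(h_i+\tau_i)$, hence $\theta B_d(i)+(1-\theta) \ge \theta b_d + 1 - \theta - \theta\kappa'/(h_i+\tau_i)$. Demanding the right-hand side to exceed $\inf \mathcal{D}(\xi)$ and solving for $h_i+\tau_i$ yields exactly the threshold in (\ref{align:quantitative_theo_bd}); here the hypothesis $\theta < (1-\inf \mathcal{D}(\xi))/(1-b_d)$ in (\ref{align:quantitative_theo_theta}) is precisely what makes the denominator $\theta b_d + 1 - \theta - \inf \mathcal{D}(\xi)$ positive, so the threshold is finite. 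A symmetric computation uses \emph{Assumption \ref{assumption:lower_frequency(R1)}} and $B_f(i) = i/h_i \ge b_f - \Lambda'/h_i$ to get $B_f(i)/\theta > \inf \mathcal{F}(\xi)$ whenever $h_i$ exceeds the threshold in (\ref{align:quantitative_theo_bf}), where the hypothesis $\theta < b_f/\inf \mathcal{F}(\xi)$ guarantees positivity of $b_f - \theta \inf \mathcal{F}(\xi)$.

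Next I would combine the two per-index bounds. Because $h_i$ and $h_i+\tau_i$ are strictly increasing in $i$ (recall $h_{i+1} > h_i + \tau_i$), every index $i \ge N_1$ simultaneously satisfies (\ref{align:quantitative_theo_bd}) and (\ref{align:quantitative_theo_bf}), so for each such $i$ we have both $\theta B_d(i)+(1-\theta) > \inf \mathcal{D}(\xi)$ and $B_f(i)/\theta > \inf \mathcal{F}(\xi)$. For any $t \ge h_{N_1}+\tau_{N_1}$, the interval index $n$ selected in (\ref{align:hatT_A(n)}) satisfies $n \ge N_1$, and likewise the index in (\ref{align:hattau_A(n)}) is $\ge N_1$ since $t \ge h_{N_1}$; as both estimators are defined through a maximum over $\ell \le i \le n$, this maximum includes a term (e.g. $i=n$, assuming $N_1 \ge \ell$) that already lies strictly above the relevant infimum. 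Hence $\hat{B}_d(t) > \inf \mathcal{D}(\xi)$ and $\hat{B}_f(t) > \inf \mathcal{F}(\xi)$ for all $t \ge h_{N_1}+\tau_{N_1}$, which proves $T \le h_{N_1}+\tau_{N_1}$.

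Finally, statement 1 ($\inf \mathcal{F}(\xi)=0 \Rightarrow T=0$) requires a separate argument, and this is where I expect the only real subtlety. Under \emph{Assumption \ref{assumption:not_edge_case}} the single-attack durations are uniformly bounded, say $\tau_n \le \bar{\tau}$, so $\lvert\Xi(0,t)\rvert \le \bar{\tau}\, n_{\xi}(0,t)$ and \emph{Lemma \ref{lemma:function_limit}} forces $\inf \mathcal{D}(\xi) \le \bar{\tau}\,\inf \mathcal{F}(\xi) = 0$. Since $\hat{B}_d(t) \ge \epsilon_0$ and $\hat{B}_f(t) \ge \epsilon_0$ for every $t \ge 0$, and $\epsilon_0 > 0 = \inf \mathcal{D}(\xi) = \inf \mathcal{F}(\xi)$ with $\epsilon_0 < 1$, both estimators are valid bounds from $t=0$ onward, giving $T = 0$. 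The main obstacle is the bookkeeping in statement 2 showing that validity, once attained at index $N_1$, persists for all later $t$ under the two distinct interval partitions of (\ref{align:hatT_A(n)}) and (\ref{align:hattau_A(n)}) and the starting index $\ell$; the per-index estimates themselves are routine algebra driven by the two threshold inequalities.
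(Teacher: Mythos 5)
Your proof is correct and, for statement 2 (the quantitative core), follows essentially the same route as the paper: convert \textit{Assumptions \ref{assumption:lower_duration(R1)}} and \textit{\ref{assumption:lower_frequency(R1)}} into the per-index bounds $B_d(N_1) \geqslant b_d - \kappa'/(h_{N_1}+\tau_{N_1})$ and $B_f(N_1) \geqslant b_f - \Lambda'/h_{N_1}$, use (\ref{align:quantitative_theo_theta}) to make the denominators in (\ref{align:quantitative_theo_bd})--(\ref{align:quantitative_theo_bf}) positive, and propagate validity to all $t \geqslant h_{N_1}+\tau_{N_1}$ through the running-maximum (monotonicity) structure of the estimator. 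Where you genuinely differ is statement 1. The paper argues that $\inf\mathcal{F}(\xi)=0$ forces $\xi$ to have only finitely many DoS intervals and deduces $\inf\mathcal{D}(\xi)=0$ from that; the intermediate claim is not true in general (e.g.\ $h_n = 2^n$ with bounded $\tau_n$ gives infinitely many intervals yet $\inf\mathcal{F}(\xi)=0$). Your argument---the negation of edge case iii) gives a uniform bound $\bar\tau$ on $\tau_n$, hence $\lvert\Xi(0,t)\rvert \leqslant \bar\tau\, n_\xi(0,t)$, and \textit{Lemma \ref{lemma:function_limit}} then yields $\inf\mathcal{D}(\xi) \leqslant \bar\tau \inf\mathcal{F}(\xi) = 0$---reaches the same conclusion without that claim and is the more robust derivation. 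One caveat applies to both proofs equally: the argument needs $N_1 \geqslant \ell$ so that the index $N_1$ actually enters the maxima in (\ref{align:hatT_A(n)})--(\ref{align:hattau_A(n)}) (otherwise $\hat{B}_d(t) = \epsilon_0$ on $[h_{N_1}+\tau_{N_1},\, h_\ell+\tau_\ell)$, which may fall below $\inf\mathcal{D}(\xi)$); you flag this assumption explicitly while the paper leaves it implicit, so it is not a gap relative to the paper's own proof.
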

\begin{proof}
When $\inf \mathcal{F}(\xi) = 0$, there are only finite DoS time intervals $[h_n, h_n + \tau_n)$. In this case, we have $\inf \mathcal{D}(\xi) = 0$ due to \textit{Assumption \ref{assumption:not_edge_case}}. Consequently, the estimation $\hat{B}_d (t)$ \& $\hat{B}_f (t)$ generated by (\ref{align:hatT_A(n)})--(\ref{align:hattau_A(n)}) is always reliable, as $\hat{B}_d (t) > 0$ and $\hat{B}_f (t) > 0$ consistently hold. This validates statement 1).

For the statement 2), the integer $N_1$ is well-defined by (\ref{align:quantitative_theo_bd})--(\ref{align:quantitative_theo_bf}). This is because there are infinitely many DoS time intervals when $\inf \mathcal{F}(\xi) > 0$, and under \textit{Assumption \ref{assumption:not_edge_case}}, there holds $\lim_{n \rightarrow \infty} (h_n + \tau_n) = +\infty$.

We now prove that $\hat{B}_d (t) \in \mathcal{D}(\xi)$ for any $t \geqslant h_{N_1} + \tau_{N_1}$ when $\inf \mathcal{F}(\xi) > 0$. Due to the monotonicity of $\hat{B}_d (t)$, this is equivalent to
    \begin{align} \label{align:quantitative_proof_1}
    \hat{B}_d (h_{N_1} + \tau_{N_1}) > \inf \mathcal{D}(\xi).
    \end{align}
    According to (\ref{align:hatT_A(n)}) and \textit{Assumption \ref{assumption:lower_duration(R1)}}, we have
    \begin{align*}
        \hat{B}_d (h_{N_1} + \tau_{N_1})
        &\geqslant \theta \frac{\lvert \Xi(0,h_{N_1} + \tau_{N_1}) \rvert}{h_{N_1} + \tau_{N_1}} + (1-\theta) \\
        &\geqslant \theta b_d - \frac{\kappa' \theta }{h_{N_1} + \tau_{N_1}} + (1-\theta).
    \end{align*}
    This inequality, combined with (\ref{align:quantitative_theo_theta}) and (\ref{align:quantitative_theo_bd}), implies (\ref{align:quantitative_proof_1}).

    On the other hand, by using (\ref{align:hattau_A(n)}), (\ref{align:quantitative_theo_theta}), (\ref{align:quantitative_theo_bf}), and \textit{Assumption \ref{assumption:lower_frequency(R1)}}, we analogously conclude that $\hat{B}_f (t) \in \mathcal{F}(\xi)$ for any $t \geqslant h_{N_1} + \tau_{N_1}$. Consequently, when $\inf \mathcal{F}(\xi) > 0$, the estimation generated by (\ref{align:hatT_A(n)})--(\ref{align:hattau_A(n)}) becomes reliable no later than $t = h_{N_1} + \tau_{N_1}$. This validates statement 2), and the proof is completed. \hfill $\square$
\end{proof}

{

}

\section{Anti-DoS consensus of MASs} \label{sec:consensus}

In this section, we show how \textit{DoS estimator} (\ref{align:hatT_A(n)})--(\ref{align:hattau_A(n)}) can help regulate a multi-agent system (MAS) to reach an \textit{average consensus} \cite{Saber_average_consensus_ACC2003,JBerneburg_consensus-etc_TAC2021} without pre-assuming the duration and frequency parameters of the attacker.

\subsection{Problem formulation} \label{subsec:consensus_problem_formulation}
We consider the following MAS
\begin{align} \label{align:MAS}
    \dot{x}_i (t) = u_i (t), \; t\geqslant 0,\, i = 1,2,\cdots,N,
\end{align}
where $x_i \in \mathbb{R}$ represents the state of the $i$-th agent and $u_i \in \mathbb{R}$ represents the control input to be designed. A common setting of $u_i$ is as follows \cite{GXie_sample-consensus_ACC2009,HSu_consensus_tac2023}:
\begin{align*}
    u_i (t) = \sum_{j=1}^N a_{ij} \left( x_j(t_k) - x_i(t_k) \right),\, t \in [t_k,t_{k+1}),
\end{align*}
where $a_{ij} = 1$ if there is a connection between agents $i$ and $j$, or $a_{ij} = 0$ otherwise. In addition, $\{t_k\}_{k \in \mathbb{N}_+}$ is the sequence of sampling instants.
Due to the impact of the DoS attack, some transmissions are denied, and the actual input of the agent $i$ therefore becomes
\begin{align} \label{align:MAS_ui}
    u_i (t) = 
    \left\{
    \begin{array}{ll}
        0, &  t_k \in \Xi(0,+\infty),\\
        \sum_{j=1}^N a_{ij} \left(x_j(t_k) - x_i(t_k)\right), & t_k \in \Theta(0,+\infty)
    \end{array}
    \right.
\end{align}
for any $t \in [t_k,t_{k+1})$. Here, the notations $\Xi(\tau,s)$ and $\Theta(\tau,s)$ are, respectively, defined in (\ref{align:Xi}) and (\ref{align:Theta}).

For the convenience of the statement, we denote by $\mathscr{L} = (l_{ij})_{N \times N}$ the Laplacian of the multi-agent network, where
\begin{eqnarray*}
    l_{ij} = 
    \left\{
    \begin{array}{lcl}
        -a_{ij}, & i \neq j, \\
        \sum_{i \neq j} a_{ij}, & i = j. 
    \end{array}
    \right.
\end{eqnarray*}
We assume $\mathscr{L}$ to satisfy the following assumption.
\begin{assumption} \label{assumption:undirected_connected}
    \cite{Saber_average_consensus_ACC2003} $\mathscr{L}$ is symmetric and irreducible. As a result, the eigenvalues of $\mathscr{L}$ can be arranged as $$0 = \lambda_1 (\mathscr{L}) < \lambda_2 (\mathscr{L}) \leqslant \lambda_3 (\mathscr{L}) \leqslant \cdots \leqslant \lambda_N(\mathscr{L}).$$
\end{assumption}

Our control objective is to make MAS (\ref{align:MAS}) reach an average consensus, which is defined as follows.
\begin{definition} \label{definition:average_consensus}
    \textit{(\cite{Saber_average_consensus_ACC2003} Average consensus)} MAS (\ref{align:MAS}) is said to reach average consensus, if
    $$\lim_{t \rightarrow +\infty}{\left(x_i (t) - \frac{1}{N} \sum_{j=1}^{N} x_{j}(0)\right)} = 0,\; \forall \, i = 1,\cdots,N.$$
    \hfill $\square$
\end{definition}

In a classical sampled data setting, the sampling is usually carried out periodically, that is, $t_{k+1} - t_k = \Delta$ with some constant $\Delta>0$. Under such a setting, average consensus can be reached if $\lvert 1 - \Delta 
\lambda_N (\mathscr{L}) \rvert < 1$ \cite{YYou_discreteMAS_TAC2011} and 
    \begin{align} \label{align:Delta_known_capacity}
    B_d + \Delta B_f < 1\text{ \cite{WXu_distributedDoS_tcyb2020}}. 
    \end{align}
However, the design of $\Delta$ in (\ref{align:Delta_known_capacity}) is seldom reliable because it relies on presupposed values of the attacker's \textit{duration-bound} $B_d$ \& \textit{frequency-bound} $B_f$.

Our task in this section is to design the sampling sequence $\{t_k\}_{k \in \mathbb{N}}$, such that control law (\ref{align:MAS_ui}) drives MAS (\ref{align:MAS}) to reach an average consensus.
At any sampling time $t_k$, only past knowledge about the attacker (namely $\mathscr{K}_a (0,t_k)$) is available.
Our basic idea is letting $t_{k+1} - t_k$ be \textit{adaptive} to $\hat{B}_d (s)$ \& $\hat{B}_f (s)$ (which are functions of $\mathscr{K}_a (0,t_k)$), where $s\leqslant t_k$, so that it converges to a $\Delta$ satisfying (\ref{align:Delta_known_capacity}). Such adaptive sampling intervals should have a strictly positive lower bound to guarantee a finite sampling rate. If we denote $\Delta_k \coloneqq t_{k+1} - t_k$, this task can be written more formally as follows.

\begin{problem} \label{problem:average_consensus}
\textit{(Consensus design of the sampling sequence)}    Design the sampling interval $\Delta_k$ as
    $$\Delta_k = \Delta_k(\mathscr{K}_a (0,t_k)),$$
    such that
    \begin{itemize}
        \item[a)] Control law (\ref{align:MAS_ui}) regulates MAS (\ref{align:MAS}) to reach an average consensus.
        \item[b)] There exists a constant $\underline{\Delta} > 0$ satisfying
        $$\Delta_k \geqslant \underline{\Delta},\; \forall \, k \in \mathbb{N}.$$
    \end{itemize}
\end{problem}

\subsection{Solution to \textit{Problem \ref{problem:average_consensus}}: The adaptive sampling sequence}

This subsection gives a detailed design of $\Delta_k$, which solves \textit{Problem \ref{problem:average_consensus}}. We let $t_1 = 0$ and initialize $\Delta_k$ as
\begin{align} \label{align:Delta_k_initialize}
    \Delta_k &= \Delta_0,\,t_k \in [0,h_1+\tau_1),
\end{align}
where $\Delta_0$ satisfies $\lvert 1 - \Delta_0 \lambda_N (\mathscr{L}) \rvert < 1$. When $t = h_n +\tau_n$, $n \in \mathbb{N}_+$, we update $\Delta_k$ as
\begin{align} \label{align:Delta_k}
    \Delta_k &=
    \min\left\{\Delta_0, \frac{1}{\gamma_1\hat{B}_f (h_n)}\left(1- \hat{B}_d (h_n + \tau_n)\right) \right\},\nonumber \\
    &\qquad \qquad \qquad \qquad t_k \in [h_n + \tau_n, h_{n+1} + \tau_{n+1}).
\end{align}
Here, $\hat{B}_d (t)$ and $\hat{B}_f (t)$ are generated by the \textit{DoS estimator} (\ref{align:hatT_A(n)})--(\ref{align:hattau_A(n)}), and $\gamma_1 > 1$ ensures that $\Delta_k$ satisfies an ``estimated'' version of (\ref{align:Delta_known_capacity}), that is,
$$\hat{B}_d (h_n) + \Delta_k \hat{B}_f (h_n) < 1.$$

The following result shows that the sampling logic determined by (\ref{align:Delta_k_initialize}) and (\ref{align:Delta_k}) can well address \textit{Problem \ref{problem:average_consensus}}.

\begin{theorem} \label{theorem:consensus_MAS_adaptive_sampling}
    Under \textit{Assumptions \ref{assumption:not_edge_case}} and \textit{\ref{assumption:undirected_connected}}, MAS (\ref{align:MAS})--(\ref{align:MAS_ui}) subject to a DoS sequence $\xi$ can reach average consensus if the sampling intervals $\{\Delta_k\}_{k \in \mathbb{N}_+}$ are determined by (\ref{align:Delta_k_initialize})--(\ref{align:Delta_k}). In addition, $\Delta_k$ has a lower bound $\underline{\Delta}>0$.
\end{theorem}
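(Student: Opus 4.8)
The plan is to split the claim into the uniform lower bound $\underline{\Delta}>0$ (item b) and the consensus (item a), treating the lower bound first since it feeds the convergence argument. Writing the closed loop in vector form, an \emph{allowed} instant ($t_k\in\Theta$) gives $x(t_{k+1})=(I-\Delta_k\mathscr{L})x(t_k)$ while a \emph{denied} instant ($t_k\in\Xi$) freezes the state, $x(t_{k+1})=x(t_k)$. In both cases $\mathbf{1}^T x$ is invariant because $\mathbf{1}^T\mathscr{L}=0$, so the running average stays at $\tfrac1N\mathbf{1}^T x(0)$ for all $t$. Hence consensus is equivalent to $\delta(t):=x(t)-\tfrac1N(\mathbf{1}^T x(0))\mathbf{1}\to 0$, with $\delta$ confined to $\{\mathbf{1}\}^{\perp}$, where $\mathscr{L}$ has the positive spectrum $\lambda_2,\dots,\lambda_N$.

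For the lower bound I would invoke \textit{Lemma \ref{lemma:function_limit}} with \textit{Assumption \ref{assumption:not_edge_case}}. Since $B_d(i)=\frac{\lvert\Xi(0,h_i+\tau_i)\rvert}{h_i+\tau_i}$ obeys $\limsup_i B_d(i)\le\inf\mathcal{D}(\xi)<1$ and each $B_d(i)<1$ (the gap $h_i-\sum_{j<i}\tau_j>h_1\ge0$ forces strictly positive off-time), the supremum $B_d^{*}:=\sup_i B_d(i)<1$; likewise $B_f(i)=i/h_i$ satisfies $\limsup_i B_f(i)\le\inf\mathcal{F}(\xi)<\infty$, so $B_f^{*}:=\sup_{i\ge\ell}B_f(i)<\infty$. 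Feeding these into \eqref{align:hatT_A(n)}--\eqref{align:hattau_A(n)} gives $\hat{B}_d(t)\le\max\{\epsilon_0,\theta B_d^{*}+1-\theta\}<1$ and $\epsilon_0\le\hat{B}_f(t)\le\max\{\epsilon_0,B_f^{*}/\theta\}<\infty$ uniformly in $t$, so \eqref{align:Delta_k} yields $\Delta_k\ge\min\{\Delta_0,\frac{1-\max\{\epsilon_0,\theta B_d^{*}+1-\theta\}}{\gamma_1\max\{\epsilon_0,B_f^{*}/\theta\}}\}=:\underline{\Delta}>0$. This settles b) and also pins $\Delta_k\in[\underline{\Delta},\Delta_0]\subset(0,2/\lambda_N)$, the last inclusion coming from $\lvert 1-\Delta_0\lambda_N\rvert<1$.

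With $V(t):=\tfrac12\lvert\delta(t)\rvert^2$ I would then establish a per-step contraction. On an allowed step $\delta(t_{k+1})=(I-\Delta_k\mathscr{L})\delta(t_k)$, and on $\{\mathbf{1}\}^{\perp}$ the spectral radius of $I-\Delta_k\mathscr{L}$ equals $\max_{2\le j\le N}\lvert 1-\Delta_k\lambda_j\rvert$; since $\Delta_k$ ranges over the compact set $[\underline{\Delta},\Delta_0]\subset(0,2/\lambda_N)$, continuity furnishes a \emph{uniform} factor $\bar{\rho}:=\max_{\Delta\in[\underline{\Delta},\Delta_0]}\max_{2\le j\le N}\lvert 1-\Delta\lambda_j\rvert<1$, so $V(t_{k+1})\le\bar{\rho}^{2}V(t_k)$. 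A denied step keeps $V$ constant, and inside any interval $V(t)\le V(t_k)$ because $0\le(t-t_k)\lambda_j<2$. Thus $V$ is non-increasing and $V(t)\le\bar{\rho}^{2M(t)}V(0)$, where $M(t)$ counts the allowed instants in $[0,t)$, and $\delta\to0$ reduces to proving $M(t)\to\infty$.

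This last step is the crux, and it is where the \textit{DoS estimator} and the margin $\gamma_1>1$ earn their keep. I would argue by contradiction: if only finitely many instants are allowed, then eventually every sample lies in $\Xi$, so no \emph{off}-interval can contain a sample; as consecutive samples are at most $\Delta_k$ apart (with $\Delta_k\to\Delta^{\infty}:=\min\{\Delta_0,\frac{1-\hat{B}_d^{\infty}}{\gamma_1\hat{B}_f^{\infty}}\}$ by monotone convergence of the estimator), every eventual off-interval has length $\le\Delta^{\infty}+\epsilon$. Summing over off-intervals (count $\le n_{\xi}(0,t)+1$) gives $\limsup_t\frac{\lvert\Theta(0,t)\rvert}{t}\le(\Delta^{\infty}+\epsilon)\inf\mathcal{F}(\xi)$, and letting $\epsilon\downarrow0$, together with $\limsup_t\frac{\lvert\Theta(0,t)\rvert}{t}\ge 1-\limsup_t\frac{\lvert\Xi(0,t)\rvert}{t}=1-\inf\mathcal{D}(\xi)$ from \textit{Lemma \ref{lemma:function_limit}}, forces $1-\inf\mathcal{D}(\xi)\le\Delta^{\infty}\inf\mathcal{F}(\xi)$. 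But \textit{Theorem \ref{theorem:bound_reconstruction}} with \textit{Proposition \ref{prop:sharp_bounds}} gives $\hat{B}_d^{\infty}\ge\inf\mathcal{D}(\xi)$ and $\hat{B}_f^{\infty}\ge\inf\mathcal{F}(\xi)$, so $\Delta^{\infty}\inf\mathcal{F}(\xi)\le\frac{1-\inf\mathcal{D}(\xi)}{\gamma_1}<1-\inf\mathcal{D}(\xi)$ when $\inf\mathcal{F}(\xi)>0$, while for $\inf\mathcal{F}(\xi)=0$ the forced inequality reads $1-\inf\mathcal{D}(\xi)\le0$, contradicting $\inf\mathcal{D}(\xi)<1$. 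Either way $M(t)\to\infty$, whence $V(t)\to0$ and average consensus follows. The genuine difficulty is exactly this measure-theoretic bookkeeping that turns the \emph{estimated} capacity inequality $\hat{B}_d+\Delta_k\hat{B}_f<1$ into a guarantee of infinitely many healthy transmissions; the switched-Lyapunov contraction around it is routine.
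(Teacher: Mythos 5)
Your proposal is correct, and its overall skeleton matches the paper's: first pin $\Delta_k$ into a compact range $[\underline{\Delta},\Delta_0]\subset(0,2/\lambda_N(\mathscr{L}))$, then combine a uniform per-transmission contraction with a count of successful transmissions. The difference lies in how the crux --- infinitely many successful transmissions --- is handled. The paper (Appendix \ref{appC}) fixes a finite time $\check{t}=T$ from \textit{Theorem \ref{theorem:bound_reconstruction}} at which $\hat{B}_d(T)\in\mathcal{D}(\xi)$ and $\hat{B}_f(T)\in\mathcal{F}(\xi)$ are \emph{bona fide} bounds, verifies $\hat{B}_d(T)+\Delta_{k(\check{t})}\hat{B}_f(T)<1$ from (\ref{align:Delta_k}) and $\gamma_1>1$, and then delegates $\alpha(\check{t},t_m)\to\infty$ to lemma 3 of the cited De Persis--Tesi reference; you instead prove this step from scratch by contradiction, bounding the lengths of the eventual \emph{off}-components by the limiting sampling gap $\Delta^{\infty}$, summing against $n_{\xi}(0,t)+1$, and comparing asymptotic densities via \textit{Lemma \ref{lemma:function_limit}} against the sharp bounds $\inf\mathcal{D}(\xi)$ and $\inf\mathcal{F}(\xi)$, with the strict margin supplied by $\gamma_1>1$. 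Your route is self-contained (no external counting lemma) and makes transparent why $\gamma_1>1$ is indispensable --- it is exactly the strict gap $\Delta^{\infty}\inf\mathcal{F}(\xi)\leqslant\frac{1-\inf\mathcal{D}(\xi)}{\gamma_1}<1-\inf\mathcal{D}(\xi)$ that closes the contradiction --- but being purely asymptotic it yields no convergence rate, whereas the paper's counting argument gives a linear-in-$t$ lower bound on successful transmissions and hence geometric decay of the error. Two cosmetic points: your $B_d^{*}$ should be $\sup_{i\geqslant\ell}B_d(i)$ (as you correctly wrote for $B_f^{*}$), since $B_d(1)=1$ is possible when $h_1=0$ and only $\ell\geqslant 2$ guarantees $B_d(i)<1$ for the indices the estimator actually uses; and your density bookkeeping should be stated so that finite $\xi$ is covered (there the hypothesis of finitely many allowed samples fails immediately, since the terminal \emph{off}-component is unbounded), which your $\inf\mathcal{F}(\xi)=0$ branch handles in substance but is worth making explicit.
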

\begin{proof}
    See Appendix \ref{appC}. \hfill $\square$
\end{proof}
\begin{remark}
For consensus of MASs, \textit{Theorem \ref{theorem:consensus_MAS_adaptive_sampling}} significantly extends the reach of the security design (\ref{align:Delta_known_capacity}) \cite{WXu_distributedDoS_tcyb2020}. In particular, even if $B_d$ or $B_f$ is previously unknown to the defender (and therefore (\ref{align:Delta_known_capacity}) cannot be directly implemented), satisfactory resilience is still preserved if we design the sampling sequence as in (\ref{align:Delta_k_initialize})--(\ref{align:Delta_k}). The theorem is, therefore, an illustration of how we broaden the applicability of existing anti-DoS control strategies based on our new \textit{DoS estimator}.

\end{remark}

\section{Anti-DoS impulsive stabilization of nonlinear systems} \label{sec:impulsive_stabilization}
In this section, we show that the \textit{DoS estimator} is also effective in the impulsive stabilization of nonlinear systems \cite{TYang_impulsive-control_TAC1999,XLi_event_impulsive_control_tac2020,ZGLi_impulsive_control_system_tac2001} in the presence of previously unknown DoS attacks.

To avoid overcomplicating notations, the symbols $t_k$ and $\Delta_k$
in this section will represent different physical meanings compared to the previous section. Specifically, this section uses $t_k$ to represent the $k$-th impulsive \textit{control instant} and uses $\Delta_k \coloneqq t_{k+1} - t_k$ to represent the $k$-th \textit{control interval}.

\subsection{Problem formulation} \label{subsec:impulsive_stabilization_problem_formulation}
We consider the following nonlinear system \cite{TYang_impulsive-control_TAC1999}
\begin{eqnarray} \label{align:nonlinear_system}
\left\{
\begin{array}{lcl}
    \dot{X}(t) = f(t,X(t)), & t \neq t_k, \\
    X(0^-) = X_0. &
\end{array}
\right.
\end{eqnarray}
with an impulsive control law
\begin{align} \label{align:impulsive_control}
    X(t_k^+) = X(t_k^-) + U(X(t_k^-)),\, t_k \in \Theta(0,+\infty).
\end{align}
Here, $X \in \mathbb{R}^n$ is a right-continuous state variable, $f: \mathbb{R}_{\geqslant 0} \times \mathbb{R}^n \mapsto \mathbb{R}^n$ is continuous, $f(t,0) = 0,\,\forall \, t \geqslant 0$, and $U(\cdot)$ is the impulsive control input.
Due to the DoS attack, the impulsive input cannot be operated over $\Xi(0,+\infty)$, and therefore
\begin{align} \label{align:impulsive_invalid_control}
    X(t_k^+) = X(t_k^-),\,t_k \in \Xi(0,+\infty).
\end{align}
Impulsive control system (\ref{align:nonlinear_system})--(\ref{align:impulsive_invalid_control}) is assumed to satisfy the following assumption.
\begin{assumption} \label{assumption:impulsive_system}
    \cite{book_impulsive_differential_equation,TYang_impulsive-control_TAC1999} There exists a function $V:\mathbb{R}_{\geqslant 0} \times \mathbb{R}^n \mapsto \mathbb{R}_{\geqslant 0}$, such that
    \begin{enumerate}
        \item $V$ is continuous in $[t_{k},t_{k+1}) \times \mathbb{R}^n$, and for each $X \in \mathbb{R}^n$ and $k = 1,2,\cdots$, the limit
        \begin{align*}
            \lim_{(t,y) \rightarrow (t_k^-,X)} V(t,Y) = V(t_k,X)
        \end{align*}
        exists.
        \item $V$ is locally Lipschitzian in its second argument.
        \item There exists a $\beta>0$, such that for any $X \in \mathbb{R}^n$ and $ t \neq t_k$, there holds that
        \begin{align*}
        D^+ V(t,X) \leqslant \beta V(t,X).
        \end{align*}
        \item There exists a $\mu \in (0,1)$ such that
        \begin{align*}
            V(t,X+ U(t,X)) \leqslant \mu V(t^-,X),\,\forall \, X \in \mathbb{R}^n,\, t = t_k.
        \end{align*}
        \item There exist functions $a(\cdot),b(\cdot) \in \mathcal{K}$, such that $a(\lvert X \rvert) \leqslant V(t,X) \leqslant b(\lvert X \rvert)$ holds on $\mathbb{R}_{\geqslant 0} \times \mathbb{R}^n$.
    \end{enumerate}
\end{assumption}

Our control objective is to stabilize the system (\ref{align:nonlinear_system}) exponentially, which can be defined as follows.
\begin{definition} \label{definition:exponential_stabilization}
    \textit{(\cite{Khalil_nonlinear} Exponential stabilization)} It is said that impulsive controller (\ref{align:impulsive_control})--(\ref{align:impulsive_invalid_control}) exponentially stabilizes system (\ref{align:nonlinear_system}), if there exist constants $C_0 > 0$ and $\zeta>0$ satisfying
    $$\lvert X(t) \rvert \leqslant C_0 \lvert X_0 \rvert e^{-\zeta t},\; \forall \, t \geqslant 0.$$
\end{definition}

Our task in this subsection is to design $\Delta_k$ to achieve the objective as defined in \textit{Definition \ref{definition:exponential_stabilization}} above. Similar to Section \ref{sec:consensus}, we can only use the past attacker information $\mathscr{K}_a(0,t_k)$. Furthermore, $\Delta_k$ should have a strictly positive lower bound. This task can be written more formally as follows.

\begin{problem} \label{problem:impulsive_stabilization}
\textit{(Stabilization design of the impulsive control sequence)}    Design the control interval $\Delta_k$ as
        $$\Delta_k = \Delta_k(\mathscr{K}_a (0,t_k)),$$
    such that
    \begin{itemize}
        \item[a)] Impulsive controller (\ref{align:impulsive_control})--(\ref{align:impulsive_invalid_control}) exponentially stabilizes system (\ref{align:nonlinear_system}).
        \item[b)] There exists a constant $\underline{\Delta} > 0$ satisfying
        $$\Delta_k \geqslant \underline{\Delta},\; \forall \, k \in \mathbb{N}.$$
    \end{itemize}
\end{problem}

\subsection{Solution to \textit{Problem \ref{problem:impulsive_stabilization}}: The adaptive control sequence}
Now, we provide a solution to \textit{Problem \ref{problem:impulsive_stabilization}}.
We let $t_1 = 0$ and initialize $\Delta_k$ as
\begin{align} \label{align:impulsive_Delta_k_initialize}
    \Delta_k &= \Delta_0 \coloneqq -\frac{\chi}{\gamma_3 \beta},\,t_k \in [0,h_1+\tau_1),
\end{align}
where $\chi \coloneqq \ln \mu < 0$, $\gamma_3 > 1$ is a constant, $\mu<1$ and $\beta > 0$ are given in \textit{Assumption \ref{assumption:impulsive_system}}. When $t = h_n + \tau_n$, $n \in \mathbb{N}_+$, $\Delta_k$ is updated as
\begin{align} \label{align:impulsive_Delta_k}
    \Delta_k &= 
    \frac{\chi \left(1-\hat{B}_d (h_n + \tau_n)\right)}{\gamma_3 \left( \hat{B}_f (h_n)\chi - \beta \right)},\nonumber\\
    &\qquad \qquad \qquad t_k \in [h_n + \tau_n, h_{n+1} + \tau_{n+1}).
\end{align}
where $\hat{B}_d (t)$ and $\hat{B}_f (t)$ are generated by (\ref{align:hatT_A(n)})--(\ref{align:hattau_A(n)}).

The following result shows that (\ref{align:impulsive_Delta_k_initialize})--(\ref{align:impulsive_Delta_k}) can well address \textit{Problem \ref{problem:impulsive_stabilization}}.

\begin{theorem} \label{theorem:impulsive_stabilization_adaptive_interval}
    Under \textit{Assumptions \ref{assumption:not_edge_case}} and \textit{\ref{assumption:impulsive_system}}, nonlinear system (\ref{align:nonlinear_system})--(\ref{align:impulsive_invalid_control}) subject to a DoS sequence $\xi$ can be exponentially stabilized if the control intervals $\{\Delta_k\}_{k \in \mathbb{N}_+}$ are determined by (\ref{align:impulsive_Delta_k_initialize})--(\ref{align:impulsive_Delta_k}). In addition, $\Delta_k$ has a lower bound $\underline{\Delta}>0$.
\end{theorem}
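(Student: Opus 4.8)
The plan is to track a Lyapunov certificate $V$ along the hybrid trajectory and to show that, once the estimator has become reliable, the dwell time prescribed by (\ref{align:impulsive_Delta_k}) forces $V$ to decay at a strictly negative exponential rate. I would begin with the lower bound in part~(b), since it is also needed to control the number of impulses. Because $\xi$ is not an \emph{edge case}, Definition~\ref{def:edge_cases} gives $\inf\mathcal{D}(\xi)<1$ and $\inf\mathcal{F}(\xi)<+\infty$; combined with the monotone, capped form of (\ref{align:hatT_A(n)})--(\ref{align:hattau_A(n)}) this yields finite limits $\bar{B}_d \coloneqq \sup_t \hat{B}_d(t)<1$ and $\bar{B}_f \coloneqq \sup_t \hat{B}_f(t)<+\infty$. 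Rewriting (\ref{align:impulsive_Delta_k}) as $\Delta_k=(-\chi)(1-\hat{B}_d)/[\gamma_3(\beta+\hat{B}_f|\chi|)]$ shows $\Delta_k>0$ and, since $\hat{B}_d,\hat{B}_f$ are nondecreasing, that $\Delta_k$ is nonincreasing with $\Delta_k\downarrow\underline{\Delta}\coloneqq(-\chi)(1-\bar{B}_d)/[\gamma_3(\beta+\bar{B}_f|\chi|)]>0$. This proves $\Delta_k\geqslant\underline{\Delta}>0$ and also records $\underline{\Delta}\leqslant\Delta_k\leqslant\Delta_0$.

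Next I would assemble the basic Lyapunov inequality. On each flow interval $[t_k,t_{k+1})$, item~3) of Assumption~\ref{assumption:impulsive_system} gives $V(t_{k+1}^-)\leqslant V(t_k^+)e^{\beta\Delta_k}$, while at a control instant item~4) gives the contraction $V(t_k^+)\leqslant\mu V(t_k^-)$ when $t_k\in\Theta(0,+\infty)$ (successful impulse), and (\ref{align:impulsive_invalid_control}) gives $V(t_k^+)=V(t_k^-)$ when $t_k\in\Xi(0,+\infty)$ (blocked impulse). Telescoping and using $\mu=e^{\chi}$, the continuous growth accumulates to $e^{\beta t}$ over all of $[0,t]$ while only the successful impulses contribute a factor, so that $V(t)\leqslant V(0^+)\,e^{\beta t+\chi\,m_s(0,t)}$, where $m_s(0,t)$ is the number of control instants lying in $\Theta(0,t)$.

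The crux is to lower bound $m_s(0,t)$ by the duration and frequency data. Since $\Theta(0,t)$ is a union of at most $n_\xi(0,t)+1$ gap intervals of total length $|\Theta(0,t)|=t-|\Xi(0,t)|$, and consecutive control instants are eventually spaced by at most $\bar{\Delta}_\epsilon\coloneqq\underline{\Delta}(1+\epsilon)$ once the estimator has converged to within $\epsilon$, each gap loses at most one impulse at its boundary; summing gives $m_s(0,t)\geqslant|\Theta(0,t)|/\bar{\Delta}_\epsilon-n_\xi(0,t)-C$ for a constant $C$ absorbing the pre-convergence transient. I expect this counting step to be the main obstacle, because $\Delta_k$ is only asymptotically constant and the boundary losses are exactly what the frequency term $\hat{B}_f$ in (\ref{align:impulsive_Delta_k}) is designed to pay for; care is needed to verify that these $O(\epsilon)$ and $O(1)$ corrections do not erode the strict negativity obtained below.

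Finally I would substitute the reliable bounds and the design law. By Theorem~\ref{theorem:bound_reconstruction}, for $t\geqslant T$ the estimates are valid, so $|\Xi(0,t)|\leqslant\kappa+\hat{B}_d t$ and $n_\xi(0,t)\leqslant\Lambda+\hat{B}_f t$, whence $|\Theta(0,t)|\geqslant(1-\bar{B}_d)t-\kappa$. Inserting the $m_s$ bound into the Lyapunov estimate and using $\chi(1-\bar{B}_d)/\underline{\Delta}=\gamma_3(\bar{B}_f\chi-\beta)$ from the definition of $\underline{\Delta}$, the exponent becomes $\beta t+\chi\,m_s(0,t)\leqslant(\gamma_3-1)(\bar{B}_f\chi-\beta)\,t+\mathrm{const}$ up to the $O(\epsilon)$ correction; since $\gamma_3>1$ and $\bar{B}_f\chi-\beta<0$, the coefficient $(\gamma_3-1)(\bar{B}_f\chi-\beta)$ is strictly negative, so choosing $\epsilon$ small leaves a strictly negative rate $-\zeta<0$ and yields $V(t)\leqslant V(0^+)e^{\beta T^\ast}e^{-\zeta(t-T^\ast)}$ for $t$ beyond the convergence time $T^\ast$, with the bounded transient on $[0,T^\ast]$ absorbed into the constant. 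Sandwiching through item~5) of Assumption~\ref{assumption:impulsive_system}, namely $a(|X(t)|)\leqslant V(t)$ and $V(0^+)\leqslant b(|X_0|)$, then delivers an exponentially decaying bound on $|X(t)|$ of the form required by Definition~\ref{definition:exponential_stabilization}, completing part~(a).
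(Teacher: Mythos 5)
Your proposal is correct and takes essentially the same route as the paper's proof: the same Lyapunov comparison $V(t)\leqslant V(0,X_0)\,\mu^{\alpha(0,t)}e^{\beta t}$, the same counting of successful impulses inside the connected components of $\Theta(0,t)$ (at least length$/\bar{\Delta}$ minus one per component, with the loss paid for by the frequency bound, as in \textit{Lemma \ref{lemma:number_of_impulses}}), the same monotonicity argument showing $\Delta_k$ is nonincreasing with a strictly positive limit, and the same combination of \textit{Theorem \ref{theorem:bound_reconstruction}} with the slack $\gamma_3>1$ to make the decay exponent strictly negative (as in \textit{Lemma \ref{lemma:Delta_k_satisfactory_impulsive}}). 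The only difference is cosmetic: you anchor the final estimate at the asymptotic values $\bar{B}_d$, $\bar{B}_f$, $\underline{\Delta}(1+\epsilon)$, obtaining the explicit rate $\bigl(\gamma_3/(1+\epsilon)-1\bigr)\bigl(\bar{B}_f\chi-\beta\bigr)$, whereas the paper anchors it at the first reliable time $T$, taking $\bar{\Delta}=\Delta_{k(\check{t})}$ with $\check{t}=T$ and the bounds $\hat{B}_d(T)$, $\hat{B}_f(T)$.
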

\begin{proof}
    See Appendix \ref{appD}. \hfill $\square$
\end{proof}

\section{Numerical simulations} \label{sec:numerical_examples}
    In this section, we consider a DoS sequence $\xi$ with $\inf \mathcal{D} (\xi) = 2/3$ and $\inf \mathcal{F} (\xi) = 0.5$. We simulate a scenario where the attacker performs weakly on the time interval $[0,12]$, and show that the \textit{DoS estimator} (\ref{align:hatT_A(n)})--(\ref{align:hattau_A(n)}) will not be misled by this initial weak performance. In fact, a reliable estimation can still be obtained after $t = 12$, when the attack intensifies and gradually becomes consistent with the true \textit{duration} \& \textit{frequency bounds}.

    The two control scenarios in Sections \ref{sec:consensus} and \ref{sec:impulsive_stabilization} are studied in this part. Our new estimator-based controllers have satisfactory control performances in both cases.
    Particularly in the impulsive stabilization scenario, we provide two simulation results to justify the discussion in \textit{Remarks \ref{remark:parameter_tuning} }and\textit{ \ref{remark:epsilon0}}. The trade-off between security and cost-efficiency is clearly manifested by the extra simulation.
    
    The end of this section considers a special DoS sequence $\xi$ as a counterexample, illustrating the necessity of the design $\theta<1$, as discussed in \textit{Remark \ref{remark:theta}}.

\begin{example} \label{ex1}
    In this example, both scenarios in Sections \ref{sec:consensus} and \ref{sec:impulsive_stabilization} share the same \textit{DoS estimator}. In particular, we let the estimated \textit{duration} \& \textit{frequency bounds} be generated by (\ref{align:hatT_A(n)})--(\ref{align:hattau_A(n)}) with $\epsilon_0 = 0.01$, $\theta = 0.67$ and $\ell = 2$. Fig. \ref{fig:capacity_estimation_0.5} shows the estimation's performance. As can be seen, $\hat{B}_d (t)$ converges to $\mathcal{D} (\xi) = [2/3,1]$ after $n = 10$, and $\hat{B}_f (t)$ converges to $\mathcal{F} (\xi) = [0.5,+\infty]$ after only $2$ DoS launches.

    \begin{figure}[ht]
        \centering
        \includegraphics[width = 8.3cm]{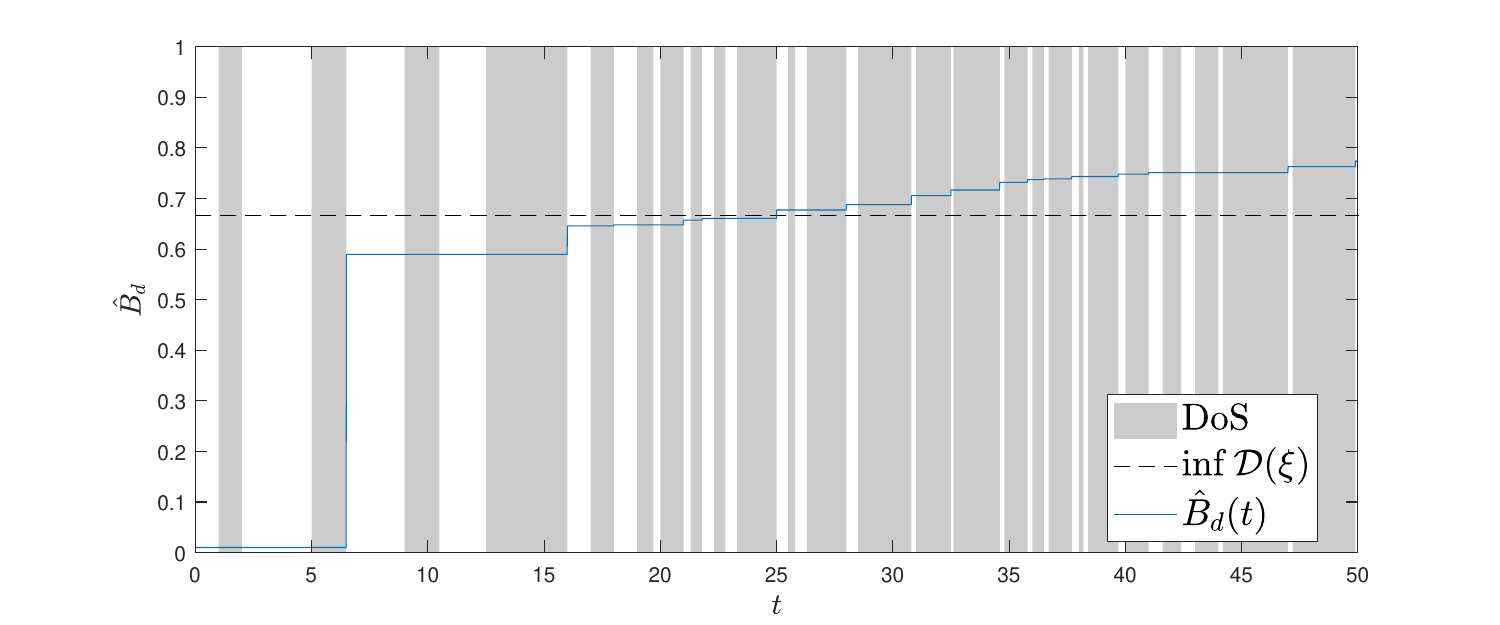}
        \includegraphics[width = 8.3cm]{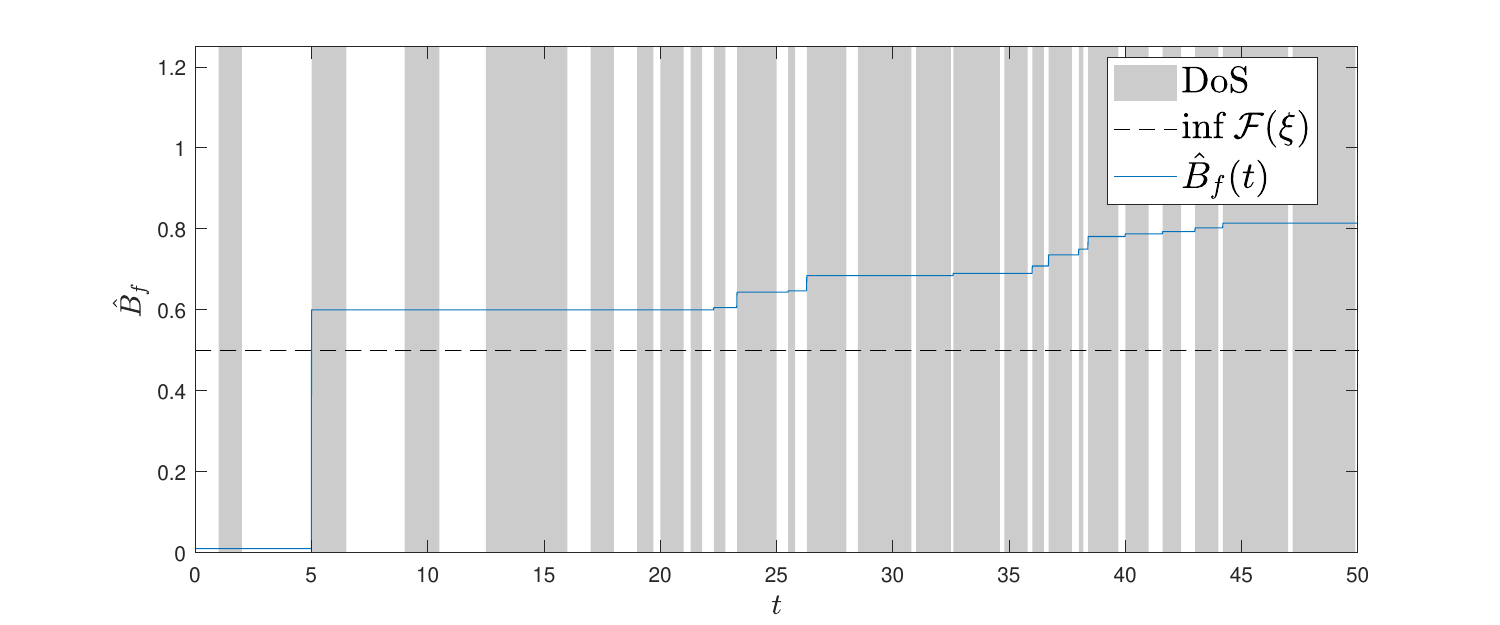}
        \caption{Top: The estimated \textit{duration-bound} $\hat{B}_d (t)$ is generated by (\ref{align:hatT_A(n)}). Bottom: The estimated \textit{frequency-bound} $\hat{B}_f (t)$ is generated by (\ref{align:hattau_A(n)}). (With $\theta = 0.67$ and $\ell = 2$.)}
        \label{fig:capacity_estimation_0.5}
    \end{figure}

    Now, we configure security controllers based on the results in Sections \ref{sec:consensus} and \ref{sec:impulsive_stabilization}. First, we study the sampled data control performance of MAS (\ref{align:MAS}) with $7$ agents and a \textit{ring topology}. By simple calculation, $\lambda_N (\mathscr{L}) = 3.8019$, and thus $\Delta_0$ should be strictly less than $0.5260$ according to \textit{Theorem \ref{theorem:consensus_MAS_adaptive_sampling}}. In addition, the initial values are randomly chosen {from} $[-10,10]$, and the average consensus value is $-3/7$. By taking $\Delta_0 = 0.4208$, $\gamma_1 = 1.3$ and letting $\Delta_k$ be generated by (\ref{align:Delta_k_initialize})--(\ref{align:Delta_k}), average consensus can be achieved as shown at the top of Fig. \ref{fig:consensus_MAS}. At the bottom of Fig. \ref{fig:consensus_MAS}, we plot the adaptive sampling intervals $\Delta_k$ generated by (\ref{align:Delta_k_initialize})--(\ref{align:Delta_k}), as well as its theoretical supremum determined by \textit{Theorem \ref{theorem:consensus_MAS_adaptive_sampling}}. As can be seen, the generated sampling sequence is over two times more frequent than necessary. Hence, this suggests that our strategy is somewhat conservative, which is consistent with our intuition because we always {consider a stronger opponent than observed}. We will discuss how to reduce such conservatism later in \textit{Example \ref{ex2}}.

    An interesting observation from Fig. \ref{fig:consensus_MAS} is that the average consensus is achieved before the estimation in Fig. \ref{fig:capacity_estimation_0.5} becomes reliable. This phenomenon can be attributed to the unique dynamics of MASs described by (\ref{align:MAS})--(\ref{align:MAS_ui}): According to the first line of (\ref{align:MAS_ui}), all agents maintain constant states and do not diverge from each other when transmissions are denied by the attack. Conversely, according to the second line, agents converge towards the average value when transmissions are successful. This dual behavior enables the system to achieve average consensus with a sufficient number of successful transmissions, even when the estimator is not yet fully effective.

    \begin{figure}[ht]
        \centering
        \includegraphics[width = 8.3cm]{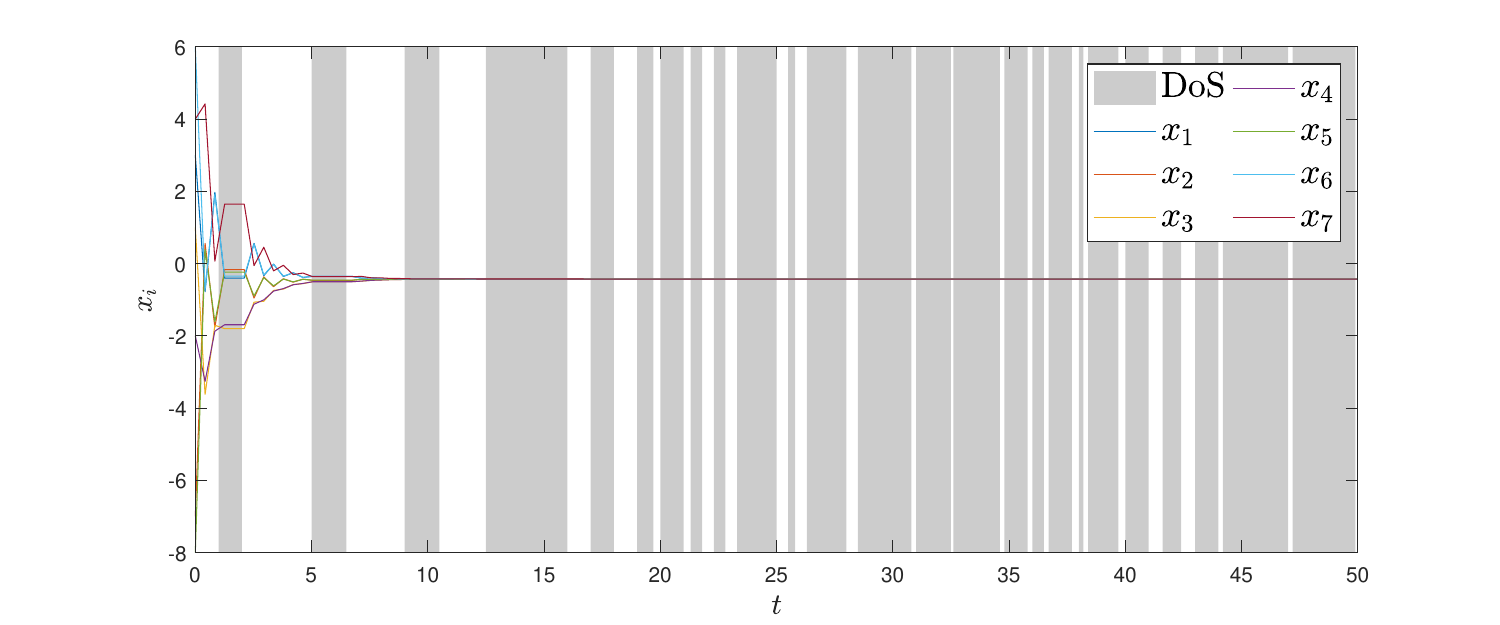}
        \includegraphics[width = 8.3cm]{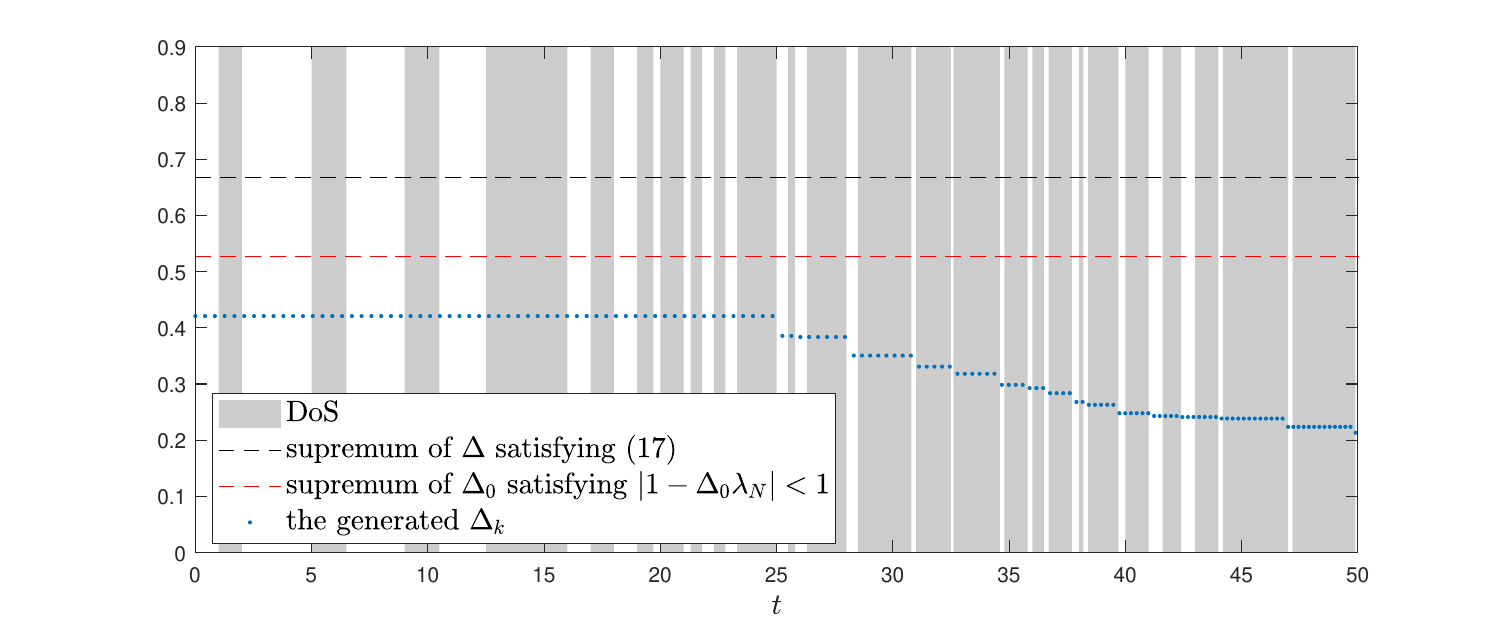}
        \caption{Top: Consensus of MAS (\ref{align:MAS}) regulated by the sample-data controller (\ref{align:MAS_ui}).
        Bottom: The values of $\Delta_k$ generated by (\ref{align:Delta_k_initialize})--(\ref{align:Delta_k}) at each sampling instance, as well as the theoretical supremum of the intervals determined by \textit{Theorem \ref{theorem:consensus_MAS_adaptive_sampling}}.}
        \label{fig:consensus_MAS}
    \end{figure}

    Next, we study the impulsive control performance of the nonlinear system (\ref{align:nonlinear_system}) with a $2$-dimensional state and $f(t,X) = A X$. Here, we have an unstable matrix $A = \begin{bmatrix}
        1 & 0.3 \\
        0 & 1
    \end{bmatrix}$ and the impulsive input $U(X) = 0.7 X$. By choosing $V(t,X) = \lvert X \rvert$, the parameters in \textit{Assumption \ref{assumption:impulsive_system}} are obtained as $\mu = 0.7$ and $\beta = \sqrt{\lambda_{\max} (A^T A)} = 1.1612$. Taking $\gamma_3 = 1.2$ and letting $\Delta_k$ be generated by (\ref{align:impulsive_Delta_k_initialize})--(\ref{align:impulsive_Delta_k}), impulsive stabilization can be achieved as shown at the top of Fig. \ref{fig:stabilization_impulsive_nonlinear_05}. 

    \begin{figure}[ht]
        \centering
        \includegraphics[width = 8.3cm]{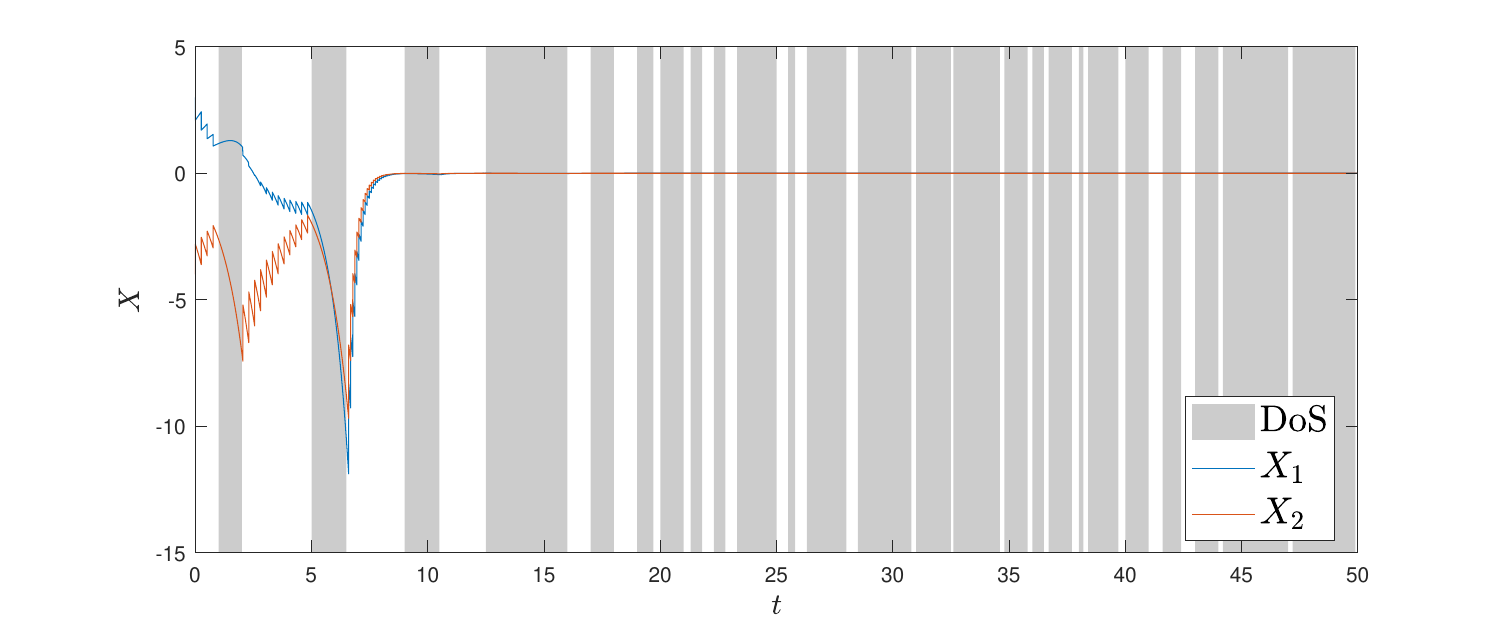}
        \includegraphics[width = 8.3cm]{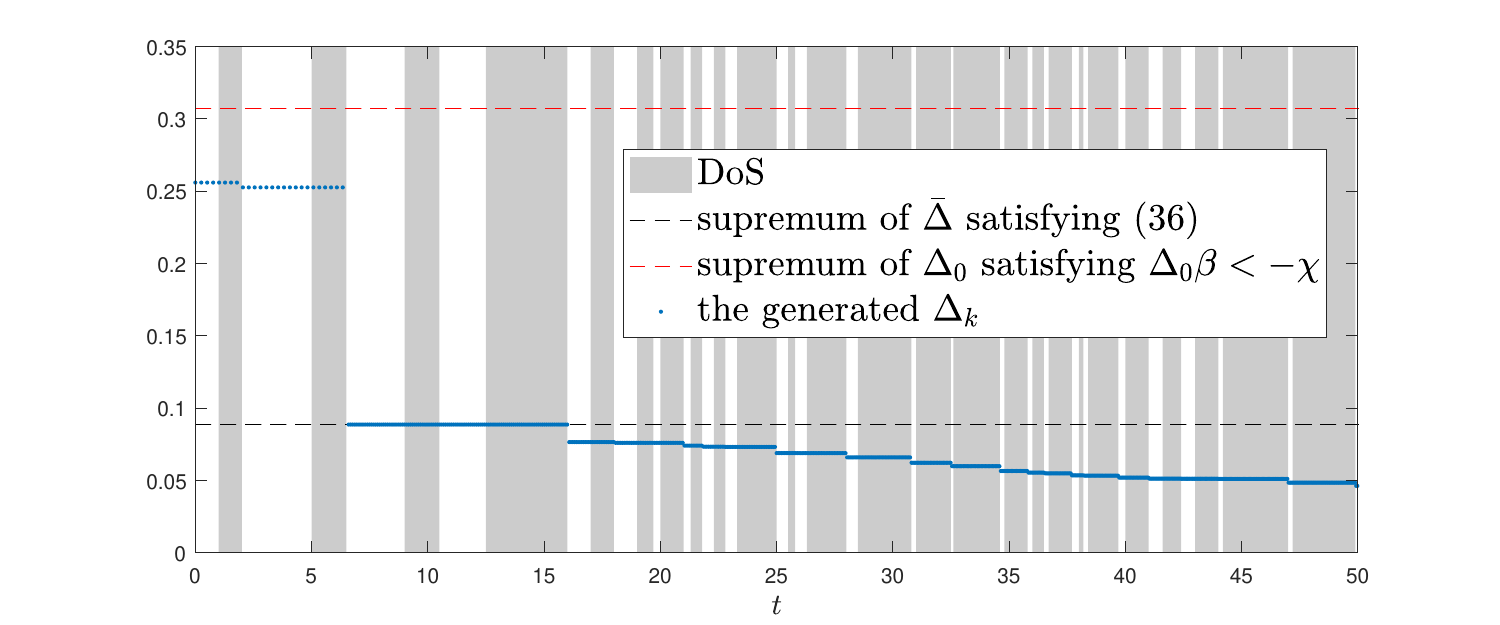}
        \caption{Top: Stabilization of system (\ref{align:nonlinear_system}) regulated by the impulsive controller (\ref{align:impulsive_control})--(\ref{align:impulsive_invalid_control}).
        Bottom: The values of $\Delta_k$ generated by (\ref{align:impulsive_Delta_k_initialize})--(\ref{align:impulsive_Delta_k}) at each control instants, as well as the theoretical supremum of the intervals determined by \textit{Theorem \ref{theorem:impulsive_stabilization_adaptive_interval}}.}
        \label{fig:stabilization_impulsive_nonlinear_05}
    \end{figure}
\end{example}

\begin{example} \label{ex2}
    In this example, the DoS sequence is the same as in \textit{Example \ref{ex1}}, and we consider two other pairs of values of $\theta$ and $\ell$, specifically $(\theta, \ell) = (0.9,2)$ and $(\theta, \ell) = (0.9,3)$.
    As can be seen in Fig. \ref{fig:comparison_estimated_bound}, the estimator generates tighter \textit{duration} \& \textit{frequency bounds} of the DoS sequence $\xi$ when either $\theta$ or $\ell$ increases. This is because a larger $\theta$ means being less skeptical about the historical data $B_d (i)$ and $B_f (i)$, and a larger $\ell$ means waiting for more data before starting the estimation. Both of these adjustments enable the estimated bounds to more closely approximate the actual values, resulting in a more precise characterization of the DoS sequence's behavior.
    
    \begin{figure}[ht]
        \centering
        \includegraphics[width = 8.3cm]{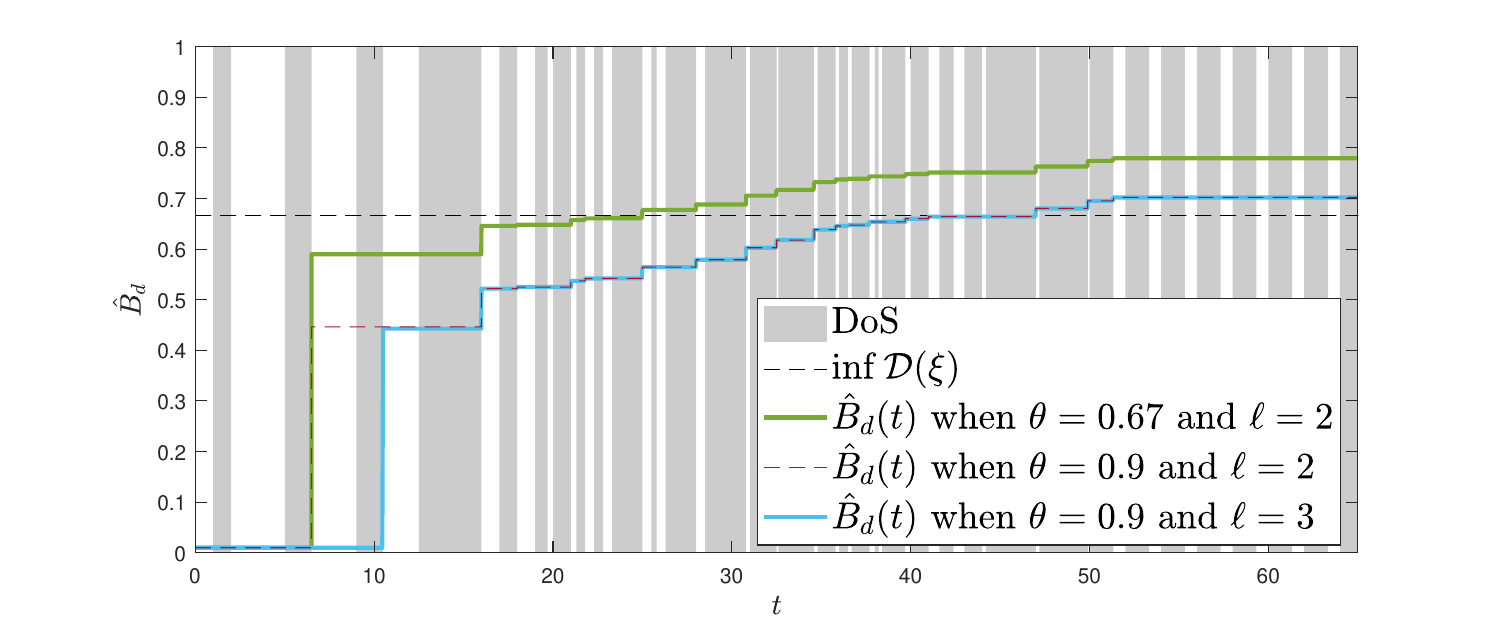}
        \includegraphics[width = 8.3cm]{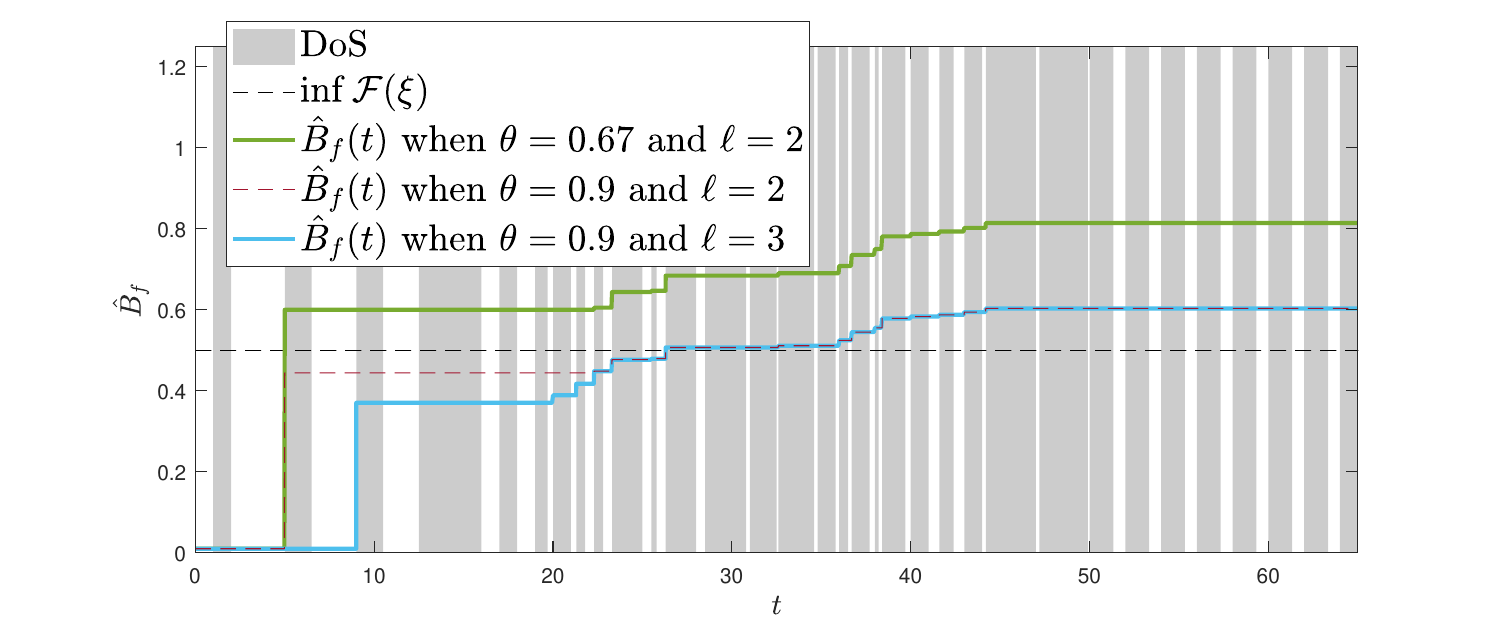}
        \caption{Estimations of $\xi$'s \textit{duration-bound} (top) \& \textit{frequency-bound} (bottom) with different values of $\theta$ and $\ell$.}
        \label{fig:comparison_estimated_bound}
    \end{figure}

    In \textit{Remark \ref{remark:parameter_tuning}}, we have claimed that transient performance might be sacrificed as the trade-off for tighter DoS bounds. In order to demonstrate this, we use the two estimators with $(\theta, \ell) = (0.9,2)$ and $(\theta, \ell) = (0.9,3)$, respectively, to generate the impulsive control sequences and compare the transient responses of the system (\ref{align:nonlinear_system}). For comparison fairness, the system is set to be the same as in \textit{Example \ref{ex1}} except for the control instants $t_k$. As shown in Fig. \ref{fig:stabilization_impulsive_nonlinear_083}, an increase in $\theta$ or $\ell$ can result in larger intervals $\Delta_k$, while the system takes more time to achieve stabilization. Moreover, for the pair $(\theta, \ell) = (0.9,3)$, the state peak is near $500$, which is an abnormal value for many practical systems.
    
    \begin{figure}[ht]
        \centering
        \includegraphics[width = 8.3cm]{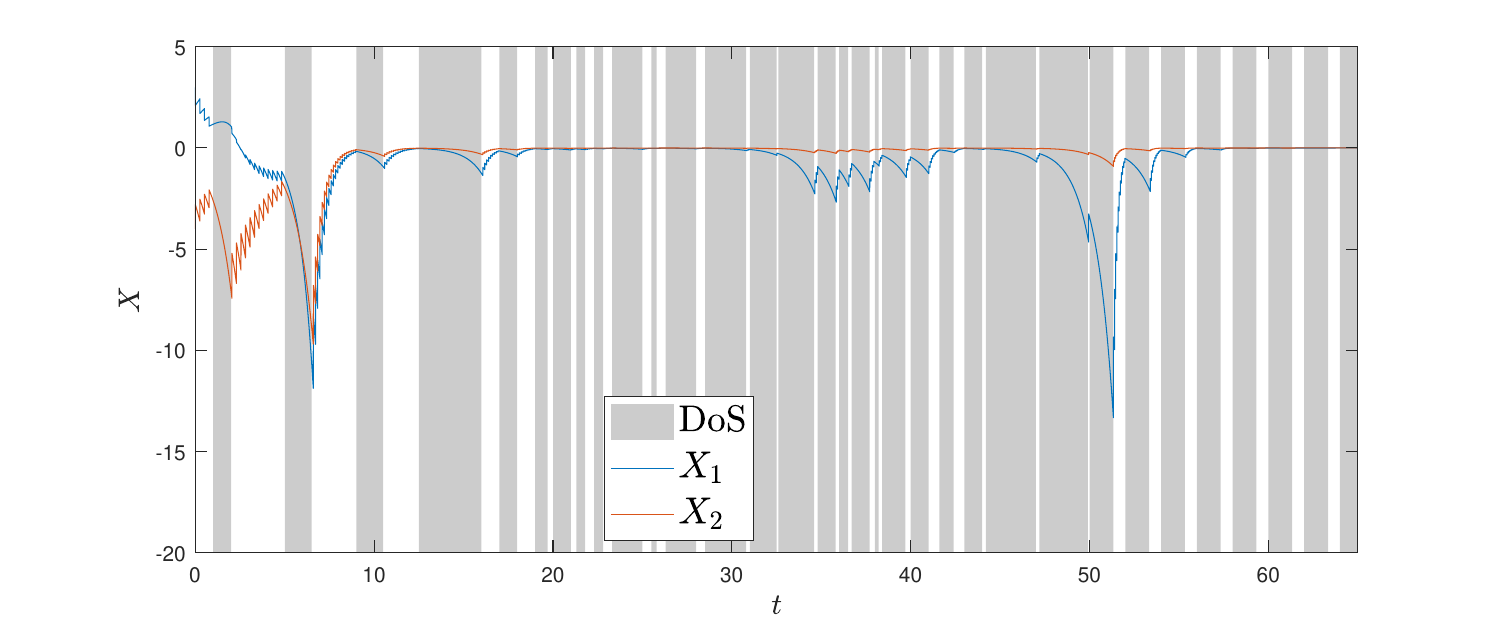}
        \includegraphics[width = 8.3cm]{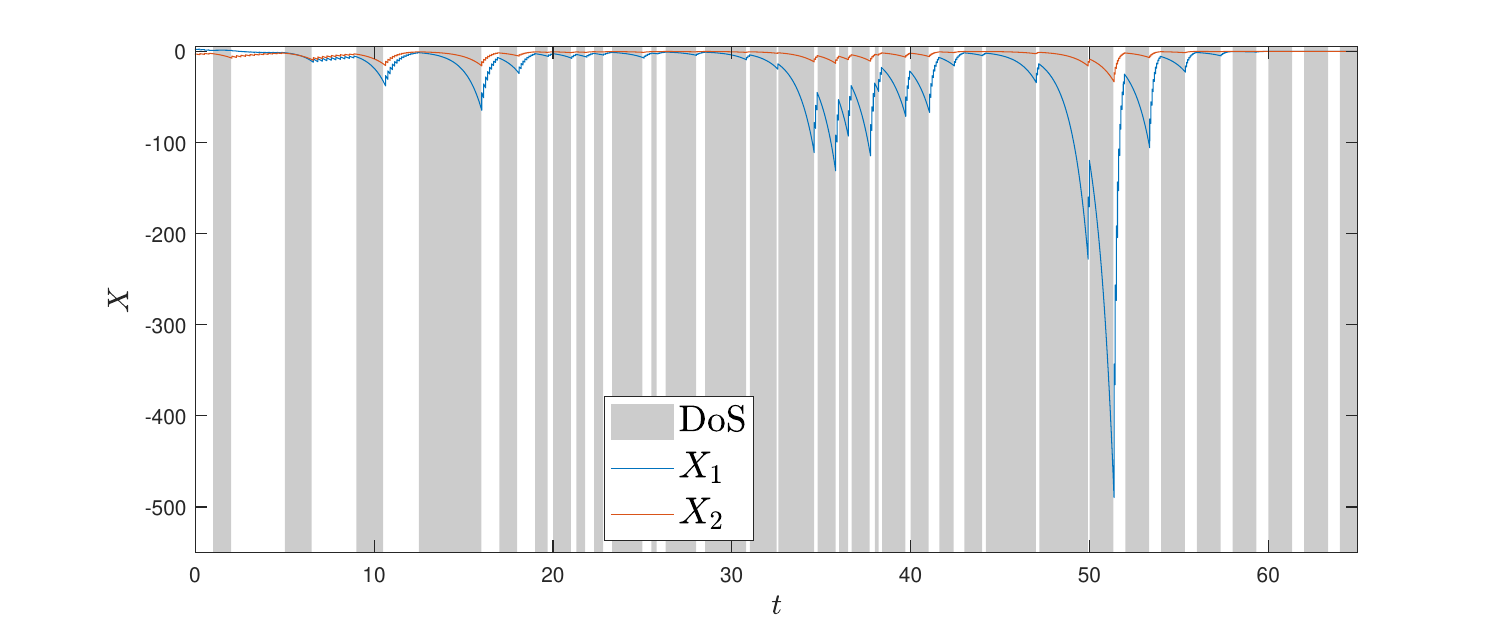}
        \includegraphics[width = 8.3cm]{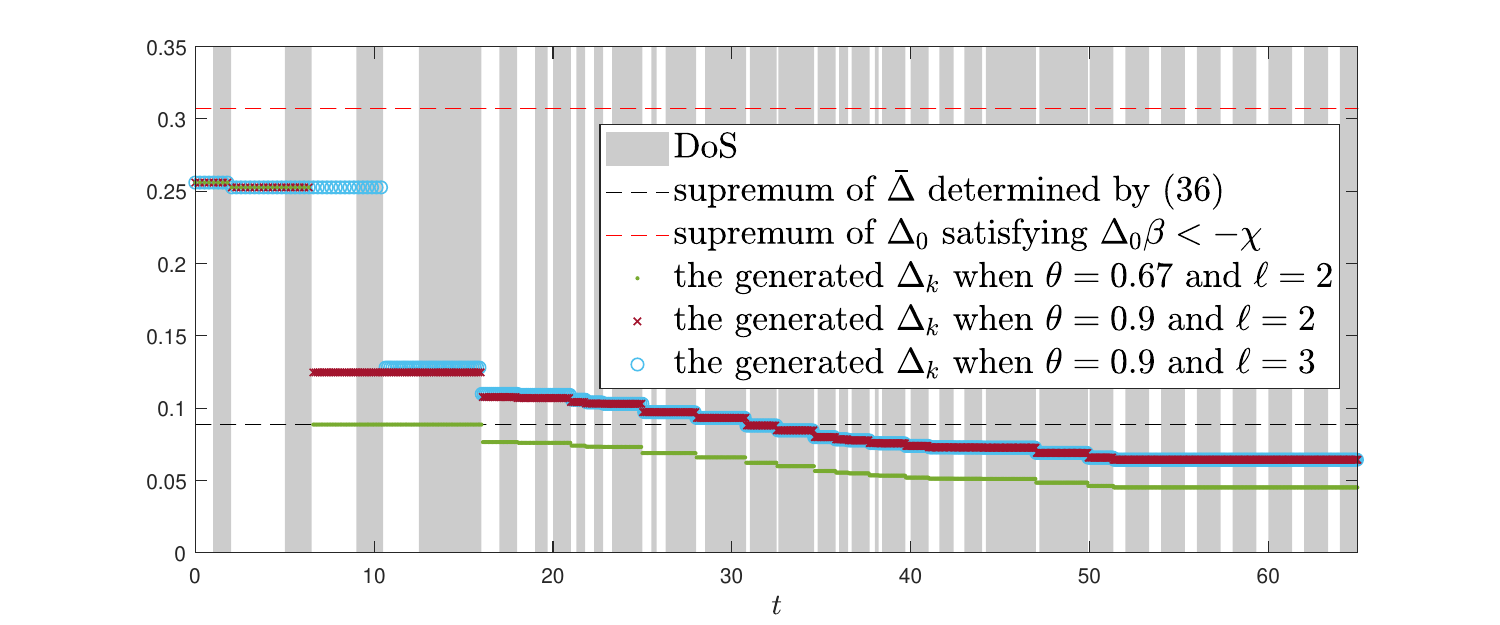}
        \caption{Top and middle: The transient response of system (\ref{align:nonlinear_system}) under impulsive control (\ref{align:impulsive_control})--(\ref{align:impulsive_invalid_control}). The control sequences are generated using \textit{DoS estimators} with $(\theta,\ell) = (0.9,2)$ and $(\theta,\ell) = (0.9,3)$, respectively.
        Bottom: The values of $\Delta_k$ generated at each control instant for different pairs of $(\theta,\ell)$.}
        \label{fig:stabilization_impulsive_nonlinear_083}
    \end{figure}
\end{example}
    
    The reason why we do not compare the case of MAS consensus is that it is not sensitive to the \textit{DoS estimator}. More specifically, we can see from Fig. \ref{fig:consensus_MAS} that
    the range of $\Delta_k$ determined by $\lvert 1 - \Delta_0 \lambda_N(\mathscr{L}) \rvert < 1$ (the region below the red dotted line) is a subset of that determined by (\ref{align:Delta_known_capacity}) (the region below the black dotted line). For this reason, the initial value $\Delta_0$ is already small enough to defend against the DoS attack and achieve consensus. As $\Delta_k \leqslant \Delta_0$, the follow-up values of $\Delta_k$ become inconsequential.

    \begin{example} \label{ex_epsilon0}
        In this example, we justify \textit{Remark \ref{remark:epsilon0}} by considering $(\theta, \ell, \epsilon_0) = (0.9,3,0.2)$. Compared with the sub-graph in the middle of Fig. \ref{fig:stabilization_impulsive_nonlinear_083}, where $\epsilon_0 = 0.01$, the transient response in Fig. \ref{fig:epsilon0} is significantly reduced. In addition, when $\epsilon_0 = 0.2$, the value of $\Delta_k$ is smaller for $t_k \in [2,10.5]$, meaning that more control inputs are required on this time interval as the trade-off.
        \begin{figure}[ht]
        \centering
        \includegraphics[width = 8.3cm]{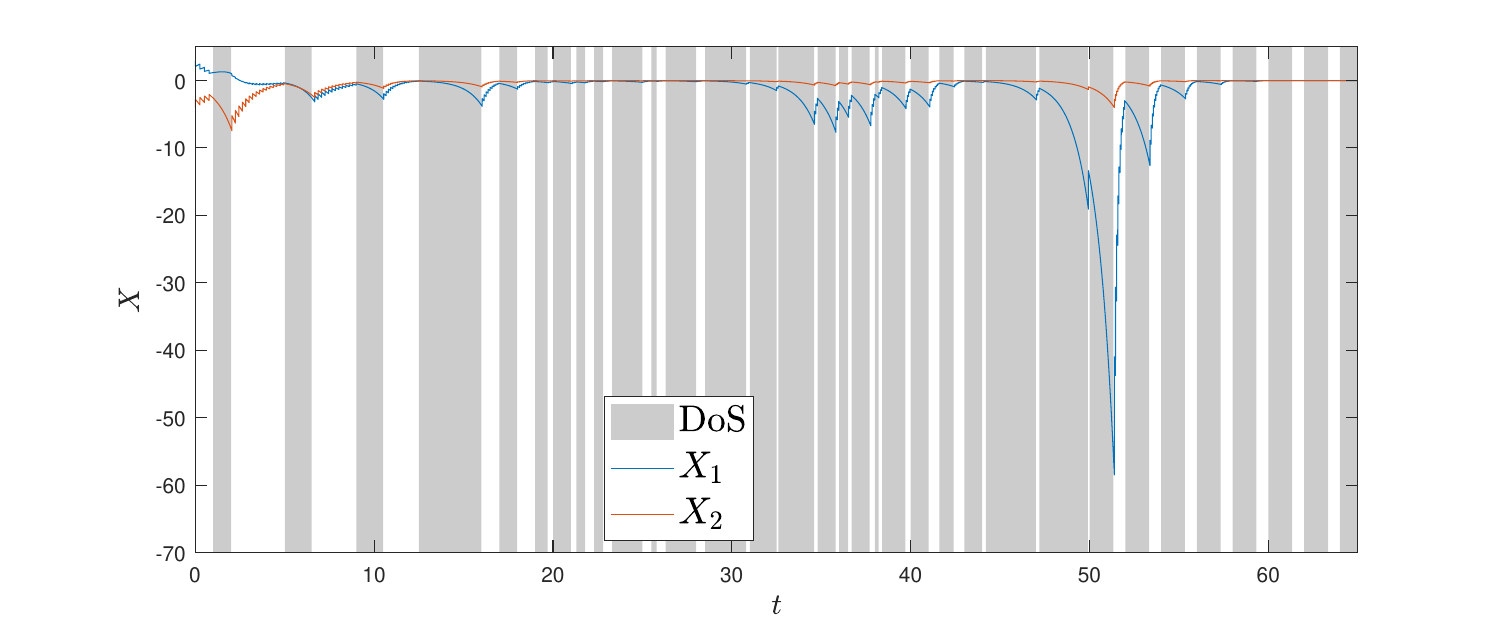}
        \includegraphics[width = 8.3cm]{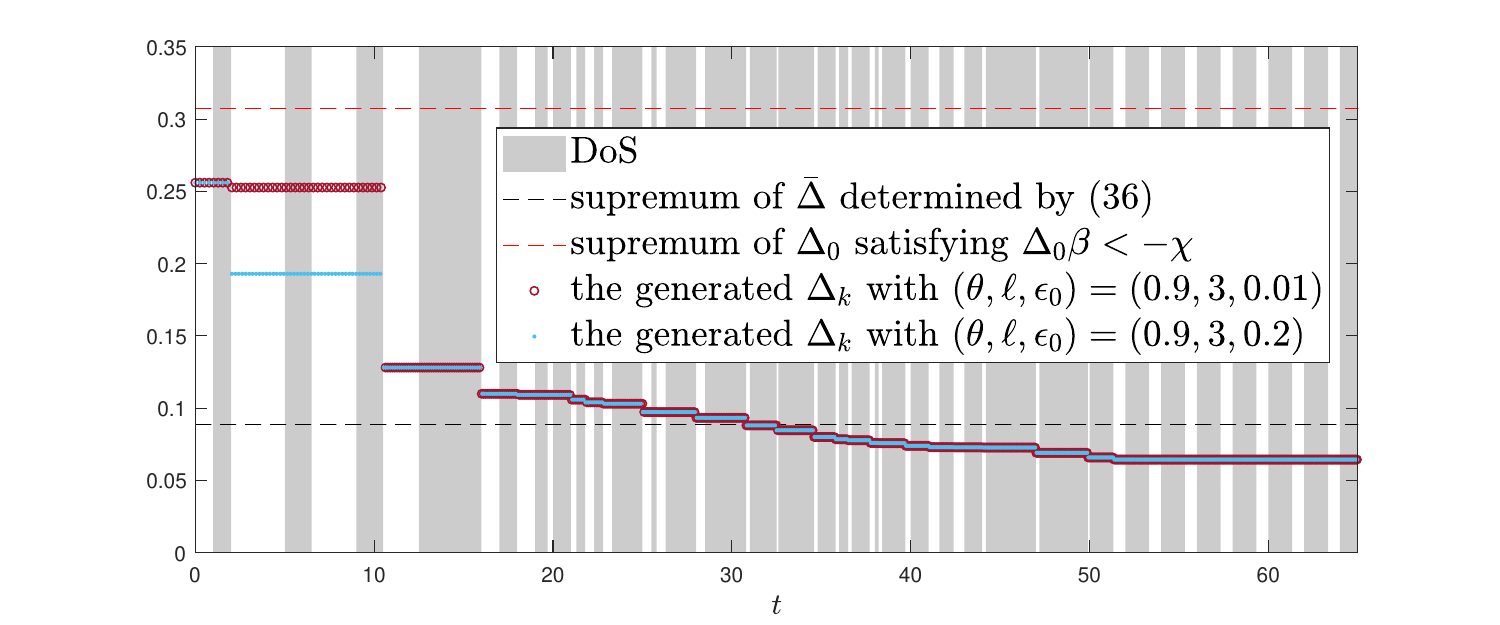}
        \caption{Top: The transient response with $(\theta, \ell, \epsilon_0) = (0.9,3,0.2)$. Bottom: The generated $\Delta_k$ with different values of $\epsilon_0$.}
        \label{fig:epsilon0}
    \end{figure}
    \end{example}

\begin{example} \label{ex3}
    Here, we provide a counterexample to illustrate that $\theta=1$ can lead to the wrong estimation, as discussed in \textit{Remark \ref{remark:theta}}. Specifically, we consider $\xi = \{[2n+1,2n+2)\}_{n \in \mathbb{N}_+}$, and it can be observed that $\inf \mathcal{D} (\xi) = 0.5$ and $ \inf \mathcal{F} (\xi) = 0.5$. If we set $\theta = 1$ in \textit{DoS estimator} (\ref{align:hatT_A(n)})--(\ref{align:hattau_A(n)}), then the estimation $\hat{B}_d (t)$ \& $\hat{B}_f (t)$ is generated as in Fig. \ref{fig:counterexample}. The figure clearly shows the incorrect estimation of $\xi$'s \textit{duration} \& \textit{frequency bounds}.
    \begin{figure}[ht]
        \centering
        \includegraphics[width = 8.3cm]{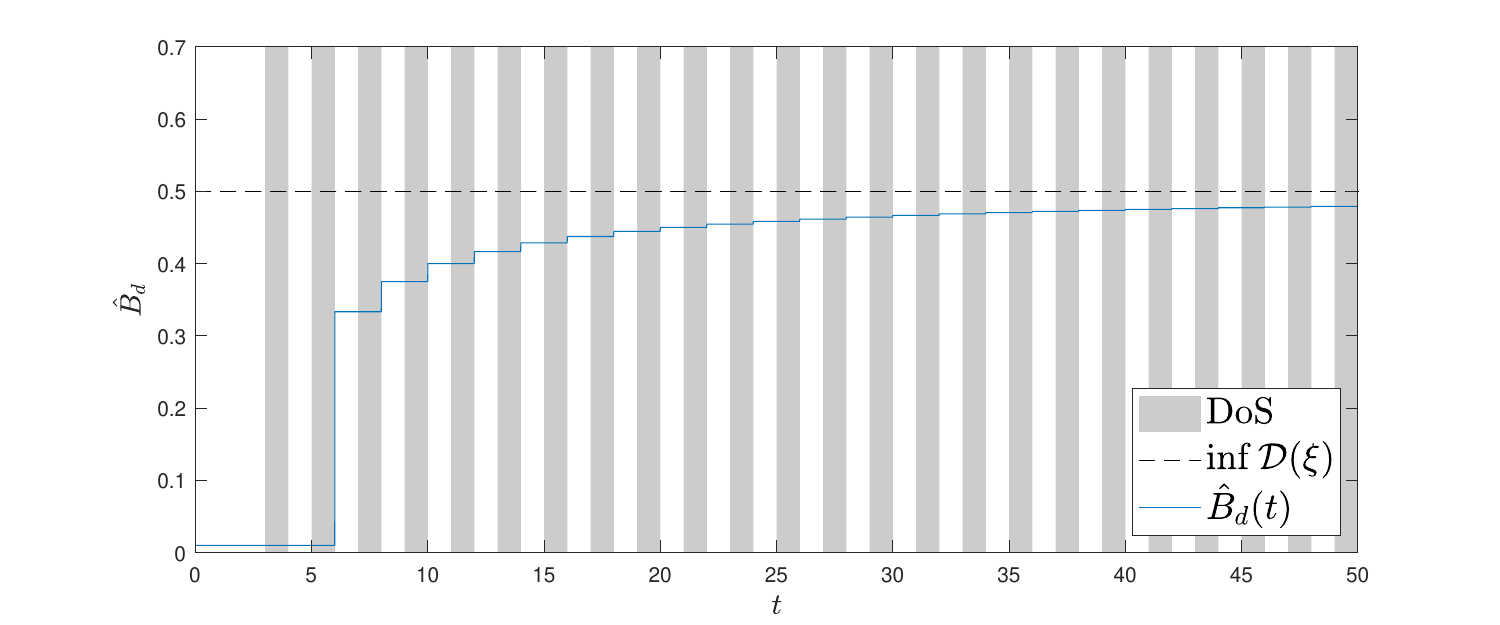}
        \includegraphics[width = 8.3cm]{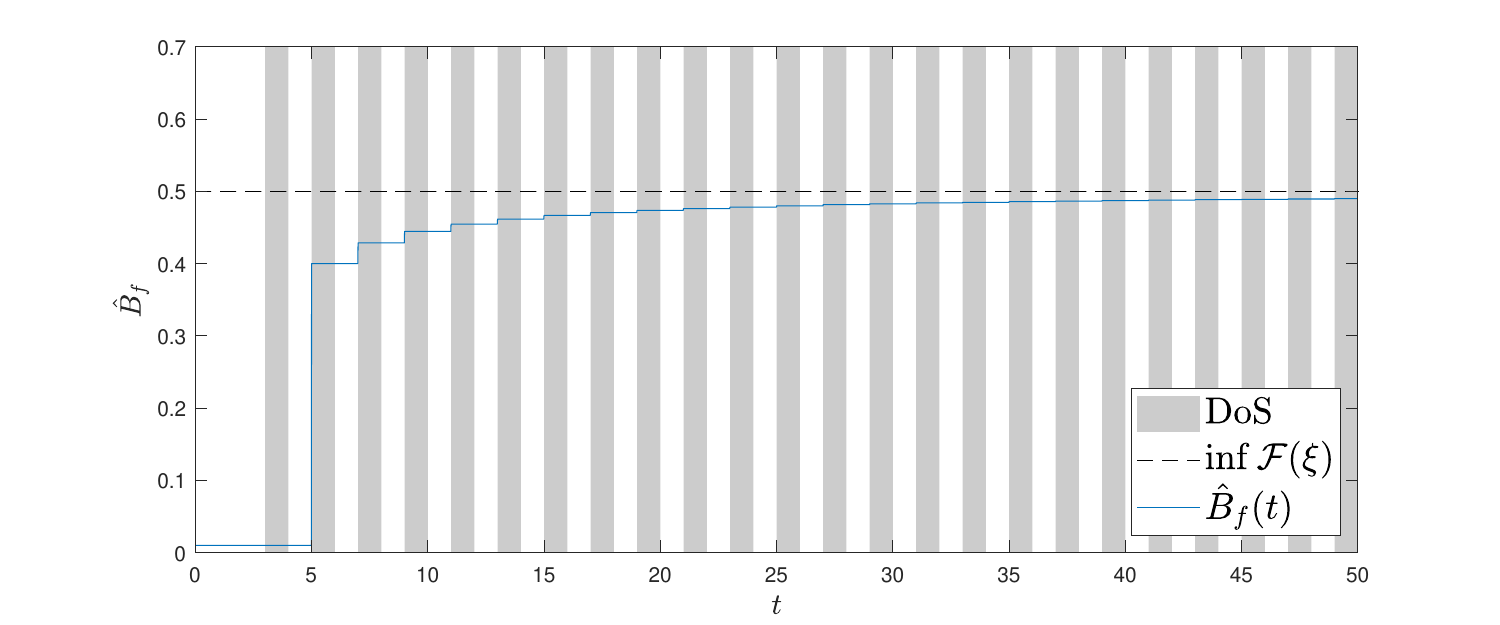}
        \caption{The incorrect estimation of the \textit{duration-bound} and \textit{frequency-bound} by setting $\theta = 1$ and $\ell = 2$.}
        \label{fig:counterexample}
    \end{figure}

    The reason why we fail to estimate the bounds is intuitive: An attacker can hide its strength by behaving as if it is weak, but setting $\theta = 1$ means not having any doubt about the attacker's behavior.
    The necessity of $\theta<1$ suggests that the defender must always be stronger than the attacker in order to defend successfully. Simply matching the attacker's strength could result in failure.
\end{example}

\section{Conclusion}
This paper has addressed the real-time estimation problem of DoS duration and frequency, enabling security control when the attacker is previously unknown to the defender. A new \textit{DoS estimator} has been developed to identify the duration and frequency parameters of any DoS attacker except for three \textit{edge cases}, and it has been successfully applied to two specific control scenarios for illustration.

Future directions include addressing errors in detecting DoS attacks, as accurate detection is essential for the \textit{DoS estimator} (\ref{align:hatT_A(n)})--(\ref{align:hattau_A(n)}). In addition, \textit{Theorem \ref{theorem:upper_bound_of_T}} relies on a priori attacker information to quantify the time required for reliable estimation. It would be valuable to address this issue under milder assumptions.

\appendix
\subsection{Proof of Lemma \ref{lemma:function_limit}} \label{appA}
We first prove (\ref{align:duration_function_limit}). From (\ref{align:Dxi}), we know that, for $\inf \mathcal{D} (\xi) < 1$ and any $\varepsilon \in (0,1- \inf \mathcal{D} (\xi)]$, $\inf \mathcal{D} (\xi) + \varepsilon$ is a \textit{duration-bound} of $\xi$. Thus for any $\varepsilon > 0$, $\exists \, \kappa (\varepsilon) \in  (0,+\infty)$ such that
\begin{align*}
    \lvert \Xi (0,t) \rvert \leqslant \kappa (\varepsilon) + (\inf \mathcal{D} (\xi) + \varepsilon){t}, \; \forall \, t \geqslant 0.
\end{align*}
The inequality above is also true for $\inf \mathcal{D} (\xi) = 1$. Therefore, 
\begin{align*}
\limsup_{t \rightarrow + \infty}{\frac{\lvert \Xi (0,t) \rvert}{t}} \leqslant \inf \mathcal{D} (\xi) + \varepsilon.
\end{align*}
Due to the arbitrariness of $\varepsilon$, we have
\begin{align*}
    \limsup_{t \rightarrow + \infty}{\frac{\lvert \Xi (0,t) \rvert}{t}} \leqslant \inf \mathcal{D} (\xi).
\end{align*}
But if the above inequality is replaced by a strict one, then we can find an $\varepsilon_0 > 0$ that satisfies
\begin{align*}
    \limsup_{t \rightarrow + \infty}{\frac{\lvert \Xi (0,t) \rvert}{t}} < \inf \mathcal{D} (\xi) - \varepsilon_0.
\end{align*}
As a result, there must exist a $T^\ast(\varepsilon_0) \in (0, +\infty)$, such that
\begin{align*}
    \frac{\lvert \Xi (0,t) \rvert}{t} &< \inf \mathcal{D} (\xi) - \frac{\varepsilon_0}{2},\; \forall \, t \geqslant T^\ast.
\end{align*}
Therefore, for any $t\geqslant 0$, we have
\begin{align*}
    \lvert \Xi (0,t) \rvert \leqslant T^\ast + \left(\inf \mathcal{D} (\xi) - \frac{\varepsilon_0}{2}\right)t.
\end{align*}
This, however, means that $\inf \mathcal{D} (\xi) - (\varepsilon_0 /2)$ is a \textit{duration-bound} of $\xi$ and is therefore a contradiction. Thus, (\ref{align:duration_function_limit}) is true.

We can prove (\ref{align:frequency_function_limit}) in a similar way, and the only difference is the case of $\inf \mathcal{F} (\xi) = +\infty$. In such case, we claim that
\begin{align*}
    \limsup_{t \rightarrow + \infty}{\frac{n_{\xi} (0,t)}{t}} = +\infty = \inf \mathcal{F} (\xi).
\end{align*}
Otherwise, we can find $\varepsilon_0 > 0$ and $T^\ast (\varepsilon_0) >0$ such that
\begin{align*}
    \frac{n_{\xi} (0,t)}{t} < \frac{1}{\varepsilon_0}, \; \forall \, t \geqslant T^\ast.
\end{align*}
Thus, we have
\begin{align*}
    n_{\xi} (0,t)
    &\leqslant n_{\xi} (0,T^\ast) + \frac{t}{\varepsilon_0}\\
    &= \frac{n_{\xi} (0,T^\ast)}{T^\ast}T^\ast + \frac{t}{\varepsilon_0}\\
    &< \frac{T^\ast}{\varepsilon_0} + \frac{t}{\varepsilon_0}, \; \forall \, t \geqslant 0,
\end{align*}
which means that $\frac{1}{\varepsilon_0}$ is a \textit{frequency-bound} of $\xi$. This is a contradiction.

If $\inf \mathcal{F} (\xi) < + \infty$, then given any $\varepsilon > 0$, 
$\inf \mathcal{F} (\xi) + \varepsilon$ must be a \textit{frequency-bound} of $\xi$. Due to the arbitrariness of $\varepsilon$, this infers that
\begin{align*}
    \limsup_{t \rightarrow + \infty}{\frac{n_{\xi} (0,t)}{t}} \leqslant \inf \mathcal{F} (\xi).
\end{align*}
But if the above inequality is replaced by a strict one, then we can find an $\varepsilon_0 > 0$, such that
\begin{align*}
    \limsup_{t \rightarrow + \infty}{\frac{n_{\xi} (0,t)}{t}} < \inf \mathcal{F} (\xi) - \varepsilon_0.
\end{align*}
This is a contradiction because it infers that $\inf \mathcal{F} (\xi) - \frac{\varepsilon_0}{2}$ is a \textit{frequency-bound}. Hence, (\ref{align:frequency_function_limit}) is true, and the proof is completed.

\subsection{Proof of Theorem \ref{theorem:bound_reconstruction}} \label{appB}

In view of \textit{Assumption \ref{assumption:not_edge_case}}, $\tau_n$ must have a finite upper bound. For this reason, we have $\inf \mathcal{D} (\xi) = \inf \mathcal{F} (\xi) = 0$ when there are only finite $H_n$s in $\xi$. In that case, the conclusion is trivial with $T = 0$ because $\hat{B}_d (t) \geqslant \hat{B}_d (0) > 0$ and $\hat{B}_f (t) \geqslant \hat{B}_f (0) > 0$.

From now on, we assume that $\xi$ is infinite, and we need the following auxiliary lemma to proceed.
\begin{lemma} \label{lemma:sequence_limit}
    Under \textit{Assumption \ref{assumption:not_edge_case}}, it holds that
    \begin{align} \label{align:duration_sequence_limit}
    \limsup_{n \rightarrow \infty}{\frac{\lvert \Xi (0,h_n+\tau_n) \rvert}{h_n + \tau_n}} = \inf \mathcal{D} (\xi)
    \end{align}
    and
    \begin{align} \label{align:frequency_sequence_limit}
        \limsup_{n \rightarrow \infty}{\frac{n}{h_n}} = \inf \mathcal{F} (\xi).
    \end{align}
\end{lemma}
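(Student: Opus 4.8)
The plan is to reduce the discrete limit superiors in (\ref{align:duration_sequence_limit})--(\ref{align:frequency_sequence_limit}) to the \emph{continuous} ones already evaluated in \textit{Lemma \ref{lemma:function_limit}}, by exploiting the piecewise monotonicity of the two ratio functions. Define $g(t) \coloneqq \frac{\lvert \Xi(0,t) \rvert}{t}$ and $G(t) \coloneqq \frac{n_\xi(0,t)}{t}$ for $t > 0$. Since $\xi$ is infinite and, by \textit{Assumption \ref{assumption:not_edge_case}}, neither edge case ii) nor iii) holds, the switching instants satisfy $h_n \uparrow +\infty$ while $\tau_n$ stays uniformly bounded; hence both $\{h_n\}$ and $\{h_n+\tau_n\}$ are subsequences tending to $+\infty$. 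Writing $a_n \coloneqq g(h_n + \tau_n)$ and $b_n \coloneqq G(h_n) = n/h_n$, the goal reduces to proving $\limsup_n a_n = \limsup_t g(t)$ and $\limsup_n b_n = \limsup_t G(t)$, after which (\ref{align:duration_function_limit})--(\ref{align:frequency_function_limit}) close the argument.

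First I would analyze the shape of $g$. On an \textit{on}-interval $[h_n, h_n+\tau_n]$ the measure grows at unit rate, so $g(t) = 1 + \frac{\lvert\Xi(0,h_n)\rvert - h_n}{t}$ with $\lvert\Xi(0,h_n)\rvert - h_n \leq 0$; hence $g$ is non-decreasing there and attains its local maximum $a_n$ at the right endpoint. On an \textit{off}-interval $[h_n+\tau_n, h_{n+1}]$ the numerator is constant, so $g$ is non-increasing. Consequently, on each full cycle $[h_n+\tau_n, h_{n+1}+\tau_{n+1})$ one has $g(t) \leq \max\{a_n, a_{n+1}\}$, which upgrades to the exact tail identity
\begin{align*}
    \sup_{\tau \geq h_n + \tau_n} g(\tau) = \sup_{m \geq n} a_m ,
\end{align*}
the reverse inequality being immediate since each $a_m$ is a value of $g$ at $\tau = h_m + \tau_m \geq h_n + \tau_n$. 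Letting $n \to \infty$ gives $\limsup_t g(t) = \limsup_n a_n$, and comparing with (\ref{align:duration_function_limit}) proves (\ref{align:duration_sequence_limit}).

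The frequency identity is handled identically, with an even simpler profile: $n_\xi(0,t)$ is a step function, constant with value $n$ on each $[h_n, h_{n+1})$ and jumping up by one at every $h_n$. Thus $G(t) = n/t$ is non-increasing on $[h_n, h_{n+1})$ and peaks at $t = h_n$ with value $b_n = n/h_n$, which yields $\sup_{\tau \geq h_n} G(\tau) = \sup_{m \geq n} b_m$ and therefore $\limsup_t G(t) = \limsup_n b_n$; invoking (\ref{align:frequency_function_limit}) establishes (\ref{align:frequency_sequence_limit}).

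I expect the main obstacle to be the careful bookkeeping that turns the qualitative observation ``peaks sit at the right endpoints of \textit{on}-intervals, respectively at the jump instants $h_n$'' into the precise supremum identity over tails, rather than merely the one-sided subsequence inequality $\limsup_n a_n \leq \limsup_t g(t)$ that follows trivially. The sign condition $\lvert\Xi(0,h_n)\rvert \leq h_n$ forcing monotonicity of $g$ on \textit{on}-intervals, together with the verification that $h_n \to +\infty$ so that no mass of the tail limit superior is lost, are the two points that genuinely require \textit{Assumption \ref{assumption:not_edge_case}} and must be checked rather than assumed.
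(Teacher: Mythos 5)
Your proof is correct and essentially the paper's own argument: the paper likewise gets the easy direction from $\{h_n+\tau_n\}$ and $\{h_n\}$ being subsequences tending to $+\infty$ (via \textit{Assumption \ref{assumption:not_edge_case}}), and the hard direction by dominating $\lvert \Xi(0,t)\rvert/t$ and $n_\xi(0,t)/t$ on each $[h_n,h_{n+1})$ by the endpoint values $\frac{\lvert\Xi(0,h_n+\tau_n)\rvert}{h_n+\tau_n}$ and $\frac{n}{h_n}$ --- packaged there as majorant step functions $\varphi$ and $\psi$ --- before invoking \textit{Lemma \ref{lemma:function_limit}}. Your piecewise-monotonicity computation (non-decreasing on \textit{on}-intervals since $\lvert\Xi(0,h_n)\rvert\leqslant h_n$, non-increasing on \textit{off}-intervals) merely makes explicit the domination inequality the paper asserts without derivation, and your exact tail-supremum identity is a harmless strengthening of it.
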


\begin{proof}
    The proof is based on \textit{Lemma \ref{lemma:function_limit}}. First, we prove (\ref{align:duration_sequence_limit}) by showing that
    \begin{align} \label{align:duration_seq_leq_func}
        \limsup_{n \rightarrow \infty}{\frac{\lvert \Xi (0,h_n+\tau_n) \rvert}{h_n + \tau_n}} \leqslant \limsup_{t \rightarrow + \infty}{\frac{\lvert \Xi (0,t) \rvert}{t}}
    \end{align}
    and
    \begin{align} \label{align:duration_seq_geq_func}
        \limsup_{n \rightarrow \infty}{\frac{\lvert \Xi (0,h_n+\tau_n) \rvert}{h_n + \tau_n}} \geqslant \limsup_{t \rightarrow + \infty}{\frac{\lvert \Xi (0,t) \rvert}{t}}.
    \end{align}
    Recalling \textit{Assumption \ref{assumption:not_edge_case}} and the fact that $\xi$ is infinite, there must hold that $h_n \stackrel{n \rightarrow \infty}{\longrightarrow} +\infty$. Therefore, (\ref{align:duration_seq_leq_func}) is trivial.

    To prove (\ref{align:duration_seq_geq_func}), we will construct a function $\varphi(t)$ such that
    \begin{align*}
        \varphi(t) \geqslant \frac{\lvert \Xi (0,t) \rvert}{t},
    \end{align*}
    and show that
    \begin{align*}
        \limsup_{n \rightarrow \infty}{\frac{\lvert \Xi (0,h_n+\tau_n) \rvert}{h_n+\tau_n}} \geqslant \limsup_{t \rightarrow + \infty}{\varphi(t)}.
    \end{align*}
    Specifically, we define $\varphi (t)$ as
    \begin{eqnarray*}
        \varphi (t) \coloneqq \left\{
        \begin{array}{ll}
            0, & t \in [0,h_1), \\
            \frac{\lvert \Xi (0,h_n+\tau_n) \rvert}{h_n + \tau_n}, & t \in [h_n,h_{n+1}).
        \end{array}
        \right.
    \end{eqnarray*}
    Then, it is clear that
    $$\limsup_{n \rightarrow \infty}{\frac{\lvert \Xi (0,h_n+\tau_n) \rvert}{h_n+\tau_n}} \geqslant \limsup_{t \rightarrow + \infty}{\varphi(t)}.$$
    On the other hand, note that
    $${\frac{\lvert \Xi (0,t) \rvert}{t}} = 0,
    \,\forall \, t \in [0,h_1),$$
    and
    $$\frac{\lvert \Xi (0,h_{n}+\tau_{n}) \rvert}{h_{n}+\tau_{n}} \geqslant {\frac{\lvert \Xi (0,t) \rvert}{t}},
    \,\forall \, t \in [h_{n},h_{n+1}).$$
    Thus, we can conclude that $\varphi(t) \geqslant \frac{\lvert \Xi (0,t) \rvert}{t}$ and (\ref{align:duration_seq_geq_func}) therefore holds. In view of (\ref{align:duration_seq_leq_func}), (\ref{align:duration_seq_geq_func}), and \textit{Lemma \ref{lemma:function_limit}}, then (\ref{align:duration_sequence_limit}) is true.

    The proof of (\ref{align:frequency_sequence_limit}) is quite similar. Since $h_n \rightarrow +\infty$, we clearly have
    \begin{align*}
        \limsup_{n \rightarrow \infty}{\frac{n}{h_n}} = \limsup_{n \rightarrow \infty}{\frac{n_\xi (0,h_n)}{h_n}} \leqslant \limsup_{t \rightarrow +\infty}{\frac{n_\xi (0,t)}{t}}.
    \end{align*}
    On the other hand, if we construct
    \begin{eqnarray*}
        \psi (t) \coloneqq \left\{
        \begin{array}{ll}
            \frac{1}{h_1}, & t \in [0,h_1), \\
            \frac{n}{h_n}, & t \in [h_n,h_{n+1}),
        \end{array}
        \right.
    \end{eqnarray*}
    it is easily derived that 
    \begin{align*}
    \psi (t) \geqslant \frac{n_\xi (0,t)}{t}
    \end{align*}
    and
    \begin{align*}
        \limsup_{t \rightarrow +\infty}{\psi(t)} = \limsup_{n \rightarrow \infty}{\frac{n}{h_n}}.
    \end{align*}
    Consequently, we have by \textit{Lemma \ref{lemma:function_limit}} that
    \begin{align*}
        \limsup_{n \rightarrow \infty}{\frac{n}{h_n}} = \limsup_{t \rightarrow +\infty}\frac{n_\xi (0,t)}{t} = \inf \mathcal{F} (\xi),
    \end{align*}
    and the proof of \textit{Lemma \ref{lemma:sequence_limit}} is completed. \hfill $\square$
\end{proof}

Now, we can continue the proof of \textit{Theorem \ref{theorem:bound_reconstruction}} when $\xi$ is infinite.
Recalling that $\hat{B}_{d} (h_n + \tau_n)$ and $\hat{B}_{f} (h_n)$ take the maximum values from $0$ to $n$, they are clearly non-decreasing in $n$. Hence, it suffices to find a finite $N_1 \geqslant \ell$ that makes $\hat{B}_{d} (h_{N_1} + \tau_{N_1})$ \big(resp., $\hat{B}_{f} (h_{N_1})$\big) the \textit{duration-bound} (resp., \textit{frequency-bound}) of $\xi$.

First, we show the existence of a finite integer $N_2$, such that $\hat{B}_{d} (h_{N_2} + \tau_{N_2})$ is a \textit{duration-bound} of $\xi$. Assume by contradiction, that is, for any $n \geqslant \ell$ and any $\Tilde{\kappa}>0$, there exists a $t^\ast (n, \Tilde{\kappa}) \geqslant 0$ satisfying
\begin{align} \label{T1.1}
    \lvert \Xi (0,t^\ast) \rvert
    &> \Tilde{\kappa} + \hat{B}_{d} (h_n + \tau_n)\,{t^\ast} \nonumber \\
    &\geqslant \Tilde{\kappa} + \left(\max_{\ell \leqslant i \leqslant n}\{\theta B_d (i) + (1-\theta)\} \right) t^\ast\nonumber \\
    &\geqslant \Tilde{\kappa} + \left( \theta B_d (n) + (1-\theta) \right) {t^\ast}.
\end{align}
In view of \textit{Lemma \ref{lemma:sequence_limit}}, there exists a subsequence $\{n_k\}_{k \in \mathbb{N}_+} \subseteq \{n\}_{n \in \mathbb{N}_+}$, such that
\begin{align*}
    \lim_{k \rightarrow \infty}{B_d (n_k)} = \inf \mathcal{D} (\xi).
\end{align*}
Thus, for $\varepsilon_1 \coloneqq \frac{\left(1-\inf \mathcal{D} (\xi)\right) (1-\theta)}{2} > 0$, we can pick an $n_0 (\varepsilon_1) \in \{n_k\}_{k \in \mathbb{N}_+}$ to satisfy
\begin{align*}
    \theta B_d (n_0) + (1 - \theta) &> \theta \inf \mathcal{D} (\xi) + (1-\theta) - \varepsilon_1 \\
    &= \inf \mathcal{D} (\xi) + \varepsilon_1.
\end{align*}
Hence, it follows from (\ref{T1.1}) that, for any $\Tilde{\kappa} > 0$ and $n = n_0$, there exists a $t^\ast(n_0, \Tilde{\kappa}) \geqslant 0$ which satisfies
\begin{align*}
    \lvert \Xi (0,t^\ast) \rvert > \Tilde{\kappa} + \left(  \inf \mathcal{D} (\xi) + \varepsilon_1 \right){t^\ast}.
\end{align*}
This is a contradiction because $\inf \mathcal{D} (\xi) + \varepsilon_1$ is a \textit{duration-bound} of $\xi$. The existence of $N_2$ is therefore proved.

Analogously, we can find an $N_3 \geqslant \ell$ such that $\hat{B}_{f} (h_{N_3})$ is a \textit{frequency-bound} of $\xi$. In fact, by taking $\varepsilon_2 \coloneqq \frac{(1-\theta) \inf \mathcal{F} (\xi)}{2\theta} > 0$, we can use \textit{Lemma \ref{lemma:sequence_limit}} to find an $n_0$ that satisfies
\begin{align*}
\frac{n_0}{\theta h_{n_0}} > \frac{\inf \mathcal{F} (\xi)}{\theta} - \varepsilon_2 = \inf \mathcal{F} (\xi) + \varepsilon_2.
\end{align*}
If the desired $N_3$ does not exist, it will be similarly derived for any $\Tilde{\Lambda} > 0$ that there exists a $t^\ast (n_0,\Tilde{\Lambda}) \geqslant 0$ satisfying
\begin{align*}
    n_\xi (0,t^\ast) > \Tilde{\Lambda} + \left( \inf \mathcal{F} (\xi) + \varepsilon_2 \right){t^\ast}.
\end{align*}
This is a contradiction.

Taking $N_1 = \max\{N_2,N_3\}$ and $T = h_{N_1} + \tau_{N_1}$ (which must be finite because of \textit{Assumption \ref{assumption:not_edge_case}}), the proof of \textit{Theorem \ref{theorem:bound_reconstruction}} is completed.

\subsection{Proof of Theorem \ref{theorem:consensus_MAS_adaptive_sampling}} \label{appC}

The proof will be divided into two parts. In the first part we show that the system (\ref{align:MAS}) achieves consensus if $\Delta_k$ stays within a certain range after some finite instant. In the second part, we further show that $\Delta_k$ designed in (\ref{align:Delta_k_initialize}) and (\ref{align:Delta_k}) converges into the required range within finite time and remains there forever.

The following lemma corresponds to the first part.
\begin{lemma} \label{lemma:Delta_k_condition}
    Consider the sampled data MAS (\ref{align:MAS})--(\ref{align:MAS_ui}) subject to a DoS sequence $\xi$ satisfying \textit{Assumption \ref{assumption:not_edge_case}}. The system achieves consensus if there exist finite constants $\check{t}$, $\underline{\Delta}$, and $\bar{\Delta}$ such that,
    \begin{eqnarray} \label{align:Delta_k_range}
    \left\{
    \begin{array}{lcl}
        0 < \underline{\Delta} \leqslant \Delta_k < +\infty, & \forall \, t_k \geqslant 0, \\
        \Delta_k \leqslant \bar{\Delta}, & \forall \, t_k \geqslant \check{t}.
    \end{array}
    \right.
    \end{eqnarray}
    Here, $\bar{\Delta}$ should satisfy
    \begin{align} \label{align:barDelta1}
        \lvert 1 - \bar{\Delta} \lambda_N (\mathscr{L}) \rvert < 1
    \end{align}
    and
    \begin{align} \label{align:barDelta2}
        B_d + B_f \bar{\Delta} < 1,
    \end{align}
    where $B_d$ and $B_f$ are, respectively, the \textit{duration} \& \textit{frequency bounds} of $\xi$.
\end{lemma}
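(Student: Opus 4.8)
The plan is to track the disagreement vector $\delta(t) \coloneqq x(t) - \bar{x}\,\mathbf{1}_N$, where $x = (x_1,\dots,x_N)^T$ and $\bar{x} \coloneqq \frac{1}{N}\mathbf{1}_N^T x(0)$. First I would record the closed-loop update over one interval $[t_k,t_{k+1})$ of length $\Delta_k$: since $u_i$ is held constant there, a successful transmission ($t_k \in \Theta$) gives $x(t_{k+1}) = (I - \Delta_k \mathscr{L})x(t_k)$, while a denied one ($t_k \in \Xi$) gives $x(t_{k+1}) = x(t_k)$. Because $\mathscr{L}$ is symmetric with $\mathscr{L}\mathbf{1}_N = 0$, the quantity $\mathbf{1}_N^T x$ is invariant under both updates, so $\bar{x}$ is preserved and $\delta(t)\perp\mathbf{1}_N$ for all $t$; hence it suffices to prove $\delta(t)\to 0$, which is exactly the average consensus of \textit{Definition \ref{definition:average_consensus}}. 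On $\mathbf{1}_N^\perp$ the operator $I - \Delta_k\mathscr{L}$ has eigenvalues $1-\Delta_k\lambda_i$, $i\geqslant 2$, so a successful step contracts $\lVert\delta\rVert$ by the factor $\rho(\Delta_k)\coloneqq\max_{2\leqslant i\leqslant N}\lvert 1-\Delta_k\lambda_i\rvert$, while a denied step leaves $\lVert\delta\rVert$ unchanged; a brief check of the intermediate values $t\in(t_k,t_{k+1})$ confirms $\lVert\delta\rVert$ never increases between samples.

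Next I would convert the per-sample contraction into a uniform per-unit-time decay. For $t_k \geqslant \check{t}$ we have $\underline{\Delta}\leqslant \Delta_k\leqslant\bar{\Delta}$, and $\underline{\Delta}>0$ together with (\ref{align:barDelta1}) forces $0<\Delta_k<2/\lambda_N\leqslant 2/\lambda_i$, so every factor obeys $\lvert 1-\Delta_k\lambda_i\rvert<1$. By continuity and compactness of $[\underline{\Delta},\bar{\Delta}]$ this yields a uniform bound $\rho^\ast\coloneqq\sup_{\Delta\in[\underline{\Delta},\bar{\Delta}]}\rho(\Delta)<1$, whence $\rho(\Delta_k)\leqslant e^{-c\Delta_k}$ with $c\coloneqq-(\ln\rho^\ast)/\bar{\Delta}>0$. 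Telescoping from $\check{t}$ to $t$ and using that only successful intervals contract, I obtain
\begin{align*}
    \lVert\delta(t)\rVert \leqslant \exp\!\big(-c\, S_\Theta(\check{t},t)\big)\,\lVert\delta(\check{t})\rVert,
\end{align*}
where $S_\Theta(\check{t},t)$ is the total length of the successful sampling intervals inside $[\check{t},t]$. The contribution of $[0,\check{t}]$ is harmless because $\lVert\delta\rVert$ is non-increasing throughout.

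The crux is then a linear lower bound on $S_\Theta(\check{t},t)$, and this is where the duration and frequency bounds enter. Writing $S_\Xi(\check{t},t)$ for the complementary total length of denied sampling intervals, I would argue that within each DoS window $H_n$ of length $\tau_n$ the denied intervals cover at most $\tau_n+\bar{\Delta}$ of time, since the final denied interval overshoots the window by at most one step $\bar{\Delta}$. Summing over the $n_\xi$ windows meeting $[\check{t},t]$ and invoking \textit{Definitions \ref{def:duration}}--\textit{\ref{def:frequency}}, this gives $S_\Xi(\check{t},t)\leqslant \lvert\Xi(0,t)\rvert + \bar{\Delta}\,n_\xi(0,t) + C \leqslant (B_d + \bar{\Delta}B_f)\,t + C'$ for finite constants $C,C'$ (finiteness using that $\tau_n$ is bounded under \textit{Assumption \ref{assumption:not_edge_case}}). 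Since $S_\Theta = (t-\check{t}) - S_\Xi$ up to one partial interval, condition (\ref{align:barDelta2}) delivers $S_\Theta(\check{t},t)\geqslant (1-B_d-\bar{\Delta}B_f)\,t - C'' = \eta\, t - C''$ with $\eta>0$. Substituting back shows $\lVert\delta(t)\rVert\leqslant C''' e^{-c\eta t}\to 0$, establishing consensus.

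I expect the main obstacle to be precisely the denied-time estimate of the last paragraph. A naive count of denied \emph{samples} carries a spurious factor $\bar{\Delta}/\underline{\Delta}$ and would only close the argument under the stronger hypothesis $B_d\bar{\Delta}/\underline{\Delta}+B_f\bar{\Delta}<1$. The correct bound must instead measure denied \emph{time} per DoS window (roughly $\tau_n+\bar{\Delta}$), so that the total is governed by $\lvert\Xi\rvert+\bar{\Delta}\,n_\xi$; this is exactly what makes (\ref{align:barDelta2}) the sharp threshold guaranteeing $\eta>0$, and separating the roles of $B_d$ (window lengths) and $B_f$ (window count) is the delicate point of the whole proof.
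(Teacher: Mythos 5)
Your proof is correct, and its first half coincides with the paper's: the same error vector $e = x - \bar{x}\,\mathbf{1}_N$ (invariance of $\bar{x}$, orthogonality to $\mathbf{1}_N$), the same switched recursion distinguishing $t_k \in \Xi$ from $t_k \in \Theta$, and the same spectral contraction on $\mathbf{1}_N^{\perp}$ --- the paper extracts the uniform factor $\gamma_2 \coloneqq \max\{\lvert 1-\underline{\Delta}\lambda_2(\mathscr{L})\rvert,\lvert 1-\bar{\Delta}\lambda_N(\mathscr{L})\rvert\}<1$ directly from the interval endpoints where you invoke compactness of $[\underline{\Delta},\bar{\Delta}]$, but these are the same bound. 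The genuine divergence is in the second half. The paper contracts per successful sample, writes $\lvert e(t_m)\rvert = \gamma_2^{\alpha(\check{t},t_m)-1}\lvert e(t_{k(\check{t})})\rvert$, and then simply cites lemma 3 of \cite{DePersis_DoS_nonlinear_scl2016} to conclude $\alpha(\check{t},t_m)\to+\infty$ under (\ref{align:barDelta2}); it proves no counting estimate inside this proof (although its Appendix D, \textit{Lemma \ref{lemma:number_of_impulses}}, establishes essentially your estimate for the impulsive case via connected components of $\Theta$). You instead prove the accounting inline --- denied time per DoS window is at most $\tau_n + \bar{\Delta}$, hence $S_\Xi \leqslant \lvert\Xi(0,t)\rvert + \bar{\Delta}\,n_\xi(0,t) + C$ --- and you measure successful \emph{time} rather than successful \emph{count}, via $\rho(\Delta_k)\leqslant e^{-c\Delta_k}$ (valid since $e^{-c\Delta_k} = (\rho^\ast)^{\Delta_k/\bar{\Delta}} \geqslant \rho^\ast$). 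This buys a self-contained argument and an explicit exponential rate $e^{-c\eta t}$, strictly more than the paper's purely asymptotic conclusion; the two accountings are dual, since your $S_\Theta \geqslant (1-B_d-\bar{\Delta}B_f)\,t - C''$ and the paper's $\alpha \geqslant \frac{t}{\bar{\Delta}}(1-B_d-\bar{\Delta}B_f) - C_1$ are the same window-plus-overshoot estimate expressed in time and in count, respectively. Your closing remark correctly identifies why counting denied samples naively would degrade the threshold.

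Two small repairs. First, your claim that $\lvert\delta\rvert$ is non-increasing on $[0,\check{t}]$ is unjustified: the hypothesis (\ref{align:Delta_k_range}) does not bound $\Delta_k$ by $\bar{\Delta}$ before $\check{t}$, and for $\Delta_k > 2/\lambda_N(\mathscr{L})$ the map $I-\Delta_k\mathscr{L}$ is expansive on $\mathbf{1}_N^{\perp}$. The conclusion survives for the paper's reason: $\underline{\Delta}>0$ and $\check{t}<+\infty$ imply only finitely many sampling steps, each with finite gain, occur before the first sampling instant $t_{k(\check{t})}\geqslant\check{t}$, so $\lvert\delta(t_{k(\check{t})})\rvert$ is finite; start your telescoping there and absorb the prefix into the constant. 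Second, the boundedness of $\tau_n$ (exclusion of edge case iii)) is not actually needed for your $S_\Xi$ bound --- after truncating at $t$, the duration bound $\lvert\Xi(0,t)\rvert\leqslant\kappa+B_d\,t$ already controls the window lengths, so that parenthetical can be dropped.
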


\begin{proof}
    In order to recast the system (\ref{align:MAS}) into a compact form, we introduce some notations first. We denote by $\bar{x} \coloneqq \frac{1}{N} \sum_{i=1}^{N} x_i$ the mean value of the $N$ agents, and by $e \coloneqq (x_1 - \bar{x}, x_2 - \bar{x}, \cdots, x_N - \bar{x})^T$ the consensus error vector.
    Notably, we have $\bar{x} = \frac{1}{N} \sum_{i=1}^{N} x_{i}(0)$ because $\dot{\bar{x}} = 0$.
    In view of (\ref{align:MAS}) and (\ref{align:MAS_ui}), we can derive that
    \begin{eqnarray*}
    e (t_{k+1}) = 
        \left\{
        \begin{array}{lcl}
            e(t_k), & \text{if } t_k \in \Xi(0,+\infty) \\
            (I - \Delta_k \mathscr{L}) e(t_k), & \text{if } t_k \in \Theta(0,+\infty) 
        \end{array}
        \right.
    \end{eqnarray*}
    It is easy to verify that $e \perp \mathbf{1}_N$, and therefore
    \begin{align} \label{align:matrix_norm}
        &\quad \lvert (I-\Delta_k \mathscr{L}) e(t_k) \rvert \nonumber \\
        &\leqslant \max \left\{\lvert 1 - \Delta_k \lambda_2 (\mathscr{L}) \rvert,\,\lvert 1 - \Delta_k \lambda_N (\mathscr{L}) \rvert \right\} \cdot \lvert e(t_k)\rvert \nonumber \\
        &\leqslant \max \left\{\lvert 1 - \underline{\Delta} \lambda_2 (\mathscr{L}) \rvert,\,\lvert 1 - \bar{\Delta} \lambda_N (\mathscr{L}) \rvert \right\} \cdot \lvert e(t_k)\rvert \nonumber \\
        &\eqqcolon \gamma_2 \lvert e(t_k)\rvert <  \lvert e(t_k)\rvert,\quad \forall \, t_k \geqslant \check{t}.
    \end{align}
    Here, the second inequality follows from (\ref{align:Delta_k_range}), and the inequality $\gamma_2 \lvert e(t_k)\rvert <  \lvert e(t_k)\rvert$ follows from (\ref{align:barDelta1}). If we denote by $t_{k(\check{t})}$ the first transmission attempt no earlier than $\check{t}$, and by $\alpha(\tau,s)$ the number of successful transmissions on $[\tau,s]$, it can be derived from (\ref{align:matrix_norm}) that
    \begin{align*}
        \lvert e(t_m) \rvert &= \gamma_2^{\alpha(\check{t},t_m)-1} \lvert e(t_{k(\check{t})}) \rvert,\,\forall \, m > k(\check{t}).
    \end{align*}
    Since $\check{t}$ is finite and $0< \underline{\Delta} \leqslant \Delta_k < +\infty$, we have finite $t_{k(\check{t})}$ and $ \lvert e(t_{k(\check{t})}) \rvert$. On the other hand, when $\Delta_k \leqslant \bar{\Delta}$ and $B_d + B_f \bar{\Delta} < 1$, it follows from lemma 3 in \cite{DePersis_DoS_nonlinear_scl2016} that $\alpha(\check{t},t_m) \rightarrow +\infty$ as $t_m \rightarrow +\infty$. Together with $\gamma_2 <1$, this yields that $\lvert e(t_m) \rvert \stackrel{t_m \rightarrow +\infty}{\longrightarrow} 0$ and consequently $e(t) \stackrel{t \rightarrow +\infty}{\longrightarrow} 0$.
    Hence, the average consensus is achieved. \hfill $\square$
\end{proof}

Next, we present another lemma to show that $\Delta_k$ designed in (\ref{align:Delta_k_initialize}) and (\ref{align:Delta_k}) converges into the range determined by (\ref{align:Delta_k_range})--(\ref{align:barDelta2}) within finite time.
\begin{lemma} \label{lemma:Delta_k_satisfactory}
    Under \textit{Assumption \ref{assumption:not_edge_case}}, there exists a finite $\check{t} \geqslant 0$, such that $\Delta_k$ generated by (\ref{align:Delta_k_initialize}) and (\ref{align:Delta_k}) satisfies (\ref{align:Delta_k_range})--(\ref{align:barDelta2}).
\end{lemma}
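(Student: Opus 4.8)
The plan is to use the monotonicity of the estimated bounds to show that the adaptive intervals $\Delta_k$ converge to a strictly positive limit already lying in the admissible region, and then to take $\bar\Delta$ just above this limit. First I would note that, by the construction (\ref{align:hatT_A(n)})--(\ref{align:hattau_A(n)}), both $\hat B_d(h_n+\tau_n)$ and $\hat B_f(h_n)$ are maxima over the growing index set $\{\ell,\dots,n\}$, hence non-decreasing in $n$, so it suffices to control their limits.

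The crucial step is to bound these two sequences away from their forbidden values. I would invoke \textit{Lemma \ref{lemma:sequence_limit}} together with \textit{Assumption \ref{assumption:not_edge_case}} to get $\limsup_n B_d(n)=\inf\mathcal D(\xi)<1$ and $\limsup_n B_f(n)=\inf\mathcal F(\xi)<+\infty$. Fixing $\eta$ with $\inf\mathcal D(\xi)<\eta<1$, only finitely many indices satisfy $B_d(i)\geq\eta$, and each such value is strictly below $1$ since the gaps $h_{i+1}-(h_i+\tau_i)>0$ force $B_d(i)<1$; therefore $\sup_{i\geq\ell}B_d(i)<1$, and as $\theta<1$ yields $\theta\sup_i B_d(i)+(1-\theta)<1$, the sequence $\hat B_d(h_n+\tau_n)$ increases to a limit $\hat B_d^\infty\in[0,1)$. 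The same finitely-many-exceptions argument applied to $B_f(i)=i/h_i$ gives $\sup_{i\geq\ell}B_f(i)<+\infty$, so $\hat B_f(h_n)\nearrow\hat B_f^\infty$ with $0<\epsilon_0\leq\hat B_f^\infty<+\infty$. Because any constant dominating a bound is again a bound, and $\hat B_d^\infty\geq\hat B_d(h_n+\tau_n)$, $\hat B_f^\infty\geq\hat B_f(h_n)$ are eventually genuine bounds by \textit{Theorem \ref{theorem:bound_reconstruction}}, I may set $B_d:=\hat B_d^\infty$ and $B_f:=\hat B_f^\infty$ for use in (\ref{align:barDelta2}).

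I would then combine these facts: $\Delta_k\to\Delta^\infty:=\min\{\Delta_0,\,(1-\hat B_d^\infty)/(\gamma_1\hat B_f^\infty)\}$, while monotonicity gives $\Delta^\infty\leq\Delta_k\leq\Delta_0$ for every $k$, so the first line of (\ref{align:Delta_k_range}) holds with $\underline\Delta:=\Delta^\infty>0$. Next I would check that $\Delta^\infty$ satisfies (\ref{align:barDelta1}) and (\ref{align:barDelta2}) strictly: $\Delta^\infty\leq\Delta_0<2/\lambda_N(\mathscr L)$ gives $|1-\Delta^\infty\lambda_N(\mathscr L)|<1$, and $\gamma_1>1$ gives $\Delta^\infty\leq(1-\hat B_d^\infty)/(\gamma_1\hat B_f^\infty)<(1-\hat B_d^\infty)/\hat B_f^\infty$, i.e. $B_d+B_f\Delta^\infty<1$. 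Both inequalities being strict, I can choose $\bar\Delta$ slightly larger than $\Delta^\infty$ that still obeys (\ref{align:barDelta1})--(\ref{align:barDelta2}); then $\Delta_k\to\Delta^\infty<\bar\Delta$ supplies a finite $\check t$ with $\Delta_k\leq\bar\Delta$ for all $t_k\geq\check t$, giving the second line of (\ref{align:Delta_k_range}).

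I expect the main obstacle to be the strict bound $\hat B_d^\infty<1$ in the second step: the estimator only guarantees $\hat B_d(h_n+\tau_n)<1$ at each finite $n$, which by itself does not rule out $\hat B_d^\infty=1$ and a degenerate $\underline\Delta=0$. Excluding this requires the finitely-many-exceptions argument resting on $\inf\mathcal D(\xi)<1$, which is exactly the content of \textit{Assumption \ref{assumption:not_edge_case}} excluding \textit{edge case} i); by contrast the role of $\gamma_1>1$ is only to open the strict gap needed in (\ref{align:barDelta2}), which is routine.
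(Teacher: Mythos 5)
Your proposal is correct and follows essentially the same route as the paper's proof: monotonicity of the estimator, \textit{Lemma \ref{lemma:sequence_limit}} to force $\lim_n \hat{B}_d(h_n+\tau_n)<1$ and $\lim_n \hat{B}_f(h_n)<+\infty$ (hence $\underline{\Delta}>0$), \textit{Theorem \ref{theorem:bound_reconstruction}} to certify the estimated values as genuine \textit{duration} \& \textit{frequency bounds}, and the $\gamma_1>1$ slack to verify (\ref{align:barDelta2}); your use of the limit values $\hat{B}_d^\infty,\hat{B}_f^\infty$ with upward-closedness of $\mathcal{D}(\xi)$, $\mathcal{F}(\xi)$ in place of the paper's evaluation at $T$ (with $\bar{\Delta}=\min\{\Delta_0,\Delta_{k(\check{t})}\}$) is a cosmetic variant. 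The only small caveat is that your appeal to \textit{Lemma \ref{lemma:sequence_limit}} tacitly assumes $\xi$ is infinite, whereas the paper treats the finite-$\xi$ case separately; your finitely-many-exceptions argument covers that degenerate case anyway, so nothing breaks.
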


\begin{proof}
    First, it is clear that $\Delta_k \leqslant \Delta_0 < +\infty$, $\forall \, t_k \geqslant 0$. Recalling the definition of $\Delta_0$ below (\ref{align:Delta_k_initialize}), we have (\ref{align:barDelta1}) holds with $\bar{\Delta} = \Delta_0$.
    
    When $\xi$ is finite, $\inf \mathcal{D} (\xi) = \inf \mathcal{F} (\xi) = 0$, and hence any $B_d \in (0,1)$ \& $B_f > 0$ must be the \textit{duration} \& \textit{frequency bounds} of $\xi$. By taking them to be sufficiently small, (\ref{align:barDelta2}) can be satisfied with $\bar{\Delta} = \Delta_0$. On the other hand, because $\xi$ is finite, the set $\{\Delta_k\}_{k \in \mathbb{N}_+}$ only contains finite positive elements and thus has a positive lower-bound $\underline{\Delta}$. Then by taking $\check{t} = 0$ and $\bar{\Delta} = \Delta_0$, (\ref{align:Delta_k_range})--(\ref{align:barDelta2}) are all satisfied.

    When $\xi$ is infinite, it is easy to derive from \textit{Lemma \ref{lemma:sequence_limit}} that
    \begin{align*}
        \lim_{n \rightarrow \infty}{\hat{B}_{d} (h_n + \tau_n)} 
        < 1
    \end{align*}
    and
    \begin{align*}
        \lim_{n \rightarrow \infty}{\hat{B}_{f} (h_n)} 
        < +\infty.
    \end{align*}
    Here, the inequalities are true because $\ell \geqslant 2$ and \textit{Assumption \ref{assumption:not_edge_case}} holds. Since $\hat{B}_{d} (h_n + \tau_n)$ and $\hat{B}_{f} (h_n)$ are non-decreasing in $n$, $\Delta_k$ generated by (\ref{align:Delta_k}) must be non-increasing and lower-bounded by
    \begin{align*}
        \underline{\Delta} &= \min \left\{ {\Delta_0, \lim_{n \rightarrow \infty} \Bigg[\frac{1}{\gamma_1\hat{B}_{f} (h_n)}\left(1- \hat{B}_{d} (h_n + \tau_n)\right)}\Bigg] \right\} \\
        &> 0.
    \end{align*}

    Now it remains to find a $\bar{\Delta}$ satisfying (\ref{align:barDelta1})--(\ref{align:barDelta2}) and a finite $\check{t}$, such that $\Delta_k \leqslant \bar{\Delta}$ holds for any $t_k \geqslant \check{t}$. According to the proof of \textit{Theorem \ref{theorem:bound_reconstruction}}, there exists a finite $T = h_{N_1} + \tau_{N_1}$ satisfying $\hat{B}_d (T) \in \mathcal{D} (\xi)$ and $\hat{B}_f (T) \in \mathcal{F} (\xi)$.
    Then, (\ref{align:Delta_k}) yields for any $t_k \geqslant \check{t} \coloneqq T$ that $$\Delta_k \leqslant \Delta_{k(\check{t})}$$ and
    \begin{align*}
        \hat{B}_d (T) + \Delta_{k(\check{t})} \hat{B}_f (T) &\leqslant \hat{B}_d (T) + \frac{1}{\gamma_1} \left(1- \hat{B}_d (T)\right) \\
        &< 1.
    \end{align*}
    By letting $\bar{\Delta} = \min \{\Delta_0, \Delta_{k(\check{t})}\}$, we have (\ref{align:Delta_k_range})--(\ref{align:barDelta2}) be satisfied with $\check{t} = T$, $B_d = \hat{B}_d (T)$ and $B_f = \hat{B}_f (T)$. The proof is completed. \hfill $\square$
\end{proof}

With \textit{Lemmas \ref{lemma:Delta_k_condition}} and \textit{\ref{lemma:Delta_k_satisfactory}}, the conclusion of \textit{Theorem \ref{theorem:consensus_MAS_adaptive_sampling}} directly follows.

\subsection{Proof of Theorem \ref{theorem:impulsive_stabilization_adaptive_interval}} \label{appD}

According to \textit{Corollary 3.1.1} in \cite{book_impulsive_differential_equation}, if \textit{Assumption \ref{assumption:impulsive_system}} is satisfied, then we have
\begin{align} \label{align:Vt_bound1}
    V(t,X(t)) \leqslant V(0,X_0) \mu^{\alpha (0, t)}  \exp(\beta t).
\end{align}
Here, $\alpha (\tau,s)$ denotes the number of successful control instants on $[\tau,s]$. For further analysis, it is necessary to estimate the value of $\alpha (0, t)$, which is addressed by the following lemma.
\begin{lemma} \label{lemma:number_of_impulses}
    Consider a DoS sequence satisfying \textit{Assumption \ref{assumption:not_edge_case}} and a sequence of \textit{control instants} $\{t_k\}_{k\in \mathbb{N_+}}$ satisfying $\Delta_k \coloneqq t_{k+1} - t_k < +\infty$. If for some finite instant $\check{t} \geqslant 0$, there exists a $\bar{\Delta} > 0$ such that $\Delta_k \leqslant \bar{\Delta},\,\forall \, t_k \geqslant \check{t}$, then, it holds that
    \begin{align} \label{align:number_of_impulses}
        \alpha (0, t) \geqslant \frac{t}{\bar{\Delta}} \left( 1- B_d - \bar{\Delta} B_f \right) - C_1,\,\forall \, t \geqslant t_{k(\check{t})}.
    \end{align}
    Here, $t_{k(\check{t})} < +\infty$ represents the first control instant no earlier than $\check{t}$, $B_d$ and $B_f$ are the \textit{duration} \& \textit{frequency bounds} of $\xi$, $C_1 \coloneqq \frac{t_{k(\check{t})} + \kappa}{\bar{\Delta}} + 1 + \Lambda$, and $\kappa$ and $\Lambda$ are given in \textit{Definitions \ref{def:duration} and \ref{def:frequency}}.
\end{lemma}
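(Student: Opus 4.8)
The plan is to bound $\alpha(0,t)$ from below by counting only those successful control instants that lie in the off-set $\Theta(0,t)$ \emph{and} occur no earlier than $t_{k(\check{t})}$, since it is precisely for such instants that the guaranteed spacing $\Delta_k\leqslant\bar{\Delta}$ is available. First I would decompose $\Theta(0,t)\cap[t_{k(\check{t})},t]$ as a finite disjoint union of maximal ``off'' intervals $I_1,\dots,I_p$, each sandwiched between two consecutive DoS intervals. Since every two adjacent off-intervals are separated by one off-to-on transition, their number obeys $p\leqslant n_{\xi}(0,t)+1$, and \textit{Definition \ref{def:frequency}} then gives $p\leqslant \Lambda+B_f t+1$.

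Next, on each off-interval $I_j$ of length $\ell_j$ all control instants exceed $\check{t}$, so consecutive ones are at most $\bar{\Delta}$ apart; a standard spacing count shows $I_j$ contains at least $\ell_j/\bar{\Delta}-1$ instants, every one of which is a successful transmission. Summing over $j$ yields
\[
\alpha(0,t)\;\geqslant\;\frac{1}{\bar{\Delta}}\sum_{j=1}^{p}\ell_j-p\;=\;\frac{\bigl|\Theta(0,t)\cap[t_{k(\check{t})},t]\bigr|}{\bar{\Delta}}-p .
\]
The total off-length is controlled from below through \textit{Definition \ref{def:duration}}: $|\Theta(0,t)|=t-|\Xi(0,t)|\geqslant(1-B_d)t-\kappa$, and discarding the head $[0,t_{k(\check{t})})$ costs at most $t_{k(\check{t})}$, so $\bigl|\Theta(0,t)\cap[t_{k(\check{t})},t]\bigr|\geqslant(1-B_d)t-\kappa-t_{k(\check{t})}$.

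Substituting the two estimates and regrouping the terms linear in $t$, I would arrive at
\[
\alpha(0,t)\;\geqslant\;\frac{t}{\bar{\Delta}}\bigl(1-B_d-\bar{\Delta}B_f\bigr)-\Bigl(\tfrac{t_{k(\check{t})}+\kappa}{\bar{\Delta}}+1+\Lambda\Bigr),
\]
which is exactly (\ref{align:number_of_impulses}) with the stated $C_1$; the hypothesis $t\geqslant t_{k(\check{t})}$ merely guarantees the decomposition is nonempty.

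The main obstacle I anticipate is the bookkeeping that makes the several loss terms assemble precisely into $C_1$: the per-interval rounding loss (one instant per off-interval, totalling $p$ and absorbing both the constant $1+\Lambda$ and the term $B_f t$ via $p\leqslant\Lambda+B_f t+1$), the additive offset $\kappa$ from the duration bound, and the discarded segment $[0,t_{k(\check{t})})$ contributing $t_{k(\check{t})}/\bar{\Delta}$. Restricting attention to instants after $t_{k(\check{t})}$ is essential rather than cosmetic, because the hypothesis supplies the spacing bound $\Delta_k\leqslant\bar{\Delta}$ only there; any attempt to count instants on $[0,t_{k(\check{t})})$ would invoke a lower bound on the $\Delta_k$ that is not assumed.
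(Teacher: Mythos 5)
Your proposal is correct and follows essentially the same route as the paper: decompose the off-time after $t_{k(\check{t})}$ into maximal components, count at least $\ell_j/\bar{\Delta}-1$ successful instants per component using the spacing bound (valid there since all such instants are $\geqslant\check{t}$), bound the number of components by $n_{\xi}(0,t)+1\leqslant\Lambda+B_f t+1$, and bound the total off-length below by $(1-B_d)t-\kappa-t_{k(\check{t})}$ via the duration bound, assembling exactly the stated $C_1$. The paper phrases this via connected components of $\Theta(t_{k(\check{t})},t)$ and the monotonicity $\alpha(0,t)\geqslant\alpha(t_{k(\check{t})},t)$, but the decomposition, the counting, and the bookkeeping are identical to yours.
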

\begin{proof}
    Consider any connected component of $\Theta(t_{k(\check{t})},$ $+\infty)$, which must be in the form of $[s_a,s_b)$, $s_b \geqslant s_a$. Here, $\Theta(\tau,s)$ is defined in (\ref{align:Theta}). Since $[s_a,s_b) \subseteq [t_{k(\check{t})},+\infty)$, it contains no less than $\left(({s_b - s_a})/{\bar{\Delta}}\right) - 1$ successful control instants. Therefore, $\alpha (t_{k(\check{t})}, t)$ must be no less than $\lvert \Theta(t_{k(\check{t})},t) \rvert/\bar{\Delta}$ minus the number of connected components of $\Theta(t_{k(\check{t})},t)$. Combining this with $B_d \in \mathcal{D} (\xi)$ and $B_f \in \mathcal{F} (\xi)$, we obtain that 
    \begin{align*}
        &\quad \alpha (t_{k(\check{t})}, t) \\
        &\geqslant \frac{\lvert \Theta(t_{k(\check{t})},t) \rvert}{\bar{\Delta}} - \left(n_\xi (t_{k(\check{t})},t) + 1 \right) \\
        &\geqslant \frac{\lvert \Theta(0,t) \rvert - t_{k(\check{t})}}{\bar{\Delta}} - \left(n_\xi (0,t) + 1 \right) \\
        &\geqslant \frac{1}{\bar{\Delta}} \left( \bigg(1-B_d\bigg)t - t_{k(\check{t})} - \kappa \right) - \left(n_\xi (0,t) + 1 \right) \\
        &\geqslant \frac{t}{\bar{\Delta}}\left(1-B_d-\bar{\Delta}B_f\right) - C_1.
    \end{align*}
    Since $\alpha (0, t) \geqslant \alpha (t_{k(\check{t})}, t)$, (\ref{align:number_of_impulses}) directly follows. \hfill $\square$
\end{proof}

In order to use (\ref{align:number_of_impulses}), we need to show that $\Delta_k$ generated by (\ref{align:impulsive_Delta_k_initialize}) and (\ref{align:impulsive_Delta_k}) satisfy the conditions in \textit{Lemma \ref{lemma:number_of_impulses}}. In addition, the upper-bound $\bar{\Delta}$ in (\ref{align:number_of_impulses}) should satisfy
\begin{align} \label{align:barDelta_impulsive}
    \bar{\Delta}\left( \beta - B_f {\chi} \right) < -\chi \left(1 - B_d\right)
\end{align}
so that $\mu^{\alpha(0,t)} \exp(\beta t) \rightarrow 0$ as $t \rightarrow +\infty$.
To this end, we introduce the following lemma.
\begin{lemma} \label{lemma:Delta_k_satisfactory_impulsive}
    Under \textit{Assumption \ref{assumption:not_edge_case}}, if $1 > B_d \in \mathcal{D}(\xi)$ and $+\infty > B_f \in \mathcal{F}(\xi)$, then $\Delta_k$ generated by (\ref{align:impulsive_Delta_k_initialize}) and (\ref{align:impulsive_Delta_k}) satisfy the following items.
    \begin{enumerate}
        \item $\Delta_k < +\infty$.
        \item There exists a $\underline{\Delta} > 0$, such that $\Delta_k \geqslant \underline{\Delta}$, $\forall \, k \in \mathbb{N}_+$.
        \item There exist a constant $\bar{\Delta} > 0$ satisfying (\ref{align:barDelta_impulsive}) and a finite $\check{t} \geqslant 0$, such that $\Delta_k \leqslant \bar{\Delta},\,\forall \, t_k \geqslant \check{t}$.
    \end{enumerate}
\end{lemma}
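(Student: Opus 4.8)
The plan is to mirror the three-part structure of \textit{Lemma \ref{lemma:Delta_k_satisfactory}} (the consensus case), adapting it to the impulsive update rule (\ref{align:impulsive_Delta_k_initialize})--(\ref{align:impulsive_Delta_k}). I would first pin down the sign structure of the formula: since $\chi = \ln\mu < 0$ and, for $n \geqslant \ell$, the estimator returns $0 < \hat{B}_d(h_n+\tau_n) < 1$ together with $\hat{B}_f(h_n) > 0$, both the numerator $\chi(1-\hat{B}_d(h_n+\tau_n))$ and the denominator $\gamma_3(\hat{B}_f(h_n)\chi-\beta)$ of (\ref{align:impulsive_Delta_k}) are negative, so $\Delta_k > 0$; moreover the denominator is bounded above by $-\gamma_3\beta < 0$, hence bounded away from zero, which immediately gives $\Delta_k < +\infty$ (item 1). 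Rewriting (\ref{align:impulsive_Delta_k}) with positive quantities as $\Delta_k = \frac{|\chi|(1-\hat{B}_d(h_n+\tau_n))}{\gamma_3(\beta + \hat{B}_f(h_n)|\chi|)}$ also shows $\Delta_k < |\chi|/(\gamma_3\beta) = \Delta_0$, so the whole sequence is dominated by its initial value. The strictness $\hat{B}_d < 1$ is the first point needing care, and as in the consensus proof it is precisely where $\ell \geqslant 2$ is used: because $h_{n+1} > h_n + \tau_n$, the attack-free gap preceding the $i$-th interval has positive measure for every $i \geqslant 2$, so $B_d(i) < 1$ and hence $\theta B_d(i) + (1-\theta) < 1$ for all $i \geqslant \ell$.

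For item 2 I would exploit monotonicity. Because $\hat{B}_d(h_n+\tau_n)$ and $\hat{B}_f(h_n)$ are non-decreasing in $n$, increasing $n$ shrinks the magnitude of the (negative) numerator and enlarges that of the (negative) denominator, so $\Delta_k$ is non-increasing in $n$. Invoking \textit{Lemma \ref{lemma:sequence_limit}} under \textit{Assumption \ref{assumption:not_edge_case}} (which excludes edge cases i) and ii)), these two outputs converge to finite limits that stay strictly below $1$ and $+\infty$ respectively; substituting the limits yields a strictly positive limiting value for $\Delta_k$, which is its infimum. Taking $\underline{\Delta}$ to be this positive infimum settles item 2 for infinite $\xi$; for finite $\xi$ only finitely many distinct positive values occur, so their minimum serves.

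The crux is item 3, where the main obstacle is producing a single $\bar{\Delta}$ that both caps the tail of $\{\Delta_k\}$ and meets the strict margin (\ref{align:barDelta_impulsive}). My plan is to call on \textit{Theorem \ref{theorem:bound_reconstruction}} for a finite $T = h_{N_1}+\tau_{N_1}$ at which $\hat{B}_d(T)$ and $\hat{B}_f(T)$ are already valid duration/frequency bounds, then set $\check{t} = T$, $B_d = \hat{B}_d(T)$, $B_f = \hat{B}_f(T)$, and $\bar{\Delta} = \min\{\Delta_0, \Delta_{k(\check{t})}\}$. By the monotonicity of item 2 one has $\Delta_k \leqslant \Delta_{k(\check{t})}$ for all $t_k \geqslant \check{t}$, while $\Delta_{k(\check{t})}$ equals $1/\gamma_3$ times the exact threshold $\frac{|\chi|(1-B_d)}{\beta+B_f|\chi|}$ appearing on the right of (\ref{align:barDelta_impulsive}). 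The factor $\gamma_3 > 1$ is therefore decisive: it is what makes the cap strictly smaller than the threshold, so (\ref{align:barDelta_impulsive}) holds with this $\bar{\Delta}$ --- exactly the role $\gamma_1 > 1$ played in \textit{Lemma \ref{lemma:Delta_k_satisfactory}}. For finite $\xi$, where $\inf\mathcal{D}(\xi)=\inf\mathcal{F}(\xi)=0$, I would instead take $B_d, B_f$ arbitrarily small (admissible by \textit{Proposition \ref{prop:sharp_bounds}}) and $\bar{\Delta}=\Delta_0$, $\check{t}=0$, since the threshold then approaches $|\chi|/\beta = \gamma_3\Delta_0 > \Delta_0$ and the inequality holds with room to spare.
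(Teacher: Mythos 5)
Your proof is correct and takes essentially the same route as the paper's: item 1 from $\Delta_k \leqslant \Delta_0$, item 2 from the finite/infinite case split combined with the monotonicity of $\hat{B}_d$, $\hat{B}_f$ and the limits from \textit{Lemma \ref{lemma:sequence_limit}}, and item 3 by invoking \textit{Theorem \ref{theorem:bound_reconstruction}} to set $\check{t}=T$, $\bar{\Delta}=\Delta_{k(\check{t})}$ (with $B_d=\hat{B}_d(T)$, $B_f=\hat{B}_f(T)$) and using $\gamma_3>1$ for the strict margin in (\ref{align:barDelta_impulsive}). Your additional details---the sign analysis of (\ref{align:impulsive_Delta_k}) and the gap argument showing $B_d(i)<1$ for $i\geqslant 2$ (hence the role of $\ell \geqslant 2$)---merely make explicit what the paper leaves implicit, and the slight imprecision that $\Delta_{k(\check{t})}$ ``equals'' $1/\gamma_3$ times the threshold at $T$ (it may be strictly smaller if $t_{k(\check{t})}$ falls in a later DoS window) is harmless since the monotonicity you invoke gives the needed inequality direction.
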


\begin{proof}
    The item 1) is trivial. In fact, we have $\Delta_k \leqslant \Delta_0$ according to (\ref{align:impulsive_Delta_k_initialize}) and (\ref{align:impulsive_Delta_k}).

    When $\xi$ is finite, the set $\{\Delta_k\}_{k \in \mathbb{N}_+}$ contains finite positive elements, so $\Delta_k$ must have a positive lower-bound. Hence, the item 2) is satisfied. In addition, we have $\inf \mathcal{D} (\xi) = \inf \mathcal{F} (\xi) = 0$ due to \textit{Assumption \ref{assumption:not_edge_case}}. Then, any finite $B_d \in (0,1)$ and $B_f > 0$ are the \textit{duration} \& \textit{frequency bounds} of $\xi$, respectively. By choosing them to be sufficiently small, (\ref{align:barDelta_impulsive}) is satisfied with $\bar{\Delta} = \Delta_0$. Thus, the item 3) is satisfied with $\bar{\Delta} = \Delta_0$ and $\check{t} = 0$.

    When $\xi$ is infinite,  it is easy to derive from \textit{Lemma \ref{lemma:sequence_limit}} that
    \begin{align*}
        {\lim_{n \rightarrow \infty}{\hat{B}_{d} (h_n + \tau_n)} 
        < 1}
    \end{align*}
    and
    \begin{align*}
        {\lim_{n \rightarrow \infty}{\hat{B}_{f} (h_n)} 
        < +\infty.}
    \end{align*}
    Recalling the monotonicity of $\hat{B}_{d} (h_n + \tau_n)$ and $\hat{B}_{f} (h_n)$, we have $\Delta_k$ generated by (\ref{align:impulsive_Delta_k}) non-increasing. Consequently, $\Delta_k$ is lower-bounded by
    \begin{align*}
        \underline{\Delta} = \lim_{n \rightarrow \infty} \frac{\chi \left(1-\hat{B}_{d} (h_n + \tau_n)\right)}{\gamma_3 \left(\hat{B}_{f} (h_n){\chi} - \beta \right)} > 0.
    \end{align*}
    Hence, the item 2) is satisfied. As for 3), it is also satisfied if we take $\check{t} \coloneqq T$ and $\bar{\Delta} = \Delta_{k(\check{t})}$ where the finite $T$ is determined by \textit{Theorem \ref{theorem:bound_reconstruction}}. In fact, it follows from (\ref{align:impulsive_Delta_k}) that
    \begin{align*}
        &\qquad \bar{\Delta} \left( \beta - \hat{B}_f (T) {\chi} \right) \\
        &\leqslant \frac{\chi \left(1-\hat{B}_d (T)\right)}{\gamma_3 \left(\hat{B}_f (T) {\chi} - \beta \right)} \left( \beta - \hat{B}_f (T) {\chi} \right) \\
        &< - \chi \left(1-\hat{B}_d (T)\right),
    \end{align*}
    where the second inequality is because $\gamma_3 > 1$, $\beta > 0$ and $\chi = \ln \mu < 0$. Hence, $\bar{\Delta}$ satisfies (\ref{align:barDelta_impulsive}) as $\hat{B}_d (T)$ \& $\hat{B}_f (T)$ are the \textit{duration} \& \textit{frequency bounds} of $\xi$. Additionally, the monotonicity of $\Delta_k$ indicates that $\Delta_k \leqslant \bar{\Delta}$ for any $t_k \geqslant \check{t}$. The proof is completed. \hfill $\square$

    With \textit{Lemmas \ref{lemma:number_of_impulses}} and \textit{\ref{lemma:Delta_k_satisfactory_impulsive}}, we can substitute (\ref{align:number_of_impulses}) into (\ref{align:Vt_bound1}) and obtain the following inequality
    \begin{align*}
        V(t,X(t)) &\leqslant V(0,X_0)  e^{\beta t} \mu^{\frac{t}{\bar{\Delta}} \left( 1- B_d - \bar{\Delta} B_f\right) - C_1} \\
        &\leqslant V(0,X_0) \mu^{-C_1} e^{\beta t  + \chi \left( \frac{t}{\bar{\Delta}} \left( 1- B_d - {\bar{\Delta}}B_f\right) \right)} \\
        &= V(0,X_0) \mu^{-C_1} e^{\left(\bar{\Delta} \left(\beta - B_f{\chi}\right) + \chi \left( 1- B_d\right) \right) \frac{t}{\bar{\Delta}}}, \\
        & \qquad \qquad \qquad \qquad \qquad \forall \, t \geqslant t_{k(\check{t})}.
    \end{align*}
    Since $\Delta_k < +\infty$, $t_{k(\check{t})}$ must be finite. And due to 3) in \textit{Assumption \ref{assumption:impulsive_system}}, $V(t,X(t))$ cannot escape on $[0,t_{k(\check{t})}]$. In view of (\ref{align:barDelta_impulsive}), $V(t,X(t))$ exponentially converges to $0$, and thus the system (\ref{align:nonlinear_system}) is exponentially stabilized by the impulsive controller (\ref{align:impulsive_control})--(\ref{align:impulsive_invalid_control}) (see Theorem 3.2.1 in \cite{book_impulsive_differential_equation}). In addition, $\Delta_k$ is lower-bounded by $\underline{\Delta} > 0$ according to \textit{Lemma \ref{lemma:Delta_k_satisfactory_impulsive}}.
\end{proof}


\bibliographystyle{ieeetr}
\bibliography{References_of_DoS_Estimation}
\end{document}